\newcommand{\Nats}{\mathbb{N}}
\newcommand{\emptyseq}{\varepsilon}
\newcommand{\filter}[1]{\vert_{#1}}
\newcommand{\CTLS}{$\textnormal{CTL}^{*}$\xspace}
\newcommand{\LTL}{\textnormal{LTL}\xspace}
\newcommand{\tand}{\textnormal{and}}
\newcommand{\tiff}{\textnormal{iff}}
\newcommand{\qiff}{\quad\tiff\quad}
\newcommand{\hAss}{\overline{\mathcal{K}}}
\newcommand{\cmDel}[2]{\textit{del}_{#1}(#2)}
\newcommand{\cmU}{\textit{UA}}
\tikzset{ver1/.style={draw,fill=white, minimum size=12pt,inner sep=0cm}}
\tikzset{ver/.style={draw,fill=white, minimum size=17pt}}
\tikzset{mod/.style={fill=white}}
\tikzset{inp/.style={circle, fill=black, text=white}}
\tikzset{col/.style={fill=#1!40}}%preaction={fill, #1},pattern={Lines[angle=-45,distance={3pt/sqrt(2)}]},pattern color=#1}}
\tikzset{col/.default=white}
\NewDocumentCommand{\overcolor}{mm}{\mathpalette\overundercolor@{\overline{#1}{#2}}}
\NewDocumentCommand{\undercolor}{mm}{\mathpalette\overundercolor@{\underline{#1}{#2}}}
\newcommand{\overundercolor@}[2]{\overundercolor@@#1#2}
\newcommand{\overundercolor@@}[4]{%
  \sbox\z@{$\m@th#1#3$}%
  \mathcolor{#4}{#2{\usebox\z@}}%
}
\newcommand{\textttn}[1]{\textup{\texttt{#1}}}
\newcommand{\hH}{\mathcal H}
\newcommand{\hJ}{\mathcal J}
\newcommand{\hK}{\mathcal K}
\newcommand{\hUU}{\mathcal U}
\newcommand{\hVV}{\mathcal V}
\newcommand{\hV}{V}
\newcommand{\hu}{u}
\newcommand{\hv}{v}
\newcommand{\hI}{I}
\newcommand{\hi}{i}
\newcommand{\hn}{n}
\newcommand{\hA}{{\rightarrow}}
\newcommand{\hE}{E}
\newcommand{\he}{e}
\newcommand{\hd}{\alpha}
\newcommand{\hC}{\gamma}
\newcommand{\hc}{c}
\newcommand{\hL}{\ell}
\newcommand{\hW}{W}
\newcommand{\hw}{w}
\newcommand{\hpath}{p}
\newcommand{\hHH}{\textttn{HG}}
\newcommand{\hHHV}{\textttn{HGV}}
\newcommand{\hCC}{\textttn{COL}}
\newcommand{\hLL}{\textttn{LAB}}
\newcommand{\hEE}{\textttn{ACT}}
\newcommand{\gA}{A}
\newcommand{\gB}{B}
\newcommand{\gC}{C}
\newcommand{\gN}{N}
\newcommand{\gZ}{Z}
\newcommand{\gS}{S}
\newcommand{\gP}{P}
\newcommand{\gR}{\mathcal R}
\newcommand{\gs}{\mathfrak S}
\newcommand{\gr}{\mathfrak T}
\newcommand{\gx}{\mathfrak C}
\newcommand{\gy}{\mathfrak D}
\newcommand{\gGG}{\mathcal G}
\newcommand{\gCC}{\mathcal C}
\newcommand{\grule}[3]{#3\Rightarrow#1(#2)} %{#1(#2){:}#3}
\newcommand{\lLL}{\mathcal L}
\newcommand{\tT}{T}
\newcommand{\tAss}{\fml{\tT}}
\newcommand{\bQ}{Q}
\newcommand{\bp}{p}
\newcommand{\bq}{q}
\newcommand{\br}{r}
\newcommand{\bF}{F}
\newcommand{\bS}{\Sigma}
\newcommand{\bs}{\sigma}
\newcommand{\bD}{\Delta}
\newcommand{\bI}{\bQ_0}
\newcommand{\bi}{\bq_0}
\newcommand{\bMM}{\mathcal M}
\newcommand{\bBB}[1][]{\textttn{BUCHI}\ifthenelse{\isempty{#1}}{}{(#1)}}
\newcommand{\HG}{Hypergraph\xspace}
\newcommand{\HGs}{Hypergraphs\xspace}
\newcommand{\HRG}{HRG\xspace}
\newcommand{\HRGs}{HRGs\xspace}
\newcommand{\bigO}{\mathcal O}
\newcommand{\tTree}{\textttn{Tree}\xspace}
\newcommand{\arity}{\textttn{ar}\xspace}
\newcommand{\first}{\textit{first}\xspace}
\newcommand{\last}{\textit{last}\xspace}
\newcommand{\cred}{{\color{red}\textttn{red}}\xspace}
\newcommand{\cblue}{{\color{blue}\textttn{blue}}\xspace}
\newcommand{\cgreen}{{\color{ForestGreen}\textttn{green}}\xspace}
\newcommand{\creds}{{\color{red}\textttn{r}}\xspace}
\newcommand{\cblues}{{\color{blue}\textttn{b}}\xspace}
\newcommand{\cgreens}{{\color{ForestGreen}\textttn{g}}\xspace}
\newcommand{\ConnPaths}{\textttn{Paths}\xspace}
\newcommand{\Traces}{\textttn{Traces}\xspace}
\newcommand{\FTraces}{\textttn{Traces}^{*}\xspace}
\newcommand{\ITraces}{\textttn{Traces}^{\omega}\xspace}
\newcommand{\Paths}{\Traces}
\newcommand{\Conn}{\ConnPaths}
\newcommand{\Runs}{\textttn{Runs}\xspace}
\newcommand{\CPaths}{\textttn{CPaths}\xspace}
\newcommand{\Run}{\seq{r}}
\newcommand{\Bool}{\textttn{BOOL}\xspace}
\newcommand{\Inf}{\textit{inf}\xspace}
\newcommand{\zip}{\mathbin{\scalebox{0.6}[1]{$\times\!\!\times$}}}
\renewcommand{\zip}{\times}
\newcommand{\bij}{\mathbin{\xrightarrow{\raisebox{-0.5ex}[0ex][0ex]{\tiny\text{bij}}}}}
\newcommand{\handle}[2][]{{#2}^{\ifthenelse{\isempty{#1}}{\circ}{#1}}}
\newcommand{\repl}[2]{#1[#2]}
\newcommand{\seq}[1]{\overline{\bm{#1}}}
\newcommand{\fml}[1]{\overline{#1}}
\newcommand{\boolT}{\textttn{T}}
\newcommand{\boolF}{\textttn{F}}
\newcommand{\boolb}{b}
\newcommand{\eqclass}[1]{\llbracket#1\rrbracket}
\newcommand{\sem}[1]{\llparenthesis#1\rrparenthesis}
\newcommand{\st}{\mathbin{s.t.}}
\newcommand{\fcong}{\mathrel{\vcenter{\offinterlineskip
  \halign{\hfil##\hfil\cr
  {\tiny\text{full}}\cr
  \noalign{\kern1pt}
  $\cong$\cr}}}}
\newcommand{\new}[1]{#1}
\newcommand{\repeattheorem}[1]{%
  	\begingroup
  	\renewcommand{\thetheorem}{\ref{#1}}%
  	\expandafter\expandafter\expandafter\theorem%
  	\csname reptheorem@#1\endcsname%
  	\endtheorem%
  	\endgroup%
}
\xdef\csname reptheorem@#2\endcsname{%
    	\unexpanded\expandafter{\BODY}%
  	}
\newcommand{\repeatlemma}[1]{%
  	\begingroup%
  	\renewcommand{\thelemma}{\ref{#1}}%
  	\expandafter\expandafter\expandafter\lemma%
  	\csname replemma@#1\endcsname%
  	\endlemma%
  	\endgroup%
}
\xdef\csname replemma@#2\endcsname{%
  		\unexpanded\expandafter{\BODY}%
  	}%
\begin{document}

\title{CTL* Model Checking on Infinite Families of Finite-State Labeled Transition Systems (Technical Report)}
\titlerunning{CTL* on Infinite Families}
%\author{Anonymous authors}
\author{Roberto Pettinau\inst{1}\orcidID{0009-0003-1021-8708} \and Christoph Matheja\inst{1,2}\orcidID{000-0001-9151-0441}}
\authorrunning{R. Pettinau, C. Matheja}
\institute{Carl von Ossietzky University of Oldenburg, Oldenburg, Germany
\email{\{roberto.pettinau,christoph.matheja\}@uol.de}
\and 
DTU Compute, Kongens Lyngby, Denmark
}
\maketitle

\begin{abstract}
We study model checking algorithms for infinite families of finite-state labeled transition systems against temporal properties written in CTL*. Such families arise, for example, as models of highly configurable systems or software product lines.
 We model families using context-free graph grammars. We then develop a state labeling algorithm that works compositionally on the grammar's production rules with limited information about the context in which the rule is applied.
The result is a graph grammar modeling the same family but with extended labels.
We leverage this grammar to decide whether all, some, or (in)finitely many members of a family satisfy a given temporal property.
We have implemented our algorithms and present early experiments.
\end{abstract}

\section{Introduction}

Modern computer systems are increasingly built as highly-configurable product families that vary by features or performance parameters~(cf.~\cite{DBLP:journals/csur/ThumAKSS14,DBLP:books/daglib/0032924,DBLP:journals/tse/BeekLLV20}).
Such systems are inherently complex, because numerous feature combinations may result in product configurations that are deployed -- and potentially fail -- for the first time.
Hence, techniques to ensure the correctness of entire product families are crucial. 
One such technique is model checking, which has been successfully applied to certain classes of families, such as software product lines~\cite{DBLP:conf/icse/ClassenHSL11,DBLP:journals/tse/BeekLLV20,DBLP:journals/fac/ChrszonDKB18}.
Existing model-checking approaches often reason about families of structurally similar systems that are derived from a finite-state super-system, where all features are enabled~\cite{DBLP:journals/fac/ChrszonDKB18}.
However, complex product families may not admit such an all-in-one super-system.
For example, if the number of some components is configurable, one needs to consider infinite families of structurally evolving systems. %that are arbitrarily large but finite.
Similar families arise in distributed systems with configurable network topologies~\cite{DBLP:journals/sigact/FoersterS19}.

In this paper, we develop model checking algorithms for infinite families $\{T_i\}_{i \in \Nats}$ of labeled transition systems (LTSs) that can evolve in size and structure. While every LTS $T_i$ is finite-state, it can be arbitrarily large.
To represent such families, we use hyperedge replacement grammars (HRGs), a well-studied extension of context-free grammars to describe sets of graphs~\cite{DBLP:conf/gg/DrewesKH97,engelfriet1997context}.
\Cref{fig:grammar-dll} depicts an HRG modeling a family of colored doubly-linked lists (see \Cref{fig:intro-2}), which will serve us as a running toy example.
The boxes are hyperedges. They correspond to nonterminals that are replaced by graphs according to the grammar's rules, see \Cref{fig:hyperedge-replacement} for an illustration.
Node colors indicate atomic propositions.
Our algorithms are based on a solution to the \emph{recoloring problem}, which informally reads as follows:
Given a family $F$ modeled by an HRG $\gGG$ and an $\omega$-regular property $\varphi$, 
compute a \emph{recolored} HRG $\gGG_{\varphi}$ modeling the same family $F$ with one exception: For every $T_i \in F$ and every state $s$ of $T_i$ with $(T_i, s) \models \varphi$, the state $s$ is additionally colored with $\varphi$.
%That is, all traces of $T_i$ starting in $s$ satisfy $\varphi$.

Being able to compute recolored HRGs $\gGG_{\varphi}$ recursively yields decision procedures for various verification questions, including:
\begin{itemize}
	\item Does every (or some) LTS in the family $F$ satisfy the $\omega$-regular property $\varphi$?
	\item Are there finitely (or infinitely) many LTSs in $F$ that satisfy $\varphi$?
%	\item Does the family $F$ of systems of interest contain at least one model of $\varphi$?
	\item Given formula $\Psi$ in \CTLS~\cite{DBLP:journals/jacm/EmersonH86}, does every (or some) LTS in $F$ satisfy $\Psi$?
	\item Given a formula $\psi$ in the qualitative fragment of the probabilistic branching time logic PCTL~\cite{DBLP:journals/fac/HanssonJ94}\cite[p. 787]{baierKatoen2008}, does every \emph{Markov chain} in $F$ satisfy $\psi$?
\end{itemize}
To illustrate how we compute recolored HRGs, consider the LTS in \Cref{fig:intro-2}, which is a member of the family specified by our example grammar in \Cref{fig:grammar-dll}.
As indicated by the rectangles, the LTS can be decomposed into three parts: (i) an application of the production rule $\gR_2$ (green box), (ii) the graph $\hK$ plugged into the nonterminal of $\gR_2$, and (iii) the context $\hJ$ in which the rule has been applied.
To determine whether the node $v$ of $\gR_2$ should be colored with an $\omega$-regular property, say $\cred~U~\cblue$, we do not need full knowledge about $\hJ$ and $\hK$.
It suffices to know that (1) we can reach a \cblue node from $v$ by taking an edge from $\gR_2$ into $\hJ$ via $\hK$ and that (2) we can globally stay in \cred nodes by taking edges from $\gR_2$ into $\hJ$ or $\hK$.
Due to (2), $\cred~U~\cblue$ is not satisfied for all traces starting in $v$. Hence $v$ should \emph{not} be colored with $\cred~U~\cblue$.
More generally, we derive an equivalence relation over graphs from any $\omega$-regular property that groups the possible graphs $\hJ$ and $\hK$ composed with the rule $\gR_2$ into classes that represent observations like (1) and (2).
We then show that those equivalence relations form a congruence and that it suffices to consider only finitely many equivalence classes.
Hence, we can solve the recoloring problem through a compositional analysis of the grammar's production rules.
The same approach can be applied recursively to reason about \CTLS instead of $\omega$-regular properties.

\paragraph{Contributions.} Our main contributions can be summarized as follows:
\begin{itemize}
	\item We develop an algorithm for the novel \emph{recoloring problem} for infinite families of finite-state labeled transition systems modeled by graph grammars.
	\item We prove our algorithm for the recoloring problem sound and complete.
	\item We extend our solution to the recoloring problem to decide whether all, some, or (in)finitely many members of an LTS family satisfy a CTL* formula.
	\item We have implemented our algorithms. Experiments demonstrate that properties in CTL* and in the qualitative fragment of PCTL can be verified within seconds for simple families of finite-state systems.
\end{itemize}

\paragraph{Outline.}
\Cref{sec:preliminaries} introduces preliminaries and formally states the recoloring problem. 
%at the heart of our verification algorithms. 
 \Cref{sec:refinement} presents a general approach for compositional reasoning about HRGs that we apply in \Cref{sec:meq} to the recoloring problem.
 \Cref{sec:impl} develops the resulting algorithm for recoloring and extends it to \CTLS.
 \Cref{sec:eval} reports on a prototypical implementation and experiments.
 Finally, \Cref{sec:related} discusses related work and concludes.
 %\new{
 Proofs and further details about our experiments are available online in a technical report~\cite{techRep}.
 %}

\begin{figure}[t]
\centering
\begin{subfigure}[b]{0.26\linewidth}
\centering
\scalebox{0.65}{\begin{tikzpicture}[thick,shorten >=8pt,shorten <=8pt]
	\def\xa{0}\def\ya{0}
    \draw (\xa+0,\ya) -- (\xa+1,\ya) node[pos=0.5, above, yshift=-1.5pt] {$\scriptstyle \mathbf 1$};
    \draw (\xa+1,\ya) -- (\xa+2,\ya) node[pos=0.5, above, yshift=-1.5pt] {$\scriptstyle \mathbf 2$};
    \draw[->] (\xa+2,\ya) to[out=-60,in=-120,distance=15] (\xa+3,\ya);
    \draw[->] (\xa+3,\ya) to[out=+120,in=+60,distance=15] (\xa+2,\ya);
    \draw (\xa+0,\ya) node[ver,circle,col=red] {};
    \draw (\xa+1,\ya) node[ver,mod] {$\gA$};
    \draw (\xa+2,\ya) node[ver,circle,col=red] {};
    \draw (\xa+3,\ya) node[ver,circle,col=blue] {};
    \draw (\xa-1.5,\ya) node {$\mathbf S := $};
    \draw (\xa-.75,\ya) node {$\gR_3$:};
    \def\xb{0}\def\yb{-1}
    \draw[->] (\xb+0,\yb) to[out=-60,in=-120,distance=15] (\xb+1,\yb);
    \draw[->] (\xb+1,\yb) to[out=+120,in=+60,distance=15] (\xb+0,\yb);
    \draw     (\xb+1,\yb) -- (\xb+2,\yb) node[pos=0.5, above, yshift=-1.5pt] {$\scriptstyle \mathbf 1$};
    \draw     (\xb+2,\yb) -- (\xb+3,\yb) node[pos=0.5, above, yshift=-1.5pt] {$\scriptstyle \mathbf 2$};
    \draw (\xb+0,\yb) node[ver,inp] {$\mathbf{1}$};
    \draw (\xb+1,\yb) node[ver,circle,col=red] {};
    \draw (\xb+2,\yb) node[ver,mod] {$\gA$};
    \draw (\xb+3,\yb) node[ver,inp] {$\mathbf{2}$};
    \draw (\xb-1.5,\yb) node {$\mathbf A := $};
    \draw (\xb-.75,\yb) node {$\gR_2$:};
    \def\xc{0}\def\yc{-2}
    \draw[->] (\xc+0,\yc) to[out=-60,in=-120,distance=15] (\xc+1,\yc);
    \draw[->] (\xc+1,\yc) to[out=+120,in=+60,distance=15] (\xc+0,\yc);
    \draw (\xc+0,\yc) node[ver,inp] {$\mathbf{1}$};
    \draw (\xc+1,\yc) node[ver,inp] {$\mathbf{2}$};
    \draw (\xc-1.3,\yc+.5) -- (\xc-1.3,\yc-.5);
    \draw (\xc-.75,\yc) node {$\gR_1$:};
    \draw (4,0) node {};
\end{tikzpicture}}
\caption{An HRG.}
\label{fig:grammar-dll}
\end{subfigure}	
\hfill
\begin{subfigure}[b]{0.39\linewidth}
\centering
\scalebox{0.65}{\begin{tikzpicture}[thick,shorten >=8pt,baseline=(current bounding box.center)]
    \draw[dashed] (0,0) -- (0,-1);
    \draw[dashed] (3,0) -- (3,-1);

    \draw[->] (3,0) to[out=-60,in=-120,distance=15] (4,0);
    \draw[->] (4,0) to[out=+120,in=+60,distance=15] (3,0);
    \draw (0,0) -- (1.5,0) node[pos=0.5, above, yshift=-1.5pt] {$\scriptstyle \mathbf 1$};
    \draw (1.5,0) -- (3,0) node[pos=0.5, above, yshift=-1.5pt] {$\scriptstyle \mathbf 2$};
    \draw (0,0) node[ver,circle,col=red] {};
    \draw (1.5,0) node[ver,mod] {$A$};
    \draw (3,0) node[ver,circle,col=red] {};
    \draw (4,0) node[ver,circle,col=blue] {};
   	
    \draw[->] (0,-1) to[out=-60,in=-120,distance=15] (1,-1);
    \draw (1,-1) -- (2,-1) node[pos=0.5, above, yshift=-1.5pt] {$\scriptstyle \mathbf 1$};
    \draw (2,-1) -- (3,-1) node[pos=0.5, above, yshift=-1.5pt] {$\scriptstyle \mathbf 2$};
    \draw[->] (1,-1) to[out=+120,in=+60,distance=15] (0,-1);
    \draw (0,-1) node[ver,inp] {$\mathbf{1}$};
    \draw (1,-1) node[ver,circle,col=red] {};
    \draw (2,-1) node[ver,mod] {$A$};
    \draw (3,-1) node[ver,inp] {$\mathbf{2}$};
    
    \draw[->] (0,-2) to[out=-60,in=-120,distance=15] (1,-2);
    \draw (1,-2) -- (2,-2) node[pos=0.5, above, yshift=-1.5pt] {$\scriptstyle \mathbf 1$};
    \draw (2,-2) -- (3,-2) node[pos=0.5, above, yshift=-1.5pt] {$\scriptstyle \mathbf 2$};
    \draw[->] (1,-2) to[out=+120,in=+60,distance=15] (0,-2);
    \draw[->] (3,-2) to[out=-60,in=-120,distance=15] (4,-2);
    \draw[->] (4,-2) to[out=+120,in=+60,distance=15] (3,-2);
    \draw (0,-2) node[ver,circle,col=red] {};
    \draw (1,-2) node[ver,circle,col=red] {};
    \draw (2,-2) node[ver,mod] {$A$};
    \draw (3,-2) node[ver,circle,col=red] {};
    \draw (4,-2) node[ver,circle,col=blue] {};
    
    \draw (-.25,-0) node[left] {$\gR_3$};
    \draw (-.25,-1) node[left] {$\gR_2$};
    \draw (-.25,-2) node[left] {$\repl{\gR_3}{\gR_2}$};
\end{tikzpicture}}
\caption{Hyperedge replacement.}
\label{fig:hyperedge-replacement}
\end{subfigure}
\hfill
\begin{subfigure}[b]{0.32\linewidth}
\centering
\begingroup\def\sx{1}
\scalebox{0.65}{\begin{tikzpicture}[thick]
    \draw[thick, line width=1pt, fill=magenta!5] (-0.5*\sx,1.75) rectangle (5.5*\sx,-1.1);
    \draw[thick, line width=1pt, fill=ForestGreen!5] (0.9*\sx,1.5) rectangle (4.1*\sx,-0.9);
    \draw[thick, line width=1pt, fill=cyan!5] (2.1*\sx,1.25) rectangle (3.9*\sx,-0.7);
    
    \node (V1) at (0*\sx,0) [ver,circle,col=red] {};
    \node (V2) at (1*\sx,0) [ver,circle,col=red] {};
    \node (V3) at (2*\sx,0) [ver,circle,col=red] {$\pmb \hv$};
    \node (V4) at (3*\sx,0) [ver,circle,col=red] {};
    \node (V5) at (4*\sx,0) [ver,circle,col=red] {};
    \node (V6) at (5*\sx,0) [ver,circle,col=blue] {};
    \draw[->] (V1) to[out=-60,in=-120,distance=10] (V2);
    \draw[->] (V2) to[out=+120,in=+60,distance=10] (V1);
    \draw[->] (V2) to[out=-60,in=-120,distance=10] (V3);
    \draw[->] (V3) to[out=+120,in=+60,distance=10] (V2);
    \draw[->] (V3) to[out=-60,in=-120,distance=10] (V4);
    \draw[->] (V4) to[out=+120,in=+60,distance=10] (V3);
    \draw[->] (V4) to[out=-60,in=-120,distance=10] (V5);
    \draw[->] (V5) to[out=+120,in=+60,distance=10] (V4);
    \draw[->] (V5) to[out=-60,in=-120,distance=10] (V6);
    \draw[->] (V6) to[out=+120,in=+60,distance=10] (V5);

    \draw (-0.15*\sx,1.5) node {$\hJ$};
    \draw (1.25*\sx,1.25) node {$\gR_2$};
    \draw (2.45*\sx,1.0) node {$\hK$};
\end{tikzpicture}} 	
\caption{Decomposition of LTS.}
\label{fig:intro-2}
\endgroup	
\end{subfigure}
\caption{Toy example of a family of LTSs modeled by an HRG. Each LTS in the family is a doubly linked list where all nodes but the rightmost one are red.}
\label{fig:overview}
\end{figure}
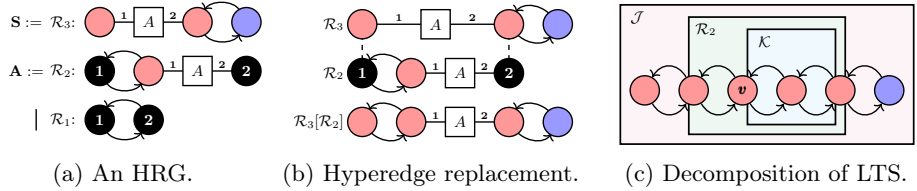

\section{Preliminaries and Formal Problem Statement}\label{sec:preliminaries}
Our model checking technique for families is based on three ingredients that we recap first:
B\"uchi automata, hypergraphs, and hyperedge replacement grammars.

\smallskip
\noindent
\emph{Basic notation.}
$X^{*}$ ($X^{\omega}$) is the set of all finite (infinite) sequences $\seq{x} = x_1 x_2 x_3 \ldots$ over the set $X$.
The (possibly infinite) length of $\seq{x}$ is $|\seq{x}|$. 
The empty sequence is $\emptyseq$.
We denote by $\seq{x}\filter{Y}$ the sequence obtained by keeping in $\seq{x}$ only elements in the set $Y$.
The concatenation of $\seq{x}$ and $\seq{y}$ is $\seq{x}\seq{y}$.
We denote by $X \uplus Y$ the union of the sets $X$ and $Y$ if $X \cap Y = \emptyset$; otherwise, $X \uplus Y$ is undefined.
For functions $f\colon X \to A$ and $g\colon Y \to B$, the function
$f \uplus g\colon (X \uplus Y) \to (A \cup B)$ maps every $x \in X$ to $f(x)$ and every $y \in Y$ to $g(y)$.
We lift functions to sequences and sets by pointwise application, e.g. $f(x_1 x_2 \ldots) = f(x_1) f(x_2) \ldots$ and $f(Y) = \{ f(y) ~|~ y \in Y \}$.

%%%%%%%%%%%%%%%%%%%%%%%%%%%
%%%%%%%%%%%%%%%%%%%%%%%%%%%

\subsection{B\"uchi Automata}\label{sec:props}
\CTLS model checking (cf.~\cite[p. 429]{baierKatoen2008}) based on recursively labeling states with satisfaction sets turns any \LTL model checking algorithm into a \CTLS model checker.
We thus first study model checking of infinite families for $\omega$-regular properties specified by B\"uchi automata, which subsumes \LTL (cf.~\cite{thomas1990automata,baierKatoen2008}).
\begin{definition} \label{def:buchi}
A \emph{B\"uchi automaton} is a tuple $\bMM = (\bQ, \bS, \bD, \bI, \bF)$, where $\bQ$ is a finite \emph{set of states}, $\bS$ is a finite \emph{alphabet}, $\bD \subseteq \bQ \times \bS \times \bQ$ is a \emph{transition relation}, $\bI \subseteq \bQ$ is a set of \emph{initial states}, and $\bF \subseteq \bQ$ is a set of \emph{final states}. 
We denote by $\bBB[\bS]$ the \emph{set of all B\"uchi automata} over the alphabet $\bS$.
\end{definition}
A B\"uchi automaton $\bMM = (\bQ, \bS, \bD, \bI, \bF)$ \emph{accepts} an infinite word $\seq\bs \in \bS^{\omega}$ if it admits a run that visits a final state infinitely often.
We define both finite and infinite runs as our model checking algorithm uses them for abstraction.
A \emph{finite} (resp. \emph{infinite}) \emph{run} of $\bMM$ is a sequence $\Run = \bq_0\sigma_0\cdots\sigma_{n-1}\bq_n$ (resp. $\Run =\bq_0\sigma_0\bq_1\sigma_1\cdots$) such that $(\bq_i,\sigma_i,\bq_{i+1}) \in \bD$ for all $0 \leq i < n$ (resp. $i \geq 0$).
We denote by $\Runs^*$ and $\Runs^\omega$ the \emph{sets of all finite and infinite runs} of $\bMM$, respectively.
$\Inf(\Run)$ denotes the set of states that appear \emph{infinitely often} in run $\Run$.
We extend the transition relation $\bD$ to take finite (infinite) words and indicate with a flag $\boolb \in \{\boolF,\boolT\}$ whether some final states have been visited (infinitely often): 
\begin{align*}
\bD^{*} ~=~ & 
\{ (\bq_0, \Run \vert_{\bS}, \bq_n, \boolb) \mid  \Run \in \Runs^{*}, \bq_0 = \first(\Run), \bq_n = \last(\Run), \boolb = (\Run \vert_{\bF} \ne \emptyseq)  \}
\\
\bD^{\omega} ~=~ &
\{(\bq_0, \Run \vert_{\bS}, \boolb) \mid \Run \in \Runs^{\omega}, \bq_0 = \first(\Run), b = (\Inf(\Run) \cap \bF \ne \emptyset)\}
\end{align*} 
The language of $\bMM$ is defined as
$
	\lLL(\bMM) = \{\seq\bs \in \bS^{\omega} \mid \exists \bi \in \bI.\; (\bi,\seq\bs,\boolT) \in \bD^\omega\}
$. 

%%%%%%%%%%%%%%%%%%%%%%%%%%%
%%%%%%%%%%%%%%%%%%%%%%%%%%%

\subsection{Hypergraphs} \label{sec:hypergraphs}

Hypergraphs (HGs) generalize ordinary graphs by allowing \emph{hyperedges} to connect an arbitrary number of nodes.  
To simplify constructions, we distinguish between ordinary edges and hyperedges.
From now on, we fix a set $\hCC$ of node \emph{colors} (think: atomic propositions), a set $\hEE$ of \emph{actions} (think: ordinary edge labels), and a set $\hLL$ of \emph{hyperedge labels}. 
Moreover, we equip HGs with a fixed number of uncolored abstract nodes that will be used for gluing HGs together. 

\begin{definition} \label{def:hypergraph}
A \emph{hypergraph} is a tuple $\hH = (\hV,\hn,\hA,\hE,\hd,\hC,\hL)$ whose set of \emph{nodes} $\hW = \hV \uplus \hI$ is partitioned into a finite set $\hV$ of \emph{concrete nodes} and a set $I = \{1,2,\ldots,\hn\}$ of \emph{abstract nodes} given by $\hn \in \Nats$.
Moreover, $\hA \subseteq \hW \times \hEE \times \hW$ is the labeled \emph{edge relation}, $\hE$ is a finite set of \emph{hyperedges}, and $\hd\colon \hE \to \hW^{*}$ assigns a sequence of \emph{attached nodes} to every hyperedge.
Finally, $\hC\colon \hV \rightarrow 2^{\hCC}$ \emph{assigns a set of colors to every concrete node} and $\hL\colon \hE \to \hLL$ \emph{assigns labels to hyperedges}.
We denote by $\hHH$ the set of all hypergraphs.
\end{definition}

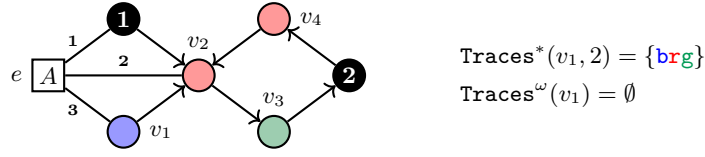
\begin{figure}[t]
\centering
\begingroup
\def\sx{1}
\def\sy{0.75}
\begin{tikzpicture}[thick]
    \node (I1) at (0*\sx,2*\sy) [ver1,inp] {$\mathbf 1$};
    \node (I2) at (3*\sx,1*\sy) [ver1,inp] {$\mathbf 2$};
    
    \node (V1) at (0*\sx,0*\sy) [ver1,circle,col=blue,label=0:$\hv_1$] {};
    \node (V2) at (1*\sx,1*\sy) [ver1,circle,col=red,label=$\hv_2$] {};
    \node (V3) at (2*\sx,0*\sy) [ver1,circle,col=ForestGreen,label=$\hv_3$] {};
    \node (V4) at (2*\sx,2*\sy) [ver1,circle,col=red,label=0:$\hv_4$] {};

    \node (E1) at (-1*\sx,1*\sy)[ver1,mod,label=180:$\he$] {$\gA$};

    \draw[->] (I1) -- (V2);
    \draw[->] (V1) -- (V2);
    \draw[->] (V2) -- (V3);
    \draw[->] (V3) -- (I2);
    \draw[->] (I2) -- (V4);
    \draw[->] (V4) -- (V2);
    \draw     (E1) -- (I1) node [pos=0.5, left , yshift=2pt] {$\scriptstyle \mathbf 1$};
    \draw     (E1) -- (V2) node [pos=0.5, above            ] {$\scriptstyle \mathbf 2$};
    \draw     (E1) -- (V1) node [pos=0.5, left, yshift=-2pt] {$\scriptstyle \mathbf 3$};
    
    \node[right,xshift=10pt] at (4*\sx,1.33*\sy) 
    	{$\Traces^*(\hv_1,2) = \{{\color{blue}{\texttt b}}{\color{red}{\texttt r}}{\color{ForestGreen}{\texttt g}}\}$};
    \node[right,xshift=10pt] at (4*\sx,0.66*\sy) 
    	{$\Traces^\omega(\hv_1) = \emptyset$};
\end{tikzpicture}
\endgroup
\caption{Illustration of a hypergraph together with two sets of traces.}
\label{fig:paths}
\end{figure}
We write $u \xrightarrow{a} v$ instead of $(u,a,v) \in \hA_{\hH}$.
Moreover, we write $\hV_\hH$, $\hn_\hH$, $\hW_{\hH}$, and so on, to refer to the components of a hypergraph $\hH$.
\Cref{fig:paths} (left) illustrates an HG with four concrete nodes $\hv_1-\hv_4$ with colors
\cblue, \cred, or \cgreen, two abstract nodes $1$ and $2$, one hyperedge $\he$ with label $\gA$, and six ordinary edges indicated by arrows. For simplicity, we omit ordinary edge labels.
The numbered connections to $\he$ indicate the sequence of attached nodes, i.e. $\hd(e) = 1 v_2 v_3$.

Hyperedges act as placeholders for HGs that can be inserted into the current hypergraph, while abstract nodes serve as connection points through which the current hypergraph can be embedded into other HGs.
An HG with neither abstract nodes nor hyperedges is a labeled transition system (LTS)~\cite[p. 20]{baierKatoen2008}.\footnote{The described LTS has no initial states, but one can use a color to model them.}
We reason about $\omega$-regular properties of such LTSs by considering their traces, i.e. sequences of node colors. To formalize traces, we first define the set $\ConnPaths_{\hH}$ of all \emph{paths} of $\hH \in \hHH$ using ordinary edges:
\[
\ConnPaths_{\hH} ~=~ \left\{ \seq{\hpath} \in \hW_{\hH}^{*} \uplus \hW_{\hH}^{\omega} \middle\bracevert \forall i \in [0,|\seq{\hpath}|).~ \exists a \in \hEE.~ \hpath_i \xrightarrow{a}_{\hH} \hpath_{i+1} \right\}
\]
A trace is then obtained from a path by considering the colors of each node.
Formally, the set $\FTraces_{\hH}(u,v)$ of \emph{finite traces} in $\hH$ from $u \in \hW_{\hH}$ to $v \in \hW_{\hH}$ is
\[
\FTraces_{\hH}(\hu,\hv) ~=~ \left\{ \hC_{\hH}\left(\hu\seq\hpath\hv \vert_{\hV_{\hH}}\right) \middle\bracevert \hu\seq\hpath\hv \in \Conn_{\hH} ~\tand~ \seq\hpath \in \hV_\hH^* \right\}.
\]
Notice that traces do not permit abstract nodes in the middle of a path.
The set $\ITraces_{\hH}(u)$ of \emph{infinite traces} in $\hH$ starting in $u \in \hW_{\hH}$ is 
\[
\ITraces_{\hH}(\hu) ~=~ \left\{ \hC_{\hH}\left(\hu\seq\hpath \vert_{\hV_{\hH}}\right) \middle\bracevert \hu\seq\hpath \in \Conn_{\hH} ~\tand~ \seq\hpath \in \hV_\hH^\omega \right\}.
\]
\Cref{fig:paths} (right) depicts examples of sets of traces for the HG on the left.
The set $\ITraces(v_1)$ is empty, because every infinite path starting in $\hv_1$ has to pass through the abstract node $2$.

As for transition systems, an HG $\hH$ and a node $\hv \in \hW_{\hH}$ \emph{satisfy} a property given by a B\"uchi automaton $\bMM \in \bBB[2^\hCC]$ iff all infinite traces starting in $v$ are accepted by $\bMM$, i.e.
$\hH, \hv \models \bMM ~\tiff~ \ITraces_{\hH}(\hv) \subseteq \lLL(\bMM)$.
\subsection{Hyperedge Replacement Grammars} \label{sec:hrgs}
To describe families of transition systems, we define how hyperedges are replaced by HGs.
A (hyperedge) \emph{assignment} is a function $\hAss\colon \hE_{\hH} \to \hHH$ that maps every hyperedge $\he \in \hE_{\hH}$ with $k$ attached nodes to an HG with $k$ abstract nodes, i.e. $|\hd_{\hH}(\he)| = \hn_{\hAss(\he)}$.
Hyperedge replacement then substitutes every hyperedge $e$ by the graph $\hAss(\he)$ while identifying the $i$-th attached node of $\he$ with the $i$-th abstract node of $\hAss(\he)$.
Formally, the HG $\repl{\hH}{\hAss} = (\hV,\hn_{\hH},\hA,\hE,\hd,\hC,\hL)$ obtained from \emph{hyperedge replacement} according to assignment $\hAss$ is given by:
\begin{align*}
\hV = \hV_{\hH} \uplus \biguplus_{\he \in \hE_\hH} \hV_{\hAss(\he)}
\quad
\hE = \biguplus_{\he \in \hE_{\hH}} \hE_{\hAss(\he)}
\quad
\hC = \hC_{\hH} \uplus \biguplus_{\he \in \hE_\hH} \hC_{\hAss(\he)}
\quad
\hL = \biguplus_{\he \in \hE_\hH} \hL_{\hAss(\he)}
\end{align*}
Moreover, for every $\he \in \hE_\hH$, let $f_{\he}(i)$ identify the $i$-th abstract node of $\hAss(e)$ with the $i$-th node attached to $\he$, i.e. $f_{\he}(i) = {\hd_{\hAss(e)}}_{i}$ if $i \in \hI_{\hAss(e)}$ and $f_{\he}(v) = v$ if $v \in \hV_{\hAss(e)}$.
Then the edge relation $\hA$ and assignment of attached nodes $\hd$ are 
\begin{align*}
\hA = \hA_{\hH} \uplus \biguplus_{\he \in \hE_\hH} \{ (f_{\he}(u), a, f_{\he}(v)) ~|~ u \xrightarrow{a}_{\hAss(e)} v \} 
\quad\tand\quad 
\hd = \biguplus_{\he \in \hE_{\hH}} f_{\he} \circ \hd_{\hAss(\he)}.
\end{align*}
If an HG $\hH$ has only a single hyperedge, which we replace by $\hK$, we write $\repl{\hH}{\hK}$ instead of $\repl{\hH}{\lambda \he. \hK}$.
Throughout this paper, we consider HGs up to isomorphism. Hence, nodes and hyperedges can always be renamed before replacement.

A trivial example of hyperedge replacement is the identity.
The corresponding assignment maps every hyperedge with label $\gA$ and $k$ attached nodes to the $k$-\emph{handle} $\handle{\gA} = (\emptyset,k,\emptyset,\{\he\}, \lambda \he.\, (1\,\ldots\,k), \lambda \he.\, \gA)$, consisting of one hyperedge labeled $\gA$ and $k$ abstract nodes attached to it in ascending order.\footnote{We use lambda expressions to denote functions. For example, $\lambda \he.\, \gA$ is the function that maps every hyperedge $\he$ to the nonterminal $\gA$.}

We model families of LTSs with hyperedge replacement grammars (\HRGs).

\begin{definition}
An \emph{\HRG} is a tuple $\gGG = (\gN,\gZ,\gP)$, where $\gN$ is a finite set of \emph{nonterminals} equipped with an arity $\arity\colon \gN \to \Nats$, $\gZ \subseteq \gN$ is a set of \emph{start symbols}, and $\gP \subseteq \gN \times \hHH$ is a set of \emph{production rules} such that, for all $(\gA,\hH) \in \gP$, (i) $\arity(\gA) = \hn_{\hH}$ and (ii) for all $\he \in \hE_{\hH}$, $\hL(\he) \in \gN$ and $|\hd_\hH(\he)| = \arity(\hL(\he))$.
\end{definition}
Conditions (i) and (ii) ensure for every production rule $(\gA,\hH) \in \gP$ that replacing a hyperedge labelled with $\gA$ by $\hH$ is well-defined. We will define the graph language of an \HRG in terms of its derivation trees, which we introduce first.

\begin{definition}
Let $\gGG = (\gN, \gZ, \gP)$ be an \HRG and $\gA \in \gN$.
The set $\tTree_{\gGG}(\gA)$ of (partial) \emph{derivation trees} of $\gA$ is defined inductively by the following rules:
\begin{enumerate}
	\item $\gA \in \tTree_{\gGG}(\gA)$ (i.e. derivations can be stopped prematurely).
	\item If $(\gA, \gR) \in \gP$ and $\tAss$ is a function that assigns to every $\he \in \hE_{\gR}$ a derivation tree $\tAss(\he) \in \tTree_{\gGG}(\hL_{\gR}(\he))$, then $(\gR, \tAss) \in \tTree_{\gGG}(\gA)$.
\end{enumerate}
A derivation tree is \emph{complete} if it is constructed without rule (1). Furthermore, it is a \emph{tree with hole} $\gA$ if (1) is applied exactly once and with nonterminal $\gA$.
\end{definition}
%
%\cmath{It might be nice to add a derivation tree to some figure}.
Every partial derivation tree describes how to apply production rules to derive an HG from a nonterminal through hyperedge replacement.
Formally, the HG $\sem{\tT}$ assembled from a derivation tree $\tT \in \tTree_{\gGG}(\gA)$ is given by
\begin{align*}
\sem{\tT} ~=~
\begin{cases}
  \handle{\gA}, & \tT = \gA \\
  \repl{\gR}{\lambda \he. \sem{\tAss(\he)}}, & \tT = (\gR, \tAss).
\end{cases}
\end{align*}
The first case stops further replacements of a hyperedge labeled $\gA$. Hence, complete derivation trees and trees with a hole yield HGs with zero and one hyperedge, respectively.
The second case replaces each hyperedge by the HG assembled from its assigned derivation tree.
Our formal model checking problem will be phrased in terms of the \emph{language} of an \HRG, i.e. all hypergraphs that are assembled from complete derivation trees of a starting symbol.
\begin{definition}
The \emph{language} of an \HRG $\gGG = (\gN, \gZ, \gP)$ and some $\gA \in \gN$ is 
\[
\lLL_\gA(\gGG) ~= \left\{ \sem{\tT} \middle\bracevert \tT \in \tTree_{\gGG}(\gA)~\text{complete} \right\}.
\]
Moreover, the \emph{language} of HGs \emph{generated} by $\gGG$ is $\lLL(\gGG) = \bigcup_{\gA \in \gZ} \lLL_\gA(\gGG)$.
\end{definition}
\begin{figure}[t]
\centering	
\begingroup\def\sx{0.75}
$$
\lLL_\gA(\gGG) = \left\{
\begin{matrix} \gR_1 \\
\scalebox{0.7}{\begin{tikzpicture}[thick,shorten >=6pt,shorten <=6pt]
    \draw[->] (0*\sx,0) to[out=-60,in=-120,distance=15] (1*\sx,0);
    \draw[->] (1*\sx,0) to[out=+120,in=+60,distance=15] (0*\sx,0);
    \draw (0*\sx,0) node[ver1,circle,inp] {$\mathbf 1$};
    \draw (1*\sx,0) node[ver1,circle,inp] {$\mathbf 2$};
\end{tikzpicture}} \end{matrix}
\;,\;\;
\begin{matrix} \repl{\gR_2}{\gR_1} \\
\scalebox{0.7}{\begin{tikzpicture}[thick,shorten >=6pt,shorten <=6pt]
    \draw[->] (0*\sx,0) to[out=-60,in=-120,distance=15] (1*\sx,0);
    \draw[->] (1*\sx,0) to[out=+120,in=+60,distance=15] (0*\sx,0);
    \draw[->] (1*\sx,0) to[out=-60,in=-120,distance=15] (2*\sx,0);
    \draw[->] (2*\sx,0) to[out=+120,in=+60,distance=15] (1*\sx,0);
    \draw (0*\sx,0) node[ver1,circle,inp] {$\mathbf 1$};
    \draw (1*\sx,0) node[ver1,circle,col=red] {};
    \draw (2*\sx,0) node[ver1,circle,inp] {$\mathbf 2$};
\end{tikzpicture}} \end{matrix}
\;,\;\; \dots\right\}
\qquad
\lLL_\gS(\gGG) = \left\{
\begin{matrix} \repl{\gR_3}{\gR_1} \\
\scalebox{0.7}{\begin{tikzpicture}[thick,shorten >=6pt,shorten <=6pt]
    \draw[->] (0*\sx,0) to[out=-60,in=-120,distance=15] (1*\sx,0);
    \draw[->] (1*\sx,0) to[out=+120,in=+60,distance=15] (0*\sx,0);
    \draw[->] (1*\sx,0) to[out=-60,in=-120,distance=15] (2*\sx,0);
    \draw[->] (2*\sx,0) to[out=+120,in=+60,distance=15] (1*\sx,0);
    \draw (0*\sx,0) node[ver1,circle,col=red] {};
    \draw (1*\sx,0) node[ver1,circle,col=red] {};
    \draw (2*\sx,0) node[ver1,circle,col=blue] {};
\end{tikzpicture}} \end{matrix}
\;,\;\;
\begin{matrix} \repl{\gR_3}{\repl{\gR_2}{\gR_1}} \\
\scalebox{0.7}{\begin{tikzpicture}[thick,shorten >=6pt,shorten <=6pt]
    \draw[->] (0*\sx,0) to[out=-60,in=-120,distance=15] (1*\sx,0);
    \draw[->] (1*\sx,0) to[out=+120,in=+60,distance=15] (0*\sx,0);
    \draw[->] (1*\sx,0) to[out=-60,in=-120,distance=15] (2*\sx,0);
    \draw[->] (2*\sx,0) to[out=+120,in=+60,distance=15] (1*\sx,0);
    \draw[->] (2*\sx,0) to[out=-60,in=-120,distance=15] (3*\sx,0);
    \draw[->] (3*\sx,0) to[out=+120,in=+60,distance=15] (2*\sx,0);
    \draw (0*\sx,0) node[ver1,circle,col=red] {};
    \draw (1*\sx,0) node[ver1,circle,col=red] {};
    \draw (2*\sx,0) node[ver1,circle,col=red] {};
    \draw (3*\sx,0) node[ver1,circle,col=blue] {};
\end{tikzpicture}} \end{matrix}
\;,\;\; \dots\right\}
$$
\endgroup
\vspace{-1.5em}
\caption{Languages of the grammar from \Cref{fig:grammar-dll}.}
\label{fig:grammar-lang}
\end{figure}
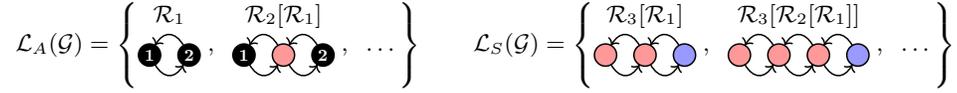
\Cref{fig:grammar-lang} illustrates the language of the HRG $\gGG$ from \Cref{fig:grammar-dll} for the nonterminals $\gA$ and $\gS$.
Since $\arity(\gA) = 2$, every HG in $\lLL_\gA(\gGG)$ has two abstract nodes.
Every HG in $\lLL_\gS(\gGG)$ is an LTS modeling a doubly-linked list, where the rightmost node is blue and every other node is red.

\subsection{Formal Problem Statement}\label{sec:problem}
We now formalize the main problem studied in this paper and discuss applications to model checking and satisfiability of infinite families of finite LTSs.

From now on, we will only work with HRGs $\gGG$ whose language $\lLL(\gGG)$ consists of LTSs, which is guaranteed if $\arity(\gS) = 0$ holds for every starting symbol $\gS$.
Furthermore, we denote by $\cmDel{c}{\hH}$ the HG that is identical to $\hH$ except that the node color $c \in \hCC$ has been deleted from every node.\footnote{i.e. $\cmDel{c}{\hH}$ is identical to $\hH$ except for the node coloring function $\lambda v.~ \hL_{\hH}(v) \setminus \{c\}$.}
Our goal is then to determine, for every LTS $\hH$ generated by an HRG, all nodes $v$ that can be used as initial states of $\hH$ such that $(\hH,v)$ satisfies a given $\omega$-regular property given by a B\"uchi automaton $\bMM$.
We characterize those nodes by constructing a \emph{recolored} HRG in which precisely those nodes are additionally colored with $\bMM$.

\begin{center}\noindent\fbox{\parbox{0.9\textwidth}{
\textbf{B\"uchi recoloring problem for HRGs.} 
Given an \HRG $\gGG$ generating LTSs with node colors $\hCC$ and a B\"uchi automaton $\bMM \in \bBB[2^\hCC]$, compute an \HRG $\gGG_{\bMM}$ using the set of node colors $\hCC \uplus \{\bMM\}$ such that 
\begin{enumerate}
	\item[a)] $\cmDel{\bMM}{\lLL(\gGG_{\bMM})} \,=\, \{ \cmDel{\bMM}{\hH} ~|~ \hH \in \lLL(\gGG_{\bMM}) \} \,=\, \lLL(\gGG)$ and 
	\item[b)] $\forall \hH \in \lLL(\gGG_\bMM)~ \forall \hv \in \hV_{\hH}.\quad \bMM \in \hC_{\hH}(\hv) \iff \hH,\hv \models \bMM$.
\end{enumerate}
}}\end{center}
Property (a) requires that the recolored HRG generates the same LTSs apart from adding the color $\bMM$ to some nodes
and property (b) requires that the right nodes are colored with $\bMM$.
If we use a fixed node color $init$ to indicate initial states of interest in the LTSs generated by an \HRG $\gGG$,
then a solution to the above problem allows deciding whether \emph{all}, \emph{any}, or \emph{infinitely many} LTSs generated by $\gGG$ satisfy an $\omega$-regular property given by a B\"uchi automaton $\bMM$.
For example, to reason about \emph{all} LTSs, it suffices to (1) compute the recolored \HRG $\gGG_{\bMM}$, (2) remove all unproductive rules\footnote{A rule is unproductive, it does not appear in any complete derivation tree of a starting symbol. Unproductive rules can be removed in linear time~(cf.~\cite{hopcroft2001introduction}).}, and (3) check whether for all remaining production rules that every node colored as initial is also colored with $\bMM$.
Decision procedures for the other problems work similarly, but need to find a single derivation tree or a loop among production rules in step (3) with the same property as above.
Finally, recoloring the grammar rather than deciding the $\omega$-regular property alone, allows extending our approach to \CTLS properties via recursive HRG recoloring; see \Cref{sec:impl} for details.

\begin{figure}[t]
\centering	
\begingroup\def\sx{0.75}
$$
\gCC_\gA(\gGG) = \left\{
\begin{matrix} \repl{\gR_3}{\handle{\gA}} \\
\scalebox{0.6}{\begin{tikzpicture}[thick,shorten >=6pt,shorten <=6pt]
    \draw     (0*\sx,0) -- (1*\sx,0) node[pos=0.5, above, yshift=-1.5pt] {$\mathbf 1$};
    \draw     (1*\sx,0) -- (2*\sx,0) node[pos=0.5, above, yshift=-1.5pt] {$\mathbf 2$};
    \draw[->] (2*\sx,0) to[out=-60,in=-120,distance=15] (3*\sx,0);
    \draw[->] (3*\sx,0) to[out=+120,in=+60,distance=15] (2*\sx,0);
    \draw (0*\sx,0) node[ver1,circle,col=red] {};
    \draw (1*\sx,0) node[ver1,mod] {$\gA$};
    \draw (2*\sx,0) node[ver1,circle,col=red] {};
    \draw (3*\sx,0) node[ver1,circle,col=blue] {};
\end{tikzpicture}} \end{matrix}
\;,\;\;
\begin{matrix} \repl{\gR_3}{\repl{\gR_2}{\handle{\gA}}} \\
\scalebox{0.6}{\begin{tikzpicture}[thick,shorten >=6pt,shorten <=6pt]
    \draw[->] (0*\sx,0) to[out=-60,in=-120,distance=15] (1*\sx,0);
    \draw[->] (1*\sx,0) to[out=+120,in=+60,distance=15] (0*\sx,0);
    \draw     (1*\sx,0) -- (2*\sx,0) node[pos=0.5, above, yshift=-1.5pt] {$\mathbf 1$};
    \draw     (2*\sx,0) -- (3*\sx,0) node[pos=0.5, above, yshift=-1.5pt] {$\mathbf 2$};
    \draw[->] (3*\sx,0) to[out=-60,in=-120,distance=15] (4*\sx,0);
    \draw[->] (4*\sx,0) to[out=+120,in=+60,distance=15] (3*\sx,0);
    \draw (0*\sx,0) node[ver1,circle,col=red] {};
    \draw (1*\sx,0) node[ver1,circle,col=red] {};
    \draw (2*\sx,0) node[ver1,mod] {$\gA$};
    \draw (3*\sx,0) node[ver1,circle,col=red] {};
    \draw (4*\sx,0) node[ver1,circle,col=blue] {};
\end{tikzpicture}} \end{matrix}
\;,\;\; 
\begin{matrix} \repl{\gR_3}{\repl{\gR_2}{\repl{\gR_2}{\handle{\gA}}}} \\
\scalebox{0.6}{\begin{tikzpicture}[thick,shorten >=6pt,shorten <=6pt]
    \draw[->] (0*\sx,0) to[out=-60,in=-120,distance=15] (1*\sx,0);
    \draw[->] (1*\sx,0) to[out=+120,in=+60,distance=15] (0*\sx,0);
    \draw[->] (1*\sx,0) to[out=-60,in=-120,distance=15] (2*\sx,0);
    \draw[->] (2*\sx,0) to[out=+120,in=+60,distance=15] (1*\sx,0);
    \draw     (2*\sx,0) -- (3*\sx,0) node[pos=0.5, above, yshift=-1.5pt] {$\mathbf 1$};
    \draw     (3*\sx,0) -- (4*\sx,0) node[pos=0.5, above, yshift=-1.5pt] {$\mathbf 2$};
    \draw[->] (4*\sx,0) to[out=-60,in=-120,distance=15] (5*\sx,0);
    \draw[->] (5*\sx,0) to[out=+120,in=+60,distance=15] (4*\sx,0);
    \draw (0*\sx,0) node[ver1,circle,col=red] {};
    \draw (1*\sx,0) node[ver1,circle,col=red] {};
    \draw (2*\sx,0) node[ver1,circle,col=red] {};
    \draw (3*\sx,0) node[ver1,mod] {$\gA$};
    \draw (4*\sx,0) node[ver1,circle,col=red] {};
    \draw (5*\sx,0) node[ver1,circle,col=blue] {};
\end{tikzpicture}} \end{matrix}
\;,\;\; \dots\right\}
, \quad
\gCC_\gS(\gGG) = \left\{
\begin{matrix}  \handle{\gS} \\
\scalebox{0.6}{\begin{tikzpicture}[thick,shorten >=6pt,shorten <=6pt]
    \draw (0*\sx,0) node[ver1,mod] {$\gS$};
\end{tikzpicture}} \end{matrix}
\right\}
$$
\endgroup
\vspace{-1.5em}
\caption{Contexts of the grammar from \Cref{fig:grammar-dll}.}
\label{fig:grammar-ctx}
\end{figure}

\section{Compositional Reasoning about HRG Languages} \label{sec:refinement}
This section presents a generic approach for compositional reasoning about \HRGs based on congruences, which we apply in \Cref{sec:meq} to solve the B\"uchi recoloring problem.
The main idea is that every complete derivation tree $\tT \in \tTree_{\gGG}(\gA)$ can be decomposed into three components:
(1) a tree $\tT' \in \tTree_{\gGG}(\gA)$ with some hole $\gB$, (2) a production rule $(\gB,\gR) \in \gP_{\gGG}$, and (3) a complete derivation tree for every hyperedge in $\gR$ (cf.~\cite{engelfriet1997context}).
Our goal is to reason compositionally about the HGs $\sem{\tT}$ of such trees $\tT$ by considering only component (2), i.e. individual production rules $(\gB,\gR)$, and limited information on the components (1) and (3).
The HGs obtained from (1) are also called contexts.
\begin{definition}
The set of \emph{contexts} of an \HRG $\gGG = (\gN, \gZ, \gP)$ and $\gB \in \gN$ is
\[
\gCC_\gB(\gGG) ~= \left\{\sem{\tT} \middle\bracevert  \exists \gA \in \gZ.\; \tT \in \tTree_\gGG(\gA)~\text{with hole}~\gB \right\}.
\]
\end{definition}
\Cref{fig:grammar-ctx} illustrates the sets of contexts of the HRG $\gGG$ from \Cref{fig:grammar-dll}.
% for the nonterminals $\gA$ and $\gS$. 
Note that there is only one context of $\gS$, because $\gGG$ contains no hyperedge labelled $\gS$.
The above decomposition of complete derivation trees can also be phrased as a decomposition of the HGs assembled from components (1) - (3).
\begin{replemma}{lem:hg-decomposition}
Let $\gGG$ be a \HRG, $\hH \in \lLL(\gGG)$, and $\hv \in \hV_{\hH}$.
Then, there exists a rule $(\gB,\gR) \in \gP_{\gGG}$, a context $\hJ \in \gCC_{\gB}(\gGG)$, and an assignment $\hAss\colon \hE_{\hH} \to \hHH$ with $\hAss(\he)\in \lLL_{\hL_{\hH}(\he)}(\gGG)$ for all $\he \in \hE_{\hH}$ such that
$\hH = \repl{\hJ}{\repl{\gR}{\fml{\hK}}}$ and $\hv \in \hV_{\gR}$.
\end{replemma}
To formalize what we mean by limited information about (1) and (3), we define \emph{HGs with a view}, i.e. HGs with a distinguished set of \emph{exposed} nodes. We assume that abstract nodes and attached nodes are always exposed, as they are affected by hyperedge replacement. We allow exposing an additional subset of the remaining unattached nodes $\cmU_{\hH} = \{ v \in \hV_{\hH} ~|~ \forall e \in \hE_{\hH}. v \notin \hd_{\hH}(\he) \}$. 
\begin{definition}
A \emph{hypergraph with a view} (HGV) is a pair $(\hH,U)$, where $\hH \in \hHH$ and $U \subseteq \cmU_\hH$. We denote by $X_\hVV$ the set of \emph{exposed nodes} $U \cup (\hW_\hH \setminus \cmU_\hH)$. We call $(\hH,\emptyset)$ the \emph{trivial HGV} of $\hH$ and $(\hH,\cmU_\hH)$ the \emph{full HGV} of $\hH$.
\end{definition}
We lift hyperedge replacement to HGVs by joining views, i.e.
\begin{align*}
	\repl{(\hH,U)}{\lambda \he.\; (\fml\hJ(\he), U_\he)} = (\repl{\hH}{\fml\hJ}, \textstyle{\biguplus_\he} U_\he \uplus U).
\end{align*}
Analogously, we lift paths and traces to HGVs.
Furthermore, for a B\"uchi automaton $\bMM$, $\hVV = (\hH,U)$, and $x \in X_{\hVV}$, we define $\hVV, x \models \bMM$ iff $\hH, x \models \bMM$.
We characterize the information about the context $\hJ$ and the HGs $\hAss(\he)$ needed for compositional reasoning as equivalence classes of relations between HGVs. 
To this end, we must relate exposed nodes and hyperedges of HGVs.
\begin{definition}
Let $\hVV=(\hH,U) \in \hHHV$ and $\hVV' =(\hH',U') \in \hHHV$. A \emph{coupling} between $\hVV$ and $\hVV'$ is a pair $(\eta,\mu)$ where $\eta: U \bij U'$ and $\mu: \hE_\hH \bij \hE_{\hH'}$ such that there exists a bijection $\nu$ for which
\begin{align*}
\forall \hi \in \hI_\hH. \nu(\hi) = \hi,
\quad 
\forall \hv \in X. \nu(\hv) = \eta(\hv),
\quad 
\forall \he \in \hE_\hH \forall i. \nu(\hd_{\hH}(\he)_i) = \hd_{\hH'}(\mu(\he))_i.
\end{align*}
\end{definition}
The mappings $\eta$ and $\mu$ determine how unattached nodes and hyperedges are related.
The relation between abstract nodes in the two HGs is  implicitly given by their numbering. The same holds for nodes attached to hyperedges.
We extend $\eta$ to a mapping $\nu$ that makes those correspondences explicit.
\begin{definition} \label{def:congr}
An HGV-relation $\equiv$ is a set of tuples $(\hUU,\hVV,\eta,\mu)$, where $\hUU,\hVV$ are HGVs and $(\eta,\mu)$ is a coupling.
We write $\eta,\mu \vdash \hUU \equiv \hVV$ instead of $(\hUU,\hVV,\eta,\mu) \in\, \equiv$.
\end{definition}
For example, isomorphisms are HGV-relations. Specifically, two views $(\hH,U)$ and $(\hH',U')$ are isomorphic under $\eta$ and $\mu$, written $\eta,\mu \vdash (\hH,U) \cong (\hH',U')$, iff $\hA_{\hH'} = \nu(\hA_{\hH})$, $\hC_\hH = \hC_{\hH'} \circ \eta$ and $\hL_\hH = \hL_{\hH'} \circ \mu$. The hypergraphs $\hH$ and $\hH'$ are isomorphic if their full HGVs are isomorphic.

The precise choice of $\eta$ and $\mu$ in HGV-relations $\equiv$ is often irrelevant.
Hence, we define
$\hUU \equiv \hVV
~\text{iff}~ \exists\eta,\mu.~ \eta,\mu \vdash \hUU \equiv \hVV
$
that uses the same symbol $\equiv$ by slight abuse of notation.
We will only work with HGV-relations $\equiv$ that are equivalence relations and respect hyperedge replacement, i.e.
replacing hyperedges by equivalent HGVs in equivalent HGVs must yield equivalent HGVs again. 
\begin{definition} \label{def:congr}
An \emph{HGV-congruence} is an HGV-relation $\equiv$ such that, for all $\hUU,\hUU' \in \hHHV$ and hyperedge assignments $\fml\hVV$ on $\hUU$ and $\fml\hVV'$ on $\hUU'$,
\begin{align*}
&	(\eta,\mu) \vdash \hUU \equiv \mathcal \hUU' \quad\text{and}\quad\forall \he \in \hE_\hH.\; (\eta_\he, \mu_\he) \vdash \fml\hVV(\mu(\he)) \equiv \fml\hVV'(\mu(\he))
\\
& \textit{implies}\quad
	({\textstyle\biguplus_\he} \eta_\he \uplus \eta,\; {\textstyle\biguplus_\he} \mu_\he) \vdash \repl\hUU{\fml\hVV} \equiv \repl{\hUU'}{\fml\hVV'}.
\end{align*}
\end{definition}
For instance, every partial isomorphism %from \Cref{ex:partial-iso} 
is an HGV-congruence. If $\equiv$ is an HGV-congruence, we denote by $\eqclass{\hVV}_\equiv$ the equivalence class of $\hVV$.
For every \HRG $\gGG$ and every HGV-congruence $\equiv$, one can then construct a \emph{refined \HRG} $\gGG_\equiv$.
For every rule $(\gA, \gR)$ that appears in $\hH = \repl{\hJ}{\repl{\gR}{\fml\hK}} \in \lLL(\gGG)$, $\gGG_\equiv$ annotates $\gA$ with the equivalence class $\eqclass{(\hJ, \emptyset)}$, and each $\hL_\gR(\he)$ with $\eqclass{(\fml\hK(\he), \emptyset)}$.
	These equivalence classes allow computing $\eqclass{(\repl{\hJ}{\repl{\gR}{\fml\hK}}, \hV_\gR)}$ without knowing $\hJ$ or $\hK$. We can thus reason about $\gR$ in the context in which it is applied.

To formalize refined \HRGs, we rely on two conventions.
First, we denote by $\grule{\gR}{\fml{\gB}}{\gA}$ the production rule $(\gA,\gR')$, where $\gR'$ is identical to $\gR$ except that the hyperedge labeling function is $\fml{\gB}$ instead of $\hL_{\gR}$.
Second, given two functions $f,g$ with the same domain, we denote by $f \times g$ the function that assigns the value of both functions to each domain element, i.e. $f \times g = \lambda e. (f(e), g(e))$. %
\begin{definition}\label{def:refine}
Let $\gGG = (\gN,\gP,\gZ)$ be an \HRG and $\equiv$ be a HGV-congruence.
Then the \emph{refined \HRG} $\gGG_{\equiv} = (\gN_{\equiv},\gP_{\equiv},\gZ_{\equiv})$ is defined as follows:
	\begin{align*}
		\gN_{\equiv} &= \{(\gA,\gs,\gx) \mid \gA \in \gN,\; \gs \in \eqclass{(\lLL_{\gA}(\gGG),\emptyset)},\; \gx \in \eqclass{(\gCC_{\gA}(\gGG),\emptyset)} \} \\
		\gP_{\equiv} &= \left\{\grule{\gR}{\fml\gB \zip \fml\gr \zip \fml\gy}{(\gA,\gs,\gx)} \middle\bracevert \substack{\grule{\gR}{\fml{\gB}}{\gA} \in \gP,\; \gs = \repl{(\gR,\emptyset)}{\fml{\gr}}, \\ \fml\gy = \lambda \he.\; \repl{\gx}{\repl{(\gR,\emptyset)}{\fml\gr\langle \he / \handle{\fml\gB(\he)} \rangle}}} \right\} \\
		\gZ_{\equiv} &= \{(\gA,\gs,\eqclass{(\handle{\gA},\emptyset)}) \in \gN_{\equiv} \mid \gA \in \gZ\},
	\end{align*}
	where $\fml\gr\langle\he / \handle{\gB(\he)}\rangle$ is the hyperedge assignment that is identical to $\fml\gr$ except that it assigns the handle $\handle{\gB(\he)}$ to $\he$.
\end{definition}
Note that the information about contexts and languages of hyperedges added in the above construction is \emph{local} to each rule, i.e. it depends solely on the equivalence classes assigned to the nonterminal of the considered rule and the nonterminals assigned to hyperedges in that rule.\footnote{\new{We will illustrate refined HRGs at the end of the next section, which introduces a concrete HGV-congruence for reasoning about $\omega$-regular properties.}}
The following theorem justifies that enriching \HRGs with such information is correct for HGV-congruences.
\begin{reptheorem}{thm:ref-ok}
Let $\gGG$ be an \HRG and $\equiv$ be a HGV-congruence. Then:
\begin{enumerate}
	\item For all $\hH \in \lLL_{(\gA,\gs,\gx)}(\gGG_{\equiv})$, we have $\gs = \eqclass{(\hH,\emptyset)}$.
	\item For all $\hJ \in \gCC_{(\gA,\gs,\gx)}(\gGG_{\equiv})$, we have $\gx = \eqclass{(\hJ,\emptyset)}$.
\end{enumerate}
\end{reptheorem}
In other words, the refined \HRG can only generate HGs and contexts from a nonterminal $(\gA,\gs,\gx)$ that are in the equivalence classes $\gs$ and $\gx$, respectively. 
\section{$\bMM$-equivalence: A Congruence for B\"uchi Recoloring} \label{sec:meq}

To solve the B\"uchi recoloring problem (see \Cref{sec:problem}), we now define an HGV-congruence for $\omega$-regular properties.
Throughout this section, we fix a B\"uchi automaton $\bMM = (\bQ, \bS, \bD, \bI, \bF)$ with $\bS = 2^{\hCC}$.
Our goal is to define a congruence relation on HGVs $\hVV$ that suffices to decide whether $\bMM$ accepts on the traces that start and end on exposed nodes $X_\hVV$. %$X_\hVV$.
We first define such an equivalence on strings, which resembles the indistinguishability relation used in the proof of the well-known Myhill-Nerode theorem for (cf.~\cite{hopcroft2001introduction,thomas1990automata}). 

\begin{definition}[$\bMM$-equivalence for Strings] \label{def:string-meq}
Two strings $\seq\bs,\seq\bs' \in \bS^*$ are \emph{$\bMM$-equivalent} if $\forall \bp,\bq,\boolb.\; (\bp,\seq\bs,\bq,\boolb) \in \bD^* \iff (\bp,\seq\bs',\bq,\boolb) \in \bD^*$.
Two infinite strings $\seq\bs,\seq\bs' \in \bS^\omega$ are \emph{$\bMM$-equivalent} if
$
	\forall \bp.\; (\bp,\seq\bs,\boolT) \in \bD^\omega \iff (\bp,\seq\bs',\boolT) \in \bD^\omega.
$
We write $\seq\bs \equiv_\bMM \seq\bs'$ to denote that $\seq\bs$ and $\seq\bs'$ are $\bMM$-equivalent. 
\end{definition}
$\bMM$-equivalent strings cannot be distinguished by $\bMM$. 
We lift this notion to sets of strings. Specifically, two sets of finite (resp. infinite) strings $S,S' \subseteq \Sigma^*$ (resp. $S,S' \subseteq \Sigma^\omega$) are equivalent, iff for each string $\seq\bs \in S$, there exists an $\bMM$-equivalent string $\seq\bs' \in S'$, and vice versa.
To lift $\bMM$-equivalence to HGVs $\hVV$, we consider the traces between all exposed nodes $x,y \in X_\hVV$.
For every equivalent $\hVV'$, we then require that the sets of traces between coupled exposed nodes are equivalent.
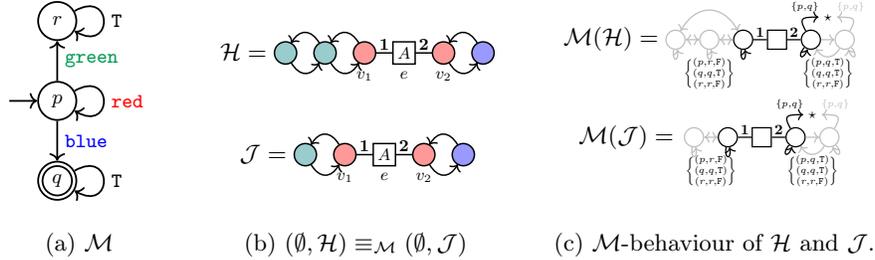
\begin{figure}[t]
\begin{subfigure}[b]{0.2\linewidth}
\centering
\scalebox{0.85}{\begin{tikzpicture}[thick]
    \node (P) at (0,0) [ver,circle] {$\bp$};
    \node (Q) at (0,-1.25) [ver,circle] {$\bq$};
    \draw (Q) node[circle,draw, scale=1.5] {};
    \node (R) at (0,1.25) [ver,circle] {$\br$};

    \draw[->] (-0.75,0) -- (P);
    \draw[->] (P) -- node[pos=0.5, right] {$\color{blue} \cgreen$} (R);
    \draw[->] (P) -- node[pos=0.5, right] {$\color{blue} \cblue$} (Q);
    \draw[->] (P) to[out=30,in=330,distance=20] node[pos=0.5, right] {$\cred$} (P); 
    \draw[->] (Q) to[out=30,in=330,distance=20] node[pos=0.5, right] {$\boolT$} (Q);  
    \draw[->] (R) to[out=30,in=330,distance=20] node[pos=0.5, right] {$\boolT$} (R);  
\end{tikzpicture}}
\caption{$\bMM$}
\label{fig:meq-ex-a}
\end{subfigure}	
\hfill
\begin{subfigure}[b]{0.38\linewidth}
\centering
\begingroup\def\sx{0.75}
$$
\renewcommand{\arraystretch}{3.5}
\begin{array}{c@{}}
\hH = \scalebox{0.7}{\begin{tikzpicture}[thick,shorten >=6pt,shorten <=6pt, baseline=-3.5pt]
    \draw[->] (0*\sx,0) to[out=-60,in=-120,distance=15] (1*\sx,0);
    \draw[->] (1*\sx,0) to[out=+120,in=+60,distance=15] (0*\sx,0);
    \draw[->] (1*\sx,0) to[out=-60,in=-120,distance=15] (2*\sx,0);
    \draw[->] (2*\sx,0) to[out=+120,in=+60,distance=15] (1*\sx,0);
    \draw     (2*\sx,0) -- (3*\sx,0) node[pos=0.5, above, yshift=-1.5pt] {$\mathbf 1$};
    \draw     (3*\sx,0) -- (4*\sx,0) node[pos=0.5, above, yshift=-1.5pt] {$\mathbf 2$};
    \draw[->] (4*\sx,0) to[out=-60,in=-120,distance=15] (5*\sx,0);
    \draw[->] (5*\sx,0) to[out=+120,in=+60,distance=15] (4*\sx,0);
    \draw (0*\sx,0) node[ver1,circle,col=teal] {};
    \draw (1*\sx,0) node[ver1,circle,col=teal] {};
    \draw (2*\sx,0) node[ver1,circle,col=red, label=270:$\hv_1$] {};
    \draw (3*\sx,0) node[ver1,mod, label=270:$\he$] {$\gA$};
    \draw (4*\sx,0) node[ver1,circle,col=red, label=270:$\hv_2$] {};
    \draw (5*\sx,0) node[ver1,circle,col=blue] {};
\end{tikzpicture}} 
\\
\hJ = \scalebox{0.7}{\begin{tikzpicture}[thick,shorten >=6pt,shorten <=6pt, baseline=-3.5pt]
    \draw[->] (0*\sx,0) to[out=-60,in=-120,distance=15] (1*\sx,0);
    \draw[->] (1*\sx,0) to[out=+120,in=+60,distance=15] (0*\sx,0);
    \draw     (1*\sx,0) -- (2*\sx,0) node[pos=0.5, above, yshift=-1.5pt] {$\mathbf 1$};
    \draw     (2*\sx,0) -- (3*\sx,0) node[pos=0.5, above, yshift=-1.5pt] {$\mathbf 2$};
    \draw[->] (3*\sx,0) to[out=-60,in=-120,distance=15] (4*\sx,0);
    \draw[->] (4*\sx,0) to[out=+120,in=+60,distance=15] (3*\sx,0);
    \draw (0*\sx,0) node[ver1,circle,col=teal] {};
    \draw (1*\sx,0) node[ver1,circle,col=red, label=270:$\hv_1$] {};
    \draw (2*\sx,0) node[ver1,mod, label=270:$\he$] {$\gA$};
    \draw (3*\sx,0) node[ver1,circle,col=red, label=270:$\hv_2$] {};
    \draw (4*\sx,0) node[ver1,circle,col=blue] {};
\end{tikzpicture}}
\iffalse
\\
\hK = \scalebox{0.7}{\begin{tikzpicture}[thick,shorten >=6pt,shorten <=6pt, baseline=-3.5pt]
    \draw[->] (0*\sx,0) to[out=-60,in=-120,distance=15] (1*\sx,0);
    \draw[->] (1*\sx,0) to[out=+120,in=+60,distance=15] (0*\sx,0);
    \draw[->] (1*\sx,0) to[out=-60,in=-120,distance=15] (2*\sx,0);
    \draw[->] (2*\sx,0) to[out=+120,in=+60,distance=15] (1*\sx,0);
    \draw     (2*\sx,0) -- (3*\sx,0) node[pos=0.5, above, yshift=-1.5pt] {$\mathbf 1$};
    \draw     (3*\sx,0) -- (4*\sx,0) node[pos=0.5, above, yshift=-1.5pt] {$\mathbf 2$};
    \draw[->] (4*\sx,0) to[out=-60,in=-120,distance=15] (5*\sx,0);
    \draw[->] (5*\sx,0) to[out=+120,in=+60,distance=15] (4*\sx,0);
    \draw (0*\sx,0) node[ver1,circle,col=teal] {};
    \draw (1*\sx,0) node[ver1,circle,col=red] {};
    \draw (2*\sx,0) node[ver1,circle,col=red, label=270:$\hv_1$] {};
    \draw (3*\sx,0) node[ver1,mod, label=270:$\he$] {$\gA$};
    \draw (4*\sx,0) node[ver1,circle,col=red, label=270:$\hv_2$] {};
    \draw (5*\sx,0) node[ver1,circle,col=blue] {};
\end{tikzpicture}} 	
\fi
\end{array}
$$
\endgroup
\caption{$(\emptyset, \hH) \equiv_\bMM (\emptyset, \hJ)$}
\label{fig:meq-ex-g}
\end{subfigure}
\hfill
\begin{subfigure}[b]{0.38\linewidth}
\begingroup\def\sx{0.75}
$$
\renewcommand{\arraystretch}{2}
\begin{array}{c@{}}
\bMM(\hH) = \scalebox{0.6}{\begin{tikzpicture}[thick, baseline=-3.5pt]
    \node (V1) at (0*\sx,0) [lightgray,ver1,circle] {};
    \node (V2) at (1*\sx,0) [lightgray,ver1,circle] {};
    \node (V3) at (2*\sx,0) [ver1,circle] {};
    \node (E1) at (3*\sx,0) [ver1,mod] {};
    \node (V4) at (4*\sx,0) [ver1,circle] {};
    \node (V5) at (5*\sx,0) [lightgray,ver1,circle] {};
    \node (STAR) at (4.5*\sx,0.5) [color=white,text=black] {$\star$};
    
    \draw     (E1) -- (V3) node[pos=0.5, above, yshift=-1.5pt] {$\mathbf 1$};
    \draw     (E1) -- (V4) node[pos=0.5, above, yshift=-1.5pt] {$\mathbf 2$};
   	\draw[->] (V4) to[out=230,in=250,distance=8] (V4); 
   	\draw[lightgray,->] (V5) to[out=310,in=290,distance=8] (V5); 
    \draw[lightgray,->] (V4) -- (V5); 
    \draw[lightgray,->] (V5) to[out=240,in=300,distance=8] (V4);
    \draw[lightgray,<->] (V1) -- (V2); 
    \draw[lightgray,<->] (V2) -- (V3); 
    \draw[lightgray,<->] (V1) to[out=60,in=120,distance=20] (V3);
    \draw[lightgray,->] (V1) to[out=260,in=280,distance=8] (V1); 
   	\draw[lightgray,->] (V2) to[out=260,in=280,distance=8] (V2); 
   	\draw[->] (V3) to[out=260,in=280,distance=8] (V3); 

    \draw[->] (V4) to[out=120,in=180,distance=10] node[pos=0.6, above, yshift=1pt] {$\scriptstyle \{\bp,\bq\}$} (STAR); 
    \draw[lightgray,->] (V5) to[out=60,in=0,distance=10] node[pos=0.6, above, yshift=1pt] {$\scriptstyle \{\bp,\bq\}$} (STAR); 
    
    \node at (V2) [black, below, yshift=-8pt] {$\scriptstyle \left\{\substack{(\bp,\br,\boolF)\\(\bq,\bq,\boolT)\\(\br,\br,\boolF)}\right\}$};
    \node at ($(V4)!0.5!(V5)$) [black, below, yshift=-8pt] {$\scriptstyle \left\{\substack{(\bp,\bq,\boolT)\\(\bq,\bq,\boolT)\\(\br,\br,\boolF)}\right\}$};
\end{tikzpicture}} 
\\
\bMM(\hJ) = \scalebox{0.6}{\begin{tikzpicture}[thick, baseline=-3.5pt]
    %\node (V1) at (0*\sx,0) [lightgray,ver1,circle] {};
    \node (V2) at (1*\sx,0) [lightgray,ver1,circle] {};
    \node (V3) at (2*\sx,0) [ver1,circle] {};
    \node (E1) at (3*\sx,0) [ver1,mod] {};
    \node (V4) at (4*\sx,0) [ver1,circle] {};
    \node (V5) at (5*\sx,0) [lightgray,ver1,circle] {};
    \node (STAR) at (4.5*\sx,0.5) [color=white,text=black] {$\star$};
    
    \draw     (E1) -- (V3) node[pos=0.5, above, yshift=-1.5pt] {$\mathbf 1$};
    \draw     (E1) -- (V4) node[pos=0.5, above, yshift=-1.5pt] {$\mathbf 2$};
   	\draw[->] (V4) to[out=230,in=250,distance=8] (V4); 
   	\draw[lightgray,->] (V5) to[out=310,in=290,distance=8] (V5); 
    \draw[lightgray,->] (V4) -- (V5); 
    \draw[lightgray,->] (V5) to[out=240,in=300,distance=8] (V4);
    %\draw[lightgray,<->] (V1) -- (V2); 
    \draw[lightgray,<->] (V2) -- (V3); 
    %\draw[lightgray,<->] (V1) to[out=60,in=120,distance=20] (V3);
    %\draw[lightgray,->] (V1) to[out=260,in=280,distance=8] (V1); 
   	\draw[lightgray,->] (V2) to[out=260,in=280,distance=8] (V2); 
   	\draw[->] (V3) to[out=260,in=280,distance=8] (V3); 

    \draw[->] (V4) to[out=120,in=180,distance=10] node[pos=0.6, above, yshift=1pt] {$\scriptstyle \{\bp,\bq\}$} (STAR); 
    \draw[lightgray,->] (V5) to[out=60,in=0,distance=10] node[pos=0.6, above, yshift=1pt] {$\scriptstyle \{\bp,\bq\}$} (STAR); 
    
    \node at ($(V2)!0.5!(V3)$) [black, below, yshift=-8pt] {$\scriptstyle \left\{\substack{(\bp,\br,\boolF)\\(\bq,\bq,\boolT)\\(\br,\br,\boolF)}\right\}$};
    \node at ($(V4)!0.5!(V5)$) [black, below, yshift=-8pt] {$\scriptstyle \left\{\substack{(\bp,\bq,\boolT)\\(\bq,\bq,\boolT)\\(\br,\br,\boolF)}\right\}$};
\end{tikzpicture}}
\end{array}
$$
\endgroup	
\caption{$\bMM$-behaviour of $\hH$ and $\hJ$.}
\label{fig:meq-ex-m}
\end{subfigure}
\vspace{-0.5em}
\caption{B\"uchi automaton $\bMM$ for $(\cred\ U\ \cblue)$ and its behaviour on HGs.}
\label{fig:meq-ex}
\end{figure}
\begin{definition}[$\bMM$-equivalence for views]	 \label{def:graph-meq}
Let $\hVV,\hVV' \in \hHHV$ with coupling $(\eta,\mu)$ and mapping $\nu$.
Then $\hVV$ is \emph{$\bMM$-equivalent} to $\hVV'$ if all $\hu,\hv \in X_{\hVV}$ satisfy
\begin{align*}
	\Traces_\hVV^*(x,y) \equiv_\bMM \Traces_{\hVV'}^*(\nu(x),\nu(y))	~\text{and}~~ % \qquad 
	\Traces_\hVV^\omega(x) \equiv_\bMM \Traces_{\hVV'}^\omega(\nu(x)).
\end{align*}
\end{definition}
\Cref{fig:meq-ex-a} shows a B\"uchi automaton and \Cref{fig:meq-ex-g} shows two example HGs. 
The HGVs $(\hH,\emptyset)$ and $(\hJ,\emptyset)$ are $\bMM$-equivalent. To see this, note that the two graphs have the same traces that start or end in $\hv_2$.
While the respective sets $\Traces^*(\hv_1,\hv_1)$ and $\Traces^\omega(\hv_1)$ differ for $\hH$ and $\hJ$, they are indistinguishable by $\bMM$ as all such paths transition $\bMM$ to the sink state $\br$.

$\bMM$-equivalence enjoys two important properties. 
First, it is a congruence.
\begin{reptheorem}{lem:meq-cong}
$\bMM$-equivalence is an HGV-congruence.
\end{reptheorem}
Second, for $\bMM$-equivalent graphs, the B\"uchi automaton $\bMM$ cannot distinguish between the traces of exposed nodes that are coupled (via $\nu$):
\begin{replemma}{lem:meq-sem}
Let $\hVV$ and $\hVV'$ be $\bMM$-equivalent via the mapping $\nu$.
Then, for all exposed nodes $x \in X_{\hVV}$, we have~~
$(\hVV,x) \models \bMM \iff (\hVV',\nu(x)) \models \bMM$.
\end{replemma}
The above lemma justifies lifting %the satisfaction relation
 $\models$ from HVGs to their $\bMM$-equivalence classes. 

We are now in a position to define a solution for the B\"uchi recoloring problem for a given HRG $\gGG$.
Morally, we consider every HG $\hH \in \lLL(\gGG)$ and any node $\hv \in \hV_{\hH}$.
To check whether $v$ needs to be colored with $\bMM$, we have to decide whether $\hH,\hv \models \bMM$ holds.
To this end, we decompose $\hH$ into $\repl{\hJ}{\repl{\gR}{\fml\hK}}$ such that $\hv \in \hV_{\gR}$. We can then extract the $\bMM$-equivalence class of $\hJ$ and each $\fml\hK(\he)$ from the refined grammar $\gGG_{\equiv\bMM}$ in order to compute the $\bMM$-equivalence class $\eqclass{(\repl{\hJ}{\repl{\gR}{\fml\hK}}, \hV_\gR)}$. \Cref{lem:meq-sem}, then guarantees that we can check whether the automaton $\bMM$ is satisfied on each exposed node $\hV_\gR$, including $\hv$. 

\begin{definition}[Grammar recoloring]\label{def:gram-recol}
Let $\gGG$ be an \HRG. Then, we define its $\bMM$-recoloring $\gGG_{\bMM} = (\gN_{\gGG_{\equiv\bMM}},\gP,\gZ_{\gGG_{\equiv\bMM}})$ where
$$
\gP = \left\{\grule{\gR'}{\fml\gB \zip \fml\gr \zip \fml\gy}{(\gA,\gs,\gx)} \middle\bracevert \substack{\grule{\gR}{\fml\gB \zip \fml\gr \zip \fml\gy}{(\gA,\gs,\gx)} \in \gP_{\gGG_{\equiv\bMM}} \land \gR = \cmDel{\bMM}{\gR'} \land \\\forall \hv \in \hV_{\gR'}.\; \bMM \in \hC_{\gR'}(\hv) \iff (\repl{\gx}{\repl{(\gR',\cmU_{\gR'})}{\fml{\gr}}}, \hv) \models \bMM} \right\}
$$
\end{definition}
\begin{reptheorem}[Soundness and Completeness]{thm:rec-gram-ok}
	Let $\gGG$ be a \HRG. Then,
	$$
		\forall \hH \in \lLL(\gGG_{\bMM}) \: \forall \hv \in \hV_{\hH}.\; \bMM \in \hC_{\hH}(\hv) \iff (\hH,\hv) \models \bMM.
	$$
\end{reptheorem}
\begin{proof}
Let $\hH \in \lLL(\gGG_\bMM)$, and let $\hv \in \hV_{\hH}$. By \Cref{lem:hg-decomposition}, there exists a rule $\grule\gR{\fml\gB \zip \fml\gr \zip \fml\gy}{(\gA,\gs,\gx)} \in \gP_{\gGG_\bMM}$, a context $\hJ \in \gCC_{(\gA,\gs,\gx)}(\gGG_\bMM)$ and an assignments $\fml{\hK} : \hE_\gR \rightarrow \hHH$ with each $\fml\hK(\he) \in \lLL_{(\fml\gB(\he),\fml\gr(\he),\fml\gy(\he))}(\gGG_\bMM)$ such that $\hH = \repl{\hJ}{\repl{\gR}{\fml{\hK}}}$ and $\hv \in \hV_{\gR}$. Thus,
\begin{align*}
	\bMM \in \hC_{\hH}(\hv) 
	  &\iff \bMM \in \hC_{\repl{\hJ}{\repl{\gR}{\fml{\hK}}}}(\hv) \tag{\Cref{lem:hg-decomposition}}
	\\&\iff \bMM \in \hC_{\gR}(\hv)  \tag{$v \in \hV_\gR$}
	\\&\iff (\repl{\gx}{\repl{(\gR, \cmU_\gR)}{\fml{\gs}}}, \hv) \models \bMM. \tag{\Cref{def:gram-recol}}
	\intertext{By \Cref{thm:ref-ok}, we have that $\gx$ is the equivalence class of $(\hJ, \emptyset)$ and $\fml\gs(\he)$ is the equivalence class of $(\fml\hK(\he), \emptyset)$. Then, by \Cref{lem:meq-cong}, $\repl{\gx}{\repl{\eqclass{(\gR, \cmU_\gR)}}{\fml{\gs}}}$ is the equivalence class of $(\repl{\hJ}{\repl{\gR}{\fml{\hK}}}, \hV_\gR)$. Hence, }
	&\iff (\repl{\hJ}{\repl{\gR}{\fml{\hK}}}, \hv) \models \bMM \tag{\Cref{lem:meq-sem}}
	\\&\iff (\hH, \hv) \models \bMM. \tag{\Cref{lem:hg-decomposition}, backwards}
\end{align*}
\end{proof}

\paragraph{Representing equivalence classes.}
In order to effectively check $\bMM$-equivalence, we introduce a canonical representation of HGs, called $\bMM$-behaviours.
\begin{definition}\label{def:mbhv}
The \emph{$\bMM$-behaviour} of a hypergraph $\hH$ is the HG \[\bMM(H) = (\hV_\hH \uplus \{\star\},\hn_\hH,{\rightsquigarrow^*} \uplus {\rightsquigarrow^\omega},\hE_\hH,\hd_\hH,\bot,\bot), \] where ${\rightsquigarrow^*} \subseteq \hW \times 2^{\bQ \times \bQ \times \Bool} \times \hW$ and ${\rightsquigarrow^\omega} \subseteq \hW \times 2^{\bQ} \times \{\star\}$ are given by
\begin{align*}
	\hu \rightsquigarrow_f^* \hv &\iff \exists \seq\hc \in \Traces_\hH^*(\hu,\hv).\; f = \{ (\bp,\bq,\boolb) \mid \bD^*(\bp, \seq\hc, \bq, \boolb) \} \\
	\hu \rightsquigarrow_f^\omega \star &\iff \exists \seq\hc \in \Traces_\hH^\omega(\hu).\; f = \{ \bp \mid \bD^\omega(\bp,\seq\hc,\boolT)\}.
\end{align*}
\end{definition}
\Cref{fig:meq-ex-m} depicts the $\bMM$-behaviours of each HG in \Cref{fig:meq-ex-g}.
Intuitively, an $\bMM$-behavior of $\hH$ captures, for every pair of nodes $\hu$, $\hv$ whether $\bMM$ can transition from state $\bp$ to $\bq$ when reading any trace on a path from $\hu$ to $\hv$. Moreover, for each loop in $\hH$, it captures whether $\bMM$ accepts the corresponding infinite trace.\footnote{A procedure to construct $\bMM$-behaviours is provided in~\cite[A.1]{techRep}.} % \Cref{appendix:mbhv} \todo{TODO}.}

The following theorem 
shows that checking $\bMM$-equivalence boils down to checking isomorphisms on $\bMM$-behaviours with views.

\begin{reptheorem}{lem:meq-mbhv}
For all $(\hH,U),(\hH',U') \in \hHHV$, we have
\begin{align*}
&	
(\eta,\mu) \vdash (\hH,U) \equiv_\bMM (\hH',U') \\ %\qquad \text{if and only if}	
\qiff & 
(\eta \uplus \{(\star,\star)\},\mu) \vdash (\bMM(\hH),U \uplus \{\star\}) \cong (\bMM(\hH'),U' \uplus \{\star\}).
\end{align*}
\end{reptheorem}
For our example in \Cref{fig:meq-ex}, the black portion of the $\bMM$-behaviours in \Cref{fig:meq-ex-m} is exposed, showing that $(\hH,\emptyset)$ and $(\hJ,\emptyset)$ are indeed $\bMM$-equivalent. 

We use the above result to reason about the number of equivalence classes that can be constructed during grammar recoloring.
\begin{reptheorem}{thm:amt-meq}
For every HRG $\gGG$, we have
\[
|\eqclass{(\lLL_\gA(\gGG),\emptyset)}_{\equiv\bMM}| ~+~ |\eqclass{(\gCC_\gA(\gGG),\emptyset)}_{\equiv\bMM}|
	\quad=\quad 
	\bigO\big(2^{(\arity(\gA)^2 \times 2^{|\bQ|^2})}\big)
.
\]
Furthermore, for the recolored HRG $\gGG_\bMM$ with maximum number of hyperedges in rules $|\hE|$ and maximum arity of nonterminals $|\hI|$, we have
\[
	|\gGG_\bMM| 
	\quad=\quad
	\bigO\big(|\gP| \times |\gN|^{|\hE|} \times 2^{(|\hE| \times |\hI|^2 \times 2^{|\bQ|^2})}\big).
\]
\end{reptheorem}

\begin{figure}[t]
\centering	
\tikzset{
diagonal fill/.style 2 args={fill=#2, path picture={
\fill[#1, sharp corners] (path picture bounding box.south west) -|
                         (path picture bounding box.north east) -- cycle;}},
reversed diagonal fill/.style 2 args={fill=#2, path picture={
\fill[#1, sharp corners] (path picture bounding box.north west) |- 
                         (path picture bounding box.south east) -- cycle;}}
}
\begin{subfigure}[b]{.7\linewidth}
\centering	
\scalebox{0.65}{\begin{tikzpicture}[
node distance = 7mm and 7mm,
 Arr/.style = {thick,shorten >=8pt,shorten <=8pt},
 E/.style = {draw, thick, inner sep=0mm,
     minimum height=10mm, minimum width=26mm, },
 F/.style = {E, minimum height=7mm, minimum width=16mm},
 I/.style = {circle,fill=black, inner sep=0mm, minimum size=2mm},
 X/.style = {inner sep=0mm, minimum size=0mm},
 M/.style = {draw,inner sep=0mm, minimum size=1.5mm},
 V/.style = {circle,draw, inner sep=0mm, minimum size=2mm},
 VR/.style = {circle,draw,fill=red!40, inner sep=0mm, minimum size=2mm},
DL/.style = {thick, densely dotted, Latex-Latex,
             shorten >=#1, shorten <=#1},
DL/.default = 4mm,
L/.style = {-{Classical TikZ Rightarrow[width=4pt]}},
BL/.style = {->,thick, looseness=4},
IBL/.style = {L,thick, bend left},
IBR/.style = {L,thick, bend right},
ILL/.style = {L,thick, loop left, looseness=14},
ILR/.style = {L,thick, loop right, looseness=14},
ILU/.style = {L,thick, loop above, looseness=14},
T/.style = {opacity=0.2}]
 \newcommand{\nodeof}[2][]{\node(#2)[E,#1]{};\node[at=(#2), xshift=-10mm]{#2};\draw[thick,dashed]($(#2.north)+(-7mm,0)$)--($(#2.south)+(-7mm,0)$);\draw[thick,dashed]($(#2.north)+(3mm,0)$)--($(#2.south)+(3mm,0)$);\node(#2L)[at=(#2), xshift=-2mm,yshift=.5mm]{};\node(#2R)[at=(#2),xshift=+8mm,yshift=.5mm]{}}
    \def\xb{0}\def\yb{0}
    \draw[->,Arr] (\xb+1,\yb) to[out=-60,in=-120,distance=15] (\xb+2,\yb);
    \draw[->,Arr] (\xb+2,\yb) to[out=+120,in=+60,distance=15] (\xb+1,\yb);
    \draw[Arr]    (\xb+2,\yb) -- (\xb+3,\yb) node[pos=0.5, above, yshift=-1.5pt] {$\scriptstyle \mathbf 1$};
    \draw[Arr]   (\xb+5,\yb) -- (\xb+6,\yb) node[pos=0.5, above, yshift=-1.5pt] {$\scriptstyle \mathbf 2$};
    \draw (\xb+1,\yb) node[ver,inp] {$\mathbf{1}$};
    \draw (\xb+2,\yb) node[ver,circle,col=red] {};
    \draw (\xb+6,\yb) node[ver,inp] {$\mathbf{2}$};
    \draw (\xb-0.5,\yb) node {$:=$};
    \draw (\xb+0.25,\yb) node {$\gR_2$:};
 \node at (\xb-2.70,\yb+0.75) {$\gs$};
 \node at (\xb-1.70,\yb+0.75) {$\gx$};
 \node at (\xb+3.80,\yb+0.75) {$\gr$};
 \node at (\xb+4.80,\yb+0.75) {$\gy$};
 \nodeof[at={(\xb-2.5,\yb)}]{A};
 \node (AL1) [I, at=(AL), xshift=-3mm] {};
 \node (AL2) [I, at=(AL), xshift=3mm] {};
 \draw[ILU] (AL1) to (AL1);
 \draw[ILU] (AL2) to (AL2);
 \draw[IBL] (AL1) to (AL2);
 \draw[IBL] (AL2) to (AL1);
 \node (AR0) [M, at=(AR)] {};
 \node (AR1) [V, at=(AR), xshift=-3mm] {};
 \node (AR2) [V, at=(AR), xshift=3mm] {};
 \node (AR3) [X, at=(AR), yshift=-4mm] {$\star$};
 \draw[ILU] (AR1) to (AR1);
 \draw[ILU, blue] (AR2) to (AR2);
 \draw[IBL, blue] (AR2) to (AR3);
 \draw[IBR] (AR1) to (AR3);
 \draw (AR1) -- (AR0) -- (AR2);
 \nodeof[at={(\xb+4,\yb)}]{A};
 \node (AL1) [I, at=(AL), xshift=-3mm] {};
 \node (AL2) [I, at=(AL), xshift=3mm] {};
 \draw[IBL] (AL1) to (AL2);
 \draw[IBL] (AL2) to (AL1);
 \node (AR1) [V, at=(AR), xshift=-3mm] {};
 \node (AR0) [M, at=(AR)] {};
 \node (AR2) [V, at=(AR), xshift=3mm] {};
 \node (AR3) [X, at=(AR), yshift=-4mm] {$\star$};
 \draw[ILU] (AR1) to (AR1);
 \draw[ILU, blue] (AR2) to (AR2);
 \draw[IBL, blue] (AR2) to (AR3);
 \draw[IBR] (AR1) to (AR3);
 \draw (AR1) -- (AR0) -- (AR2);
\end{tikzpicture}}
\caption{\new{A rule of the refined HRG obtained from \Cref{def:refine}.}}
\label{fig:ex-refine}
\end{subfigure}
%\\
\begin{subfigure}[b]{.25\linewidth}
\centering	
\scalebox{0.65}{\begin{tikzpicture}[
node distance = 7mm and 7mm,
 Arr/.style = {thick,shorten >=8pt,shorten <=8pt},
 E/.style = {draw, thick, inner sep=0mm,
     minimum height=10mm, minimum width=26mm, },
 F/.style = {E, minimum height=7mm, minimum width=16mm},
 I/.style = {circle,fill=black, inner sep=0mm, minimum size=2mm},
 X/.style = {inner sep=0mm, minimum size=0mm},
 M/.style = {draw,inner sep=0mm, minimum size=1.5mm},
 V/.style = {circle,draw, inner sep=0mm, minimum size=2mm},
 VR/.style = {circle,draw,fill=red!40, inner sep=0mm, minimum size=2mm},
DL/.style = {thick, densely dotted, Latex-Latex,
             shorten >=#1, shorten <=#1},
DL/.default = 4mm,
L/.style = {-{Classical TikZ Rightarrow[width=4pt]}},
BL/.style = {->,thick, looseness=4},
IBL/.style = {L,thick, bend left},
IBR/.style = {L,thick, bend right},
ILL/.style = {L,thick, loop left, looseness=14},
ILR/.style = {L,thick, loop right, looseness=14},
ILU/.style = {L,thick, loop above, looseness=14},
T/.style = {opacity=0.2}]
 \newcommand{\nodeof}[2][]{\node(#2)[E,#1]{};\node[at=(#2), xshift=-10mm]{#2};\draw[thick,dashed]($(#2.north)+(-7mm,0)$)--($(#2.south)+(-7mm,0)$);\draw[thick,dashed]($(#2.north)+(3mm,0)$)--($(#2.south)+(3mm,0)$);\node(#2L)[at=(#2), xshift=-2mm,yshift=.5mm]{};\node(#2R)[at=(#2),xshift=+8mm,yshift=.5mm]{}}

    \def\xb{0}\def\yb{-3}
    \draw[->,Arr] (\xb+1,\yb) to[out=-60,in=-120,distance=15] (\xb+2,\yb);
    \draw[->,Arr] (\xb+2,\yb) to[out=+120,in=+60,distance=15] (\xb+1,\yb);
    \draw[->,Arr] (\xb+2,\yb) to[out=-60,in=-120,distance=15] (\xb+3,\yb);
    \draw[->,Arr] (\xb+3,\yb) to[out=+120,in=+60,distance=15] (\xb+2,\yb);
    \draw (\xb+1,\yb) node[ver,circle,thick] {};
    \draw (\xb+2,\yb) node[ver,circle, fill=red!40] {};
    \draw (\xb+3,\yb) node[ver,circle,thick] {};
    \draw (\xb+2,\yb-1) node {\scalebox{2}{$\star$}};
    \draw[->,Arr] (\xb+1,\yb) to[out=210,in=150,distance=30] (\xb+1,\yb);
    \draw[->,Arr,blue] (\xb+3,\yb) to[out=390,in=330,distance=30] (\xb+3,\yb);
    \draw[->,Arr] (\xb+1,\yb) to[out=-90,in=180,distance=15] (\xb+2,\yb-1);
    \draw[->,Arr,blue] (\xb+3,\yb) to[out=-90,in=0,distance=15] (\xb+2,\yb-1);
\end{tikzpicture}}
\caption{\new{$\repl{\gx}{\repl{(\gR_2,\cmU_{\gR_2})}{\gr}}$}}
\label{fig:ex-xrs}
\end{subfigure}
\caption{\new{Refining and recoloring an HRG rule from \Cref{fig:grammar-dll} with respect to $F \cblues$.}}
\label{fig:grammar-recol_ex}
\end{figure}
%
%\begin{example}\label{ex:refine-recol}
\noindent
To conclude this section, we illustrate how refinement and recoloring are applied to the HRG in \Cref{fig:grammar-dll} and the $\bMM$-equivalence obtained from $F \cblues$.
\Cref{fig:ex-refine} shows one out of 11 rules of the refined HRG; the other rules are found in \cite[A.2]{techRep}.
 The rule's nonterminals have been augmented by two $\bMM$-equivalence classes, representing (1) the behaviour of all graphs in the language generated by the nonterminal, and (2) the behaviour of all contexts generated by the nonterminal.
Similarly to \Cref{fig:meq-ex-m}, we illustrate the portion of $\bMM$-behaviours exposed by the empty view.
For simplicity, we represent paths that (do not) pass through a blue node by blue (black) arrows.
The rule in \Cref{fig:ex-refine} satisfies the constraints of \Cref{def:refine}, because $\gs = \repl{(\gR_2, \emptyset)}{\gr}$ and $\gy = \repl{\gx}{(\gR_2, \emptyset)}$.
After computing the refined HRG, we apply \Cref{def:gram-recol} by computing $\repl{\gx}{\repl{(\gR,\cmU_{\gR})}{\gr}}$ for each rule.
\Cref{fig:ex-xrs} illustrates the result for the rule in \Cref{fig:ex-refine}. Since there exists a trace that starts in the node added by $\gR_2$ (red) and takes only black edges, the node does not satisfy $F \cblues$. Hence, we do not add a new color to it.

\section{The recoloring algorithm for LTL and CTL*} \label{sec:impl}
\SetKwFunction{AddColor}{add}
\SetKwFunction{AddColorB}{$\textttn{add}_\Psi$}
\SetKwFunction{AddColorN}{$\textttn{add}_{\neg\Phi_1}$}
\SetKwFunction{AddColorA}{$\textttn{add}_{\Phi_1 \land \Phi_2}$}
\SetKwFunction{Minimize}{Minimize}
\SetKwFunction{Buchi}{B\"uchi}
\SetKwFunction{CheckC}{$\textttn{Recolor}_{\text{CTL*}}$}
\SetKwFunction{CheckL}{$\textttn{Recolor}_{\text{LTL}}$}
\SetKwFunction{AnnotateA}{$\textttn{Annotate}_1$}
\SetKwFunction{AnnotateB}{$\textttn{Annotate}_2$}

\SetKw{Closure}{compute closure of}
\SetKw{Fun}{Fun}
\SetKw{Returns}{return}

\begin{figure}[t]
\centering
\include{fig_code.tex}
\vspace{-1.5em}
\caption{Algorithms for HRG recoloring wrt. LTL and CTL* formulae.}
\label{algo:ctl_algo}
\end{figure}

We now present algorithms for recoloring HRGs with respect to formulae in \LTL and \CTLS (cf.~\cite{baierKatoen2008}) based on $\bMM$-behaviours.
%As discussed previously, t
These algorithms immediately yield decision procedures for \LTL and \CTLS model checking of families.
\Cref{algo:ctl_algo} summarizes our algorithms. 
%For convenience, 
We denote by $\texttt{labels}(\gP)$ an auxiliary function that returns all nonterminals that appear in the set of production rules $\gP$.

The algorithm \CheckL takes as input an HRG $\gGG$ and an \LTL formula $\varphi$. It then constructs the corresponding recolored \HRG as described in \Cref{sec:meq}.
The algorithm first converts $\varphi$ into a B\"uchi automaton $\bMM$.
It then invokes the procedure \AnnotateA, which annotates each nonterminal of the grammar with the first of the two equivalence classes, i.e. equivalence classes representing the graphs plugged into the hyperedges of production rules.
To this end, \AnnotateA iteratively computes production rules enriched by equivalence classes by applying \HRG rules to those obtained in previous iterations (starting with an empty set of rules). Intuitively, this process corresponds to a bottom-up computation of the equivalence classes of progressively larger elements of the \HRG's language.

After \AnnotateA has reached a fixed point, algorithm \CheckL calls the procedure \AnnotateB which assigns to each nonterminal its second equivalence class, i.e. the context in which rules are applied. As before, new equivalence classes are derived by repeatedly applying grammar rules to the previously obtained ones, until a fixed point is reached. Dually to \AnnotateA, this process intuitively corresponds to a top-down computation of the equivalence classes of progressively larger \emph{contexts}, using the equivalence classes computed by \AnnotateA to fill all but one of the holes. Since the $\bMM$-equivalence relation only allows for finitely many equivalence classes, the procedure is guaranteed to terminate, evaluating to the refined grammar $\gGG_\bMM$ from \Cref{def:refine}. 

Finally, \CheckL applies the function \AddColorB to color all relevant nodes with $\varphi$. Specifically, \AddColorB takes as input a function $f$ that maps rule-node pairs $(\gR,v)$ to Boolean values, and colors $v$ with $\varphi$ iff $f(\gR,v)$ evaluates to true. 
The considered condition is the same as in \Cref{def:gram-recol}, see~\Cref{algo:ctl_algo}.

The algorithm \CheckC in \Cref{algo:ctl_algo} (bottom right) shows how to extend our LTL recoloring algorithm to \CTLS, inspired by a standard algorithm for turning LTL model checkers into \CTLS model checkers~\cite[p. 427]{baierKatoen2008}.
Given an \HRG and a \CTLS formula $\Psi$, \CheckC reasons about \CTLS properties recursively.
Whenever we need to recolor a formula of the form $\forall\Psi$ containing CTL* subformulae $\Phi_1, \dots, \Phi_n$, we first recolor the \HRG according to each property $\Phi_i$, and then check $\forall\Psi$ as if each $\Phi_i$ were an atomic property (i.e. by reasoning directly on the colors). If the CTL* formula under consideration is instead a conjunction or a negation of CTL* subformulae, the grammar is first recolored according to those subformulae, and then each rule is recolored locally (e.g. a node is colored $\neg \alpha$ iff it is not colored $\alpha$). 
To reduce state space explosion, we minimize\footnote{This is an optional step. We discuss implementation details in \Cref{sec:eval}.} the \HRG after every invocation of \CheckL, since the annotations on nonterminals are no longer needed at that point. 

As discussed in \Cref{sec:problem}, once we have computed a recolored \HRG, model checking \LTL or \CTLS on families is straightforward. The same holds for the qualitative fragment of probabilistic computation tree logic (qPCTL) because, for finite-state models, qPCTL can be embedded into CTL~\cite[p. 791]{baierKatoen2008}.

\section{Implementation and Evaluation} \label{sec:eval}

\def\nhrgs{3\xspace}
\def\nprops{11\xspace}

We implemented a prototypical tool in Haskell (ca. 2k LOC) that covers our recoloring algorithms from \Cref{sec:impl} and \LTL, \CTLS, and qPCTL model checkers for families.
To translate LTL formulae into B\"uchi automata, we use LTL2BA~\cite{ltl2ba}.
Furthermore, our tool applies the following optimizations:
\begin{enumerate*}[label=(\roman*)]
	\item To minimize \HRGs after recoloring, we first translate them into tree automata accepting the grammar's derivation trees and then reduce their size using the VATA library~\cite{VATA,MINOTAUT} with the MinOTAut module. We also implemented a more expensive standard minimization algorithm for tree automata~\cite[p. 37]{TATA}.
	\item We prune the recolored \HRG whenever colors are known to be irrelevant for subsequent recolorings.
	\item The LTL recoloring algorithm is applied to subformulae in parallel and merges the results afterward. This optimization both enables concurrent execution and yields smaller grammars, because fewer colors are required.
\end{enumerate*}

\begin{figure}[t]
\centering
\begin{subfigure}[c]{0.8\linewidth}
\centering
\scalebox{0.6}{\begin{tabular}{|l|l|l|l|l|l|l|l|l|l|}
\hline
\textbf{Case} & \textbf{Time (s)} & \textbf{Sat} & \textbf{Fal} & \textbf{R. \#\gN} & \textbf{R. \#\gP} & \textbf{M. \#\gN} & \textbf{M. \#\gP}  & \textbf{\#\hE} & \textbf{\#\hI}  \\
\hline\hline\hline\multicolumn{6}{|l|}{\textit{IPv4 zeroconf}} &2 & 3 & 1 & 3\\\hline
$(\forall F G\ \cblues) \lor (\forall G\ (\exists F\ \creds)))$ & 0.67 & $\infty$ & $0$ & 5 & 8 & 2 & 3 & \multicolumn{2}{l}{} \\\cline{1-8}
$X_{>0}(X_{>0}(X_{>0}\ \cblues))$ & 0.61 & \emph{fin} & $\infty$ & 12 & 18 & 5 & 9 & \multicolumn{2}{l}{} \\\cline{1-8}
$G_{=1}(F_{>0}(X_{=1}(\creds \lor \cblues) \land X_{>0}(\neg \cblues \land F_{>0}\ \cblues))$ & 1.04 & $0$ & $\infty$ & 5 & 8 & 2 & 3 & \multicolumn{2}{l}{} \\\cline{1-8}
\hline\hline\hline\multicolumn{6}{|l|}{\textit{Trees of arbitrary arity}} &3 & 8 & 2 & 1\\\hline
$\neg \exists (\cblues\ U\ \neg\cblues)$ & 0.59 & $\infty$ & $\infty$ & 17 & 140 & 5 & 22 & \multicolumn{2}{l}{} \\\cline{1-8}
$\forall G(\creds \lor \neg \cblues)$ & 0.42 & $\infty$ & $\infty$ & 9 & 74 & 5 & 22 & \multicolumn{2}{l}{} \\\cline{1-8}
$\neg(\exists F \cblues) \land \forall G(\creds \lor \neg \cblues) \land \forall G(\neg \creds)$ & 0.91 & $0$ & $\infty$ & 9 & 74 & 5 & 22 & \multicolumn{2}{l}{} \\\cline{1-8}
\hline\hline\hline\multicolumn{6}{|l|}{\textit{Series parallel graphs}} &2 & 4 & 2 & 2\\\hline
$\neg \exists (\cblues\ U\ \neg\cblues)$ & 0.46 & $0$ & $\infty$ & 16 & 98 & 2 & 4 & \multicolumn{2}{l}{} \\\cline{1-8}
$\forall G(\creds \lor \neg \cblues)$ & 0.45 & $0$ & $\infty$ & 16 & 98 & 2 & 4 & \multicolumn{2}{l}{} \\\cline{1-8}
$\neg(\exists F \cblues) \land \forall G(\creds \lor \neg \cblues) \land \forall G(\neg \creds)$ & 0.83 & $0$ & $\infty$ & 16 & 98 & 2 & 4 & \multicolumn{2}{l}{} \\\cline{1-8}
\hline\hline\hline\multicolumn{6}{|l|}{\textit{Sierpinski triangles}} &2 & 3 & 3 & 3\\\hline
$\exists F\ \cblues$ & 0.46 & $\infty$ & $0$ & 13 & 56 & 2 & 3 & \multicolumn{2}{l}{} \\\cline{1-8}
$\forall G(\exists F\ \cblues)$ & 0.60 & $\infty$ & $0$ & 13 & 56 & 2 & 3 & \multicolumn{2}{l}{} \\\cline{1-8}
$\exists F(\exists G\ \cblues)$ & 0.74 & $\infty$ & $0$ & 13 & 56 & 2 & 3 & \multicolumn{2}{l}{} \\\cline{1-8}
\hline\hline\hline\multicolumn{6}{|l|}{\textit{Bank attack trees}} &4 & 10 & 3 & 7\\\hline
$\exists F\ \creds$ & 17.04 & $\infty$ & $\infty$ & 78 & 1225 & 13 & 41 & \multicolumn{2}{l}{} \\\cline{1-8}
$\forall G\ \neg\cgreens$ & 9.80 & $\infty$ & $\infty$ & 30 & 421 & 9 & 26 & \multicolumn{2}{l}{} \\\cline{1-8}
$\exists X^{\le 2} (\creds \lor \cblues \lor \cgreens) \implies \exists F\ \cgreens$ & 298.45 & $\infty$ & \emph{fin} & 78 & 1225 & 13 & 41 & \multicolumn{2}{l}{} \\\cline{1-8}
\hline\hline\hline\multicolumn{6}{|l|}{\textit{Dining cryptographers}} &3 & 7 & 1 & 4\\\hline
$Correctness$ & 42.65 & $\infty$ & $0$ & 19 & 35 & 3 & 7 & \multicolumn{2}{l}{} \\\cline{1-8}
\end{tabular}}
\end{subfigure}
\begin{tabular}[c]{c}
%\begin{subfigure}[c]{0.14\linewidth}
%\centering
%\scalebox{0.4}{\begin{tikzpicture}[thick]
%	\def\ox{0}\def\oy{0}
%	\def\sx{1}\def\sy{0.75}
%    \node (V1) at (\ox-1*\sx,\oy+0*\sy) [ver1,circle,col=red] {};
%    \node (V2) at (\ox-1*\sx,\oy-1*\sy) [ver1,circle] {};
%    \node (V3) at (\ox-1*\sx,\oy-2*\sy) [ver1,circle] {};
%    \node (V4) at (\ox-1*\sx,\oy-3*\sy) [ver1,circle,col=blue] {};
%	\draw[->] (\ox+-.5,\oy+0) -- (V1);
%	\draw[->] (V1) -- (V2);
%	\draw[->] (V2) -- (V3);
%	\draw[->] (V3) -- (V4);
%    \draw[->] (V2) to[out=0,in=0,distance=5] (V1);
%    \draw[->] (V3) to[out=0,in=0,distance=10] (V1);
%    \draw[->] (V4) to[out=0,in=0,distance=15] (V1);
%\end{tikzpicture}}
%\caption{zeroconf}
%\label{fig:ex-proc}
%\end{subfigure}
%\\
\begin{subfigure}[c]{0.18\linewidth}
\centering
\scalebox{0.4}{\begin{tikzpicture}[thick]
	\def\ox{0}\def\oy{0}
	\def\sx{1}\def\sy{0.8}
    \node (V1) at (\ox+0*\sx,\oy+0*\sy) [ver1,circle,col=blue] {};
    \node (V2a) at (\ox-.5*\sx,\oy-1*\sy) [ver1,circle,col=red] {};
    \node (V2b) at (\ox+.5*\sx,\oy-1*\sy) [ver1,circle,col=red] {};
    \node (V3a) at (\ox-1.1*\sx,\oy-2*\sy) [ver1,circle,col=red] {};
    \node (V3b) at (\ox-.5*\sx,\oy-2*\sy) [ver1,circle,col=blue] {};
    \node (V3c) at (\ox+0.1*\sx,\oy-2*\sy) [ver1,circle,col=blue] {};
   	\draw[->] (\ox-.5,\oy+0) -- (V1);
	\draw[->] (V1) -- (V2a);
	\draw[->] (V1) -- (V2b);
	\draw[->] (V2a) -- (V3a);
	\draw[->] (V2a) -- (V3b);
	\draw[->] (V2a) -- (V3c);
	\draw[->, loop below	] (V2b) to (V2b);
	\draw[->, loop below] (V3a) to (V3a);
    \draw[->, loop below] (V3b) to (V3b);
    \draw[->, loop below	] (V3c) to (V3c);

\end{tikzpicture}}
\caption{Tree}
\label{fig:ex-tree}
\end{subfigure}
\\[3em]
\begin{subfigure}[c]{0.14\linewidth}
\centering
\scalebox{0.4}{\begin{tikzpicture}[thick]
	\def\ox{0}\def\oy{0}
	\def\sx{0.75}\def\sy{0.75}
	\node (V1) at (\ox+0*\sx,\oy+0*\sy) [ver1,circle,col=red] {};
    \node (V2) at (\ox+1*\sx,\oy+1*\sy) [ver1,circle] {};
    \node (V3) at (\ox+1*\sx,\oy-1*\sy) [ver1,circle] {};
    \node (V4) at (\ox+3*\sx,\oy+1*\sy) [ver1,circle] {};
    \node (V5) at (\ox+2*\sx,\oy+0*\sy) [ver1,circle] {};
    \node (V6) at (\ox+2*\sx,\oy-1*\sy) [ver1,circle] {};
    \node (V7) at (\ox+3*\sx,\oy-1*\sy) [ver1,circle] {};
    \node (V8) at (\ox+4*\sx,\oy+0*\sy) [ver1,circle,col=blue] {};
   	\draw[->] (\ox-.5,\oy+0) -- (V1);
   	\draw[->] (V1) -- (V2);
   	\draw[->] (V2) -- (V4);
   	\draw[->] (V4) -- (V8);
   	\draw[->] (V1) -- (V3);
   	\draw[->] (V3) -- (V5);
   	\draw[->] (V3) -- (V6);
   	\draw[->] (V5) -- (V7);
   	\draw[->] (V6) -- (V7);
   	\draw[->] (V7) -- (V8);
   	\draw[->, loop right] (V8) to (V8);
\end{tikzpicture}}
\caption{SPG}
\label{fig:ex-spg}
\end{subfigure}
\\[3em]
\begin{subfigure}[c]{0.14\linewidth}
\centering
\scalebox{0.4}{\begin{tikzpicture}[thick]
	\def\ox{0}\def\oy{0}
	\def\sx{1}\def\sy{1.5}
	\node (V1) at (\ox+0*\sx,\oy+0*\sy) [ver1,circle,col=red] {};
    \node (V2) at (\ox-1*\sx,\oy-1*\sy) [ver1,circle,col=ForestGreen] {};
    \node (V3) at (\ox+1*\sx,\oy-1*\sy) [ver1,circle,col=blue] {};
    \node (V12) at ($(V1)!0.5!(V2)$) [ver1,circle] {};
    \node (V13) at ($(V1)!0.5!(V3)$) [ver1,circle] {};
    \node (V23) at ($(V2)!0.5!(V3)$) [ver1,circle] {};
   	\draw[->] (\ox-.5,\oy+0) -- (V1);
	\draw[->] (V1) -- (V12);
    \draw[->] (V12) -- (V2);
    \draw[->] (V12) -- (V13);
    \draw[->] (V12) -- (V23);
    \draw[->] (V1) -- (V13);
    \draw[->] (V13) -- (V3);
    \draw[->] (V13) -- (V23);
    \draw[->] (V2) -- (V23);
    \draw[->] (V23) -- (V3);
    \draw[->, loop right] (V3) to (V3);
\end{tikzpicture}}
\caption{Triangle}
\label{fig:ex-sierp}
\end{subfigure}
\end{tabular}
\caption{An excerpt of our experiments. \textbf{Time} is the total runtime.
\textbf{Sat} is the number of models in which the property is satisfied ($0$, \emph{fin}itely many, or $\infty$),
\textbf{Fal} is the dual number of family members that do not satisfy property.
\textbf{R. \#\gP} and \textbf{R. \#\gN} are the number of rules and nonterminals of the refined \HRG.
\textbf{M. \#\gP} and \textbf{M. \#\gN} indicate the number of rules and nonterminals after minimization of the recolored grammar;  the original HRG's size is in the grammar's row.
\textbf{\#\hE} and \textbf{\#\hI} are the maximum number of hyperedges and abstract nodes per rule.}
\label{table:eval}
\end{figure}
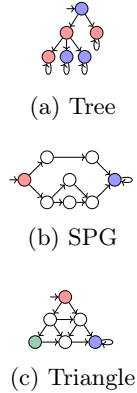

We evaluated our tool on 56 model checking benchmarks for various \LTL, \CTLS and qPCTL formulae and four families inspired by the literature.
\Cref{table:eval} shows an excerpt of our experimental evaluation; the full table is found in \cite[A.3]{techRep}.%\Cref{sec:full-experiments}.
All experiments were performed on a MacBook Pro M4 Pro with 24 GB RAM.
We run experiments on the following families:
\begin{itemize}
\item Labeled transition systems representing Markov chain models of the \emph{IPv4 zeroconf} protocol~(cf.~\cite{DBLP:journals/fmsd/KwiatkowskaNPS06} and \cite[p. 752]{baierKatoen2008}) for any number $k \in \Nats$ of probes and some fixed probability $p> 0$.
\item \emph{Trees of arbitrary arity} and with arbitrary colors, e.g. \Cref{fig:ex-tree}, where properties are checked from the root node. This benchmark effectively turns our tool into a CTL* satisfiability checker for tree models, since extracting a witness from the recolored grammar only requires finding a derivation tree which uses a rule that has been appropriately recolored. 
\item \emph{Series parallel graphs}~\cite{DBLP:conf/gg/DrewesKH97,DBLP:journals/pacmpl/AlurSW23} of arbitrary size. See \Cref{fig:ex-spg}.
\item \emph{Sierpinski triangles}~\cite{plump2009graph} of arbitrary size, with the top-most, left-most and right-most nodes identified by a different color, see \Cref{fig:ex-sierp} for an example.
\item \emph{Robbery} is a variant of the SafeLock case study from \cite{DBLP:journals/tse/BeekLLV20}, which considers software product lines for security-related applications. This system models an attempt to rob a bank and open a safe. Multiple features represent possibilities to do so, e.g. guessing the safe's combination, blackmail, gathering information, or brute force. Our model considers different times for those approaches and admits failures, repetitions of attempts, and the potential of being spotted and jailed. Among the verified properties are: (1) one is never jailed; (2) there exists an approach to open the safe; (3) fast and dangerous attempts will always mean that one eventually ends up in jail.
\item We prove correctness of the \emph{Dining cryptographers} protocol \cite{Cha88} for an arbitrary number of participants.
\end{itemize}
\noindent
Overall, our prototype manages to verify properties for all six families, often in less than a second.
Unsurprisingly (see \Cref{thm:amt-meq}), performance does not scale well with the maximum number of hyperedges and external nodes.

%, but performance does not scale well with the maximum number of hyperedges and external nodes. This is unsurprising given the recolored HRG's size, see \Cref{thm:amt-meq}.

\section{Related Work}\label{sec:related}
There is a vast body of literature on the verification of infinite-state systems based on, amongst others, pushdown automata (cf.~\cite{Song2011EfficientCM}), tree automata~\cite{Bouajjani2012AbstractR}, rewrite systems~\cite{Lding2002ModelCheckingIS,eker2004maude}, programs~\cite{Cook2015OnAO}, and hierarchical models~\cite{DBLP:journals/toplas/AlurY01,Lohrey2005ModelcheckingHS}.
This paper uses graph grammars for modeling and reasoning about \emph{infinite families of finite-state systems}.
We focus on two classes of related work: 
(1) model checking with graph grammars and (2) verification of families of finite-state models.

\emph{Reasoning about graph grammars.}
Burkart and Quemener~\cite{Burkart1996ModelcheckingOI} extended the state-labeling algorithm for CTL model checking to reason about infinite-state Kripke structures.
Apart from considering CTL instead of CTL*, they use HRGs to model a single infinite-state Kripke structure rather than an infinite family of finite-state systems.
Consequently, their HRGs consist of a single production rule with a single nonterminal hyperedge, which severely restricts expressiveness.

In principle, one can obtain a decision procedure for CTL* model checking of HRG families by encoding CTL* in monadic second order logic (MSO, cf.~\cite{DBLP:journals/corr/LaroussinieM14}) 
and then applying Courcelle's theorem~\cite{courcelle2012graph}, which states that it is decidable (with non-elementary complexity) whether all graphs generated by an HRG satisfy a given MSO formula over graphs.
By contrast, we developed and implemented a direct decision procedure based on the novel abstraction of $\bMM$-behaviours.
Our procedure yields a recolored HRG with the same production rules: The colors of every node in every rule are extended according to the satisfied CTL* formulae.
Hence, our algorithm naturally extends the classical state-labeling algorithm for CTL* model checking from finite-state systems to HRG families. 

Groove~\cite{Rensink2003TowardsMC,Rensink2008ExplicitSM} supports CTL model checking of graph languages by explicit enumeration.
The implemented algorithms are not decision procedures.

\emph{Families of finite-state models}
appear in various application areas.
	Software product lines (SPLs, cf.~\cite{DBLP:journals/csur/ThumAKSS14}) are families of models obtained by composing product features that may be switched on or off. 
Ensuring that all possible product configurations are safe thus amounts to reasoning about all family members.
A common approach is to assume that a system with all features is finite-state and then attempt to check all subsystems~\cite{DBLP:conf/icse/ClassenHSL11,DBLP:journals/fac/ChrszonDKB18}.
However, SPLs can describe infinite families, e.g. if features can be added to a system multiple times.
Pragmatic approaches thus fall back to testing a sample of all possible subsystems~\cite{DBLP:conf/se/KowalST15}.
This paper enables compositional CTL* model checking of infinite SPLs represented by graph grammars, which are used as a modeling formalism for SPLs~\cite{DBLP:journals/cera/DuJT02}.

For distributed systems, family verification is better known as parametric verification and has been extensively studied (cf.~\cite{DBLP:conf/stacs/Esparza14,DBLP:series/synthesis/2015Bloem}). 
While parametric verification has not been the focus of this work, some distributed systems have been modeled by HRGs~\cite{DBLP:conf/fmco/FerrariHLMT05} and can thus, in principle, be verified with our algorithm.

Finally, our algorithm enables both checking whether all or \emph{some} family member satisfies a CTL* formula. Hence, it can be used to check CTL* satisfiability with respect to models of interest: Given a formula and an HRG family, e.g. only tree-shaped or circular systems, does there exist a model of the formula that belongs to the family?
Such questions arise, for example, in bounded synthesis, which searches for a small system that satisfies an LTL formula (cf.~\cite{ehlers2012symbolic,DBLP:conf/stacs/KupfermanLVY11}).

\smallskip
\noindent
\emph{Conclusion.}
We studied model checking algorithms for graph grammars modeling infinite families of finite-state labeled transition systems against temporal properties written in CTL*.
Our algorithms verify whether all, some, or (in)finitely many family members satisfy a CTL* property.
We implemented our approach and presented on early experiments.
Future work includes improving grammar minimization and extending our approach to quantitative properties, e.g. simple fragments of PCTL, of families of Markov models.
%\footnote{We conjecture that full PCTL model checking of HRG families describing Markov models is undecidable due to recent undecidability results on PCTL satisfiability~\cite{DBLP:conf/lics/Chodil024}.}

\smallskip
\noindent
\emph{Acknowledgements.} We thank Alberto Lluch Lafuente for pointing us to the SafeLock case study.

\section*{Data-Availability Statement}

Our prototype software as well as all data and instructions required to reproduce our experiments are available online:
\begin{center}
	\url{https://zenodo.org/records/18175332}
\end{center}

\bibliographystyle{splncs04}
\bibliography{refs.bib}

%\iffalse
\newpage
\appendix
\crefalias{section}{appendix}
\section{Additional Material}
\subsection{Construction of $\bMM$-behaviours} \label[appendix]{appendix:mbhv}
The components $\rightsquigarrow^*$ and $\rightsquigarrow^\omega$ of the $\bMM$-behaviour of a hypergraph $\hH$ can be constructed by taking the closure of the following rules:
$$
	\infer{\hu \rightsquigarrow^*_f \hv}{\stackanchor{$\hu \rightarrow_{\hH} \hv$}{$f = \{ (\bp,\bq,\boolb) \mid \bD^*_{\bMM}(\bp,\hC_{\hH}(\hu\hv \vert_{\hV_{\hH}}),\bq,\boolb) \}$}}
\qquad
	\infer{\hu \rightsquigarrow^*_g w}{\stackanchor{$\hu \rightsquigarrow^*_g \hv \quad \hv \in \hV_{\hH} \quad\hv \rightarrow_{\hH} w$}{$f = \left\{ (\bp,\bq,\boolb_1 \lor \boolb_2) \middle\bracevert \substack{\ensuremath{\bD_\bMM^*(\bp, \hC_\hH(\hw \vert_{\hV_\hH}), \bp',b_1) \\\land (\bp',\bq,\boolb_2) \in g}} \right\}$}}
$$$$
	\infer{(\bp,\bq,\boolb_1 \lor \boolb_2) \in \texttt{clos}(f)}{(\bp,\bp',\boolb_1) \in \texttt{clos}(f) \qquad (\bp',\bq,\boolb_2) \in \texttt{clos}(f)}
$$$$
	\infer{\hu \rightsquigarrow^\omega_f \star}{\stackanchor	{$\hu \rightsquigarrow^*_{g} \hv \quad \hv \rightarrow_\hH \hu \quad \hu,\hv \in \hV_\hH$}{$f = \{ \bp \mid (\bp,\bp,\boolT) \in \texttt{clos}(g) \}$}}
\qquad
	\infer{\hu \rightsquigarrow^\omega_f \star}{\stackanchor{$\hu \rightsquigarrow^*_{g} \hv \quad \hv \rightarrow_\hH \hw \quad \hw \rightsquigarrow^\omega_{h} \star$}{$f = \{ \bp \mid (\bp,\bq,\boolb) \in g \land \bq \in h \}$}}
$$	
As the domains we're dealing with are finite, the closure will reach a fixpoint in finite steps.

\newpage
\subsection{Full Recolored Example HRG} \label[appendix]{appendix:ex-rcoloring}

\begin{figure*}
\centering
\includegraphics[width=0.4\linewidth]{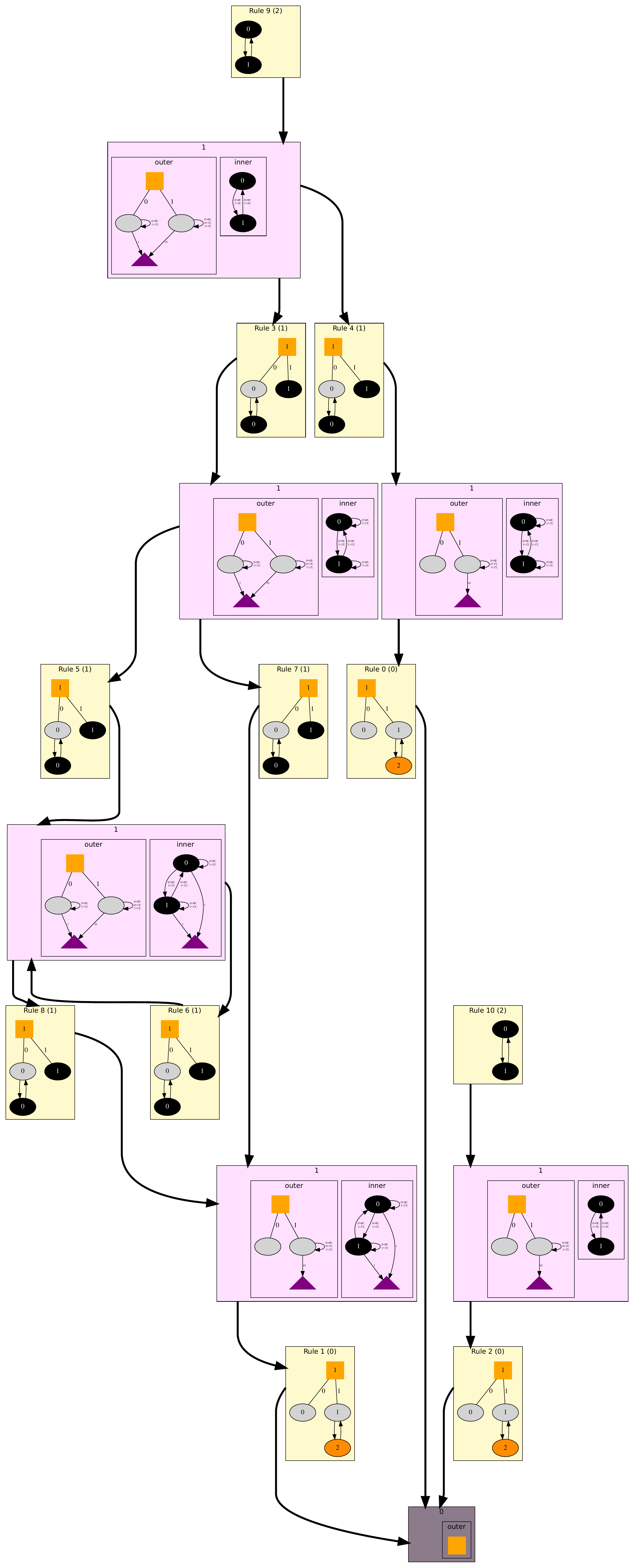}
\caption{Refined grammar for property $F \cblues$ on the grammar from \Cref{fig:grammar-dll}.}
\label{fig:tool-out}
\end{figure*}

\noindent 
\Cref{fig:tool-out} shows all rules of the refined grammar from the example at the end of \Cref{sec:meq}.
The figure has been generated by our prototype tool.
The figure arranges the grammar as a tree-automaton where nonterminals are states (pink boxes), and rules are transitions (yellow boxes). The rule from \Cref{fig:ex-refine} corresponds to \emph{"Rule 3"} and connects the state corresponding to the nonterminal of the hyperedge, to the nonterminal of the rule. Each state contains two graphs corresponding to the $\bMM$-behaviours $\gx$ and $\gs$, with the small purple node corresponding to the special node $\star$.
\newpage
\subsection{Full Experiments}\label[appendix]{sec:full-experiments}
\begin{figure}
\centering
\scalebox{0.5}{\begin{tabular}{|l|l|l|l|l|l|l|l|l|l|}
\hline
\textbf{Case} & \textbf{Time (s)} & \textbf{Sat} & \textbf{Fal} & \textbf{R. \#\gN} & \textbf{R. \#\gP} & \textbf{M. \#\gN} & \textbf{M. \#\gP}  & \textbf{\#\hE} & \textbf{\#\hI}  \\
\hline\hline\hline\multicolumn{6}{|l|}{\textit{IPv4 zeroconf}} &2 & 3 & 1 & 3\\\hline
$\exists X\ \creds$ & 3.55 & $0$ & $\infty$ & 7 & 11 & 2 & 3 & \multicolumn{2}{l}{} \\\cline{1-8}
$\forall G(\exists X\ \creds)$ & 0.47 & $0$ & $\infty$ & 5 & 8 & 2 & 3 & \multicolumn{2}{l}{} \\\cline{1-8}
$\exists X\ \cblues$ & 0.47 & \emph{fin} & $\infty$ & 7 & 11 & 3 & 5 & \multicolumn{2}{l}{} \\\cline{1-8}
$\forall G(\exists X\ \cblues)$ & 0.46 & $0$ & $\infty$ & 7 & 11 & 2 & 3 & \multicolumn{2}{l}{} \\\cline{1-8}
$\exists F G\ \cblues$ & 0.42 & $\infty$ & $0$ & 5 & 8 & 2 & 3 & \multicolumn{2}{l}{} \\\cline{1-8}
$\forall F G\ \cblues$ & 0.35 & $0$ & $\infty$ & 5 & 8 & 2 & 3 & \multicolumn{2}{l}{} \\\cline{1-8}
$\exists F\ \creds$ & 0.39 & $\infty$ & $0$ & 5 & 8 & 2 & 3 & \multicolumn{2}{l}{} \\\cline{1-8}
$\forall G(\exists F\ \creds)$ & 0.43 & $\infty$ & $0$ & 5 & 8 & 2 & 3 & \multicolumn{2}{l}{} \\\cline{1-8}
$(\forall F G\ \cblues) \lor (\forall G\ (\exists F\ \creds)))$ & 0.67 & $\infty$ & $0$ & 5 & 8 & 2 & 3 & \multicolumn{2}{l}{} \\\cline{1-8}
$G_{>0}\ \cblues$ & 0.50 & $0$ & $\infty$ & 5 & 8 & 2 & 3 & \multicolumn{2}{l}{} \\\cline{1-8}
$F_{>0}(G_{>0}\ \cblues)$ & 0.51 & $0$ & $\infty$ & 5 & 8 & 2 & 3 & \multicolumn{2}{l}{} \\\cline{1-8}
$F_{>0}\ \creds$ & 0.44 & $\infty$ & $0$ & 5 & 8 & 2 & 3 & \multicolumn{2}{l}{} \\\cline{1-8}
$G_{=1}(F_{>0}\ \creds)$ & 0.47 & $\infty$ & $0$ & 5 & 8 & 2 & 3 & \multicolumn{2}{l}{} \\\cline{1-8}
$G_{>0}\ \neg\creds$ & 0.62 & $0$ & $\infty$ & 5 & 8 & 2 & 3 & \multicolumn{2}{l}{} \\\cline{1-8}
$F_{>0}\ \cblues$ & 0.43 & $\infty$ & $0$ & 5 & 8 & 2 & 3 & \multicolumn{2}{l}{} \\\cline{1-8}
$G_{>0}(F_{>0}\ \cblues)$ & 0.61 & $\infty$ & $0$ & 5 & 8 & 2 & 3 & \multicolumn{2}{l}{} \\\cline{1-8}
$X_{>0}\ \cblues$ & 0.37 & \emph{fin} & $\infty$ & 7 & 11 & 3 & 5 & \multicolumn{2}{l}{} \\\cline{1-8}
$X_{>0}(X_{>0}\ \cblues)$ & 0.49 & \emph{fin} & $\infty$ & 9 & 14 & 4 & 7 & \multicolumn{2}{l}{} \\\cline{1-8}
$X_{>0}(X_{>0}(X_{>0}\ \cblues))$ & 0.61 & \emph{fin} & $\infty$ & 12 & 18 & 5 & 9 & \multicolumn{2}{l}{} \\\cline{1-8}
$X_{=1}(\creds \lor \cblues)$ & 0.47 & \emph{fin} & $\infty$ & 7 & 11 & 3 & 5 & \multicolumn{2}{l}{} \\\cline{1-8}
$X_{>0}(\neg \cblues \land F_{>0}\ \cblues)$ & 0.68 & $\infty$ & \emph{fin} & 7 & 11 & 3 & 5 & \multicolumn{2}{l}{} \\\cline{1-8}
$F_{>0}(X_{=1}(\creds \lor \cblues) \land X_{>0}(\neg \cblues \land F_{>0}\ \cblues)$ & 0.93 & $0$ & $\infty$ & 5 & 8 & 2 & 3 & \multicolumn{2}{l}{} \\\cline{1-8}
$G_{=1}(F_{>0}(X_{=1}(\creds \lor \cblues) \land X_{>0}(\neg \cblues \land F_{>0}\ \cblues))$ & 1.04 & $0$ & $\infty$ & 5 & 8 & 2 & 3 & \multicolumn{2}{l}{} \\\cline{1-8}
\hline\hline\hline\multicolumn{6}{|l|}{\textit{Trees of arbitrary arity}} &3 & 8 & 2 & 1\\\hline
$\neg \exists (\cblues\ U\ \neg\cblues)$ & 0.59 & $\infty$ & $\infty$ & 17 & 140 & 5 & 22 & \multicolumn{2}{l}{} \\\cline{1-8}
$\exists F\ \cblues$ & 0.48 & $\infty$ & $\infty$ & 17 & 140 & 5 & 22 & \multicolumn{2}{l}{} \\\cline{1-8}
$\exists X\ \neg \cblues$ & 2.83 & $\infty$ & $\infty$ & 65 & 2084 & 6 & 32 & \multicolumn{2}{l}{} \\\cline{1-8}
$\exists F\ \cblues \land \exists X\ \neg \cblues$ & 3.03 & $\infty$ & $\infty$ & 65 & 2084 & 6 & 32 & \multicolumn{2}{l}{} \\\cline{1-8}
$\neg \forall G(\exists F\ \cblues \land \exists X\ \neg \cblues)$ & 3.13 & $\infty$ & $0$ & 9 & 82 & 3 & 8 & \multicolumn{2}{l}{} \\\cline{1-8}
$\neg(\exists F\ \cblues)$ & 0.51 & $\infty$ & $\infty$ & 17 & 140 & 5 & 22 & \multicolumn{2}{l}{} \\\cline{1-8}
$\forall G(\creds \lor \neg \cblues)$ & 0.42 & $\infty$ & $\infty$ & 9 & 74 & 5 & 22 & \multicolumn{2}{l}{} \\\cline{1-8}
$\forall G(\neg \creds)$ & 0.37 & $\infty$ & $\infty$ & 9 & 74 & 5 & 22 & \multicolumn{2}{l}{} \\\cline{1-8}
$\neg(\exists F \cblues) \land \forall G(\creds \lor \neg \cblues) \land \forall G(\neg \creds)$ & 0.91 & $0$ & $\infty$ & 9 & 74 & 5 & 22 & \multicolumn{2}{l}{} \\\cline{1-8}
\hline\hline\hline\multicolumn{6}{|l|}{\textit{Series parallel graphs}} &2 & 4 & 2 & 2\\\hline
$\neg \exists (\cblues\ U\ \neg\cblues)$ & 0.46 & $0$ & $\infty$ & 16 & 98 & 2 & 4 & \multicolumn{2}{l}{} \\\cline{1-8}
$\exists F\ \cblues$ & 0.35 & $\infty$ & $0$ & 16 & 98 & 2 & 4 & \multicolumn{2}{l}{} \\\cline{1-8}
$\exists X\ \neg \cblues$ & 1.99 & $\infty$ & $\infty$ & 155 & 2185 & 4 & 12 & \multicolumn{2}{l}{} \\\cline{1-8}
$\exists F\ \cblues \land \exists X\ \neg \cblues$ & 2.19 & $\infty$ & $\infty$ & 155 & 2185 & 4 & 12 & \multicolumn{2}{l}{} \\\cline{1-8}
$\neg \forall G(\exists F\ \cblues \land \exists X\ \neg \cblues)$ & 2.52 & $\infty$ & $0$ & 20 & 144 & 2 & 4 & \multicolumn{2}{l}{} \\\cline{1-8}
$\neg(\exists F\ \cblues)$ & 0.38 & $0$ & $\infty$ & 16 & 98 & 2 & 4 & \multicolumn{2}{l}{} \\\cline{1-8}
$\forall G(\creds \lor \neg \cblues)$ & 0.45 & $0$ & $\infty$ & 16 & 98 & 2 & 4 & \multicolumn{2}{l}{} \\\cline{1-8}
$\forall G(\neg \creds)$ & 0.36 & $0$ & $\infty$ & 16 & 98 & 2 & 4 & \multicolumn{2}{l}{} \\\cline{1-8}
$\neg(\exists F \cblues) \land \forall G(\creds \lor \neg \cblues) \land \forall G(\neg \creds)$ & 0.83 & $0$ & $\infty$ & 16 & 98 & 2 & 4 & \multicolumn{2}{l}{} \\\cline{1-8}
\hline\hline\hline\multicolumn{6}{|l|}{\textit{Sierpinski triangles}} &2 & 3 & 3 & 3\\\hline
$\exists F\ \cblues$ & 0.46 & $\infty$ & $0$ & 13 & 56 & 2 & 3 & \multicolumn{2}{l}{} \\\cline{1-8}
$\forall G(\exists F\ \cblues)$ & 0.60 & $\infty$ & $0$ & 13 & 56 & 2 & 3 & \multicolumn{2}{l}{} \\\cline{1-8}
$\exists F\ \cgreens$ & 0.77 & $\infty$ & $0$ & 19 & 83 & 5 & 9 & \multicolumn{2}{l}{} \\\cline{1-8}
$\forall G(\exists F\ \cgreens)$ & 1.47 & $0$ & $\infty$ & 19 & 83 & 2 & 3 & \multicolumn{2}{l}{} \\\cline{1-8}
$\exists G\ \cblues$ & 0.62 & $0$ & $\infty$ & 13 & 56 & 2 & 3 & \multicolumn{2}{l}{} \\\cline{1-8}
$\exists F(\exists G\ \cblues)$ & 0.74 & $\infty$ & $0$ & 13 & 56 & 2 & 3 & \multicolumn{2}{l}{} \\\cline{1-8}
$\exists X(\cblues)$ & 28.11 & \emph{fin} & $\infty$ & 61 & 2176 & 4 & 7 & \multicolumn{2}{l}{} \\\cline{1-8}
$\exists X(\cgreens)$ & 31.75 & \emph{fin} & $\infty$ & 55 & 1959 & 6 & 14 & \multicolumn{2}{l}{} \\\cline{1-8}
$\exists X(\cblues) \land \exists X(\cgreens)$ & 55.55 & \emph{fin} & $\infty$ & 55 & 1959 & 6 & 14 & \multicolumn{2}{l}{} \\\cline{1-8}
\hline\hline\hline\multicolumn{6}{|l|}{\textit{Bank attack trees}} &4 & 10 & 3 & 7\\\hline
$\exists F\ \creds$ & 17.04 & $\infty$ & $\infty$ & 78 & 1225 & 13 & 41 & \multicolumn{2}{l}{} \\\cline{1-8}
$\forall G\ \neg\cgreens$ & 9.80 & $\infty$ & $\infty$ & 30 & 421 & 9 & 26 & \multicolumn{2}{l}{} \\\cline{1-8}
$\exists X^{\le 2} (\creds \lor \cblues \lor \cgreens)$ & 308.57 & \emph{fin} & $\infty$ & 80 & 7082 & 7 & 16 & \multicolumn{2}{l}{} \\\cline{1-8}
$\exists F \cgreens$ & 10.01 & $\infty$ & $\infty$ & 30 & 421 & 9 & 26 & \multicolumn{2}{l}{} \\\cline{1-8}
$\exists X^{\le 2} (\creds \lor \cblues \lor \cgreens) \implies \exists F \cgreens$ & 298.45 & $\infty$ & \emph{fin} & 78 & 1225 & 13 & 41 & \multicolumn{2}{l}{} \\\cline{1-8}
\hline\hline\hline\multicolumn{6}{|l|}{\textit{Dining Cryptographers}} &3 & 7 & 1 & 4\\\hline
$Correctness$ & 42.65 & $\infty$ & $0$ & 19 & 35 & 3 & 7 & \multicolumn{2}{l}{} \\\cline{1-8}
\end{tabular}}
\caption{Full evaluation table.}
\end{figure}

\newpage
\section{Proofs}

We define a generalization of contexts $\gCC_\gA^\gB(\gGG)$ such that $\sem{\tT} = \hJ \in \gCC^\gB_\gA(\gGG)$ iff $\tT$ is a derivation tree of $\gGG$ with root label $\gB$ (not necessarily in $\gZ$) and with exactly one sub-tree equal to a leaf, which must be equal to $\gA$, redefining $\gCC_\gA(\gGG)$ as $\bigcup_{\gB \in \gZ_\gGG} \gCC_\gA^\gB(\gGG)$. We will use the following fact in many of the following proofs.
\begin{proposition}
All hypergraphs $\hH \in \lLL_\gA(\gGG)$ (for arbitrary $\gA$) can be constructed inductively, as follows:
$$
	\infer
		{\repl{\gR}{\fml\hH} \in \lLL_\gA(\gGG)}
		{
			\grule{\gR}{\fml\gB}{\gA} \in \gP_\gGG
			&
			\fml\hH(\he) \in \lLL_{\fml\gB(\he)}(\gGG)
		}
$$
Similarly, all contexts $\hH \in \gCC_\gA^\gB(\gGG)$ (for arbitrary $\gA$ and $\gB$) can be constructed inductively, as follows:
$$
	\infer{\handle\gA \in \gCC^\gA_{\gA}(\gGG)}{\gA \in \gN_\gGG} 
	\qquad 
	\infer
		{\repl{\hH}{\repl{\gR}{\fml\hJ\langle \he / \handle{\fml\gB(\he)} \rangle}} \in \gCC^\gC_{\fml\gB(\he)}(\gGG)}
		{
			\hH \in \gCC^\gC_\gA(\gGG) 
			&
			\grule{\gR}{\fml\gB}{\gA} \in \gP_\gGG
			%\he \in \hE_\gR
			&
			\fml\hJ(\he) \in \lLL_{\fml\gB(\he)}(\gGG)
		}
$$
\end{proposition}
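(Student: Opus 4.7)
The statement is essentially a reformulation of the definitions of $\lLL_\gA(\gGG)$ and $\gCC_\gA^\gB(\gGG)$ in terms of the inductive structure of derivation trees. The plan is to prove each claim separately by unfolding the definitions and performing a straightforward induction on derivation trees; no intermediate construction is required beyond observing that each inductive rule in the statement matches exactly one step of derivation-tree assembly.

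For the language claim, I would note that $\hH \in \lLL_\gA(\gGG)$ iff $\hH = \sem{\tT}$ for some complete $\tT \in \tTree_\gGG(\gA)$. Because $\tT$ is complete, it cannot be constructed by rule (1) of the definition of $\tTree_\gGG$, so it must be of the form $\tT = (\gR, \tAss)$ with $(\gA, \gR) \in \gP_\gGG$ and $\tAss(\he) \in \tTree_\gGG(\hL_\gR(\he))$ complete for every $\he \in \hE_\gR$. Taking $\fml\gB = \hL_\gR$ and $\fml\hH(\he) = \sem{\tAss(\he)}$ then yields the decomposition $\hH = \repl{\gR}{\fml\hH}$ required by the inductive rule of the proposition. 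The converse direction simply reassembles the complete tree from the given data. This is a direct correspondence between the top level of a complete tree and the top level of the inductive rule.

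For the context claim, I would proceed by induction on the length $n$ of the unique path from the root to the hole in a derivation tree $\tT \in \tTree_\gGG(\gB)$ with hole $\gA$. If $n = 0$, the tree is just the leaf $\gA$ (forcing $\gB = \gA$), so $\sem{\tT} = \handle\gA$, which is the base rule. For $n \geq 1$, I would consider the sequence of partial unfoldings $\tT_0, \tT_1, \ldots, \tT_n = \tT$, where $\tT_i$ retains the first $i$ rule applications along the root-to-hole path and cuts off the remainder at a leaf-hole at depth $i$. Then $\sem{\tT_0} = \handle\gB$, matching the base rule of the proposition, and each step from $\tT_i$ to $\tT_{i+1}$ corresponds exactly to one application of the inductive rule: one unfolds the current hole using the next rule $\grule{\gR}{\fml\gB}{\gA_i}$ on the path, keeps the on-path hyperedge $\he$ as the new hole of type $\fml\gB(\he)$, and fills every other hyperedge $\he'$ with $\sem{\tAss(\he')} \in \lLL_{\fml\gB(\he')}(\gGG)$ obtained from the complete sub-tree of $\tT$ hanging off the path at that position (which is in the required language by the first claim of the proposition). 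The converse direction is immediate: every inductive-rule application manifestly builds a derivation tree with a hole.

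The only bookkeeping subtlety is to check that the labels line up correctly at each step, i.e., the hole label of $\tT_i$ coincides with the lhs of the $(i+1)$-th rule on the path, and that the off-path complete sub-trees indeed lie in the languages required as premises of the inductive rule. Both facts are immediate from the definition of partial derivation trees in \Cref{sec:hrgs}, so no real obstacle is expected; the proposition is essentially a convenient restatement of the definitions that is needed to enable induction on derivations in later proofs.
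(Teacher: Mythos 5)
Your proposal is correct, and in fact the paper gives no proof of this proposition at all: it is stated at the top of the appendix as an unproved consequence of the definitions, to be used in the subsequent inductions. Your unfolding is the natural one and matches what the authors evidently have in mind. For the language claim, the correspondence between the top level of a complete derivation tree and one application of the inference rule is exactly as you describe. For the context claim, the one ingredient you fold into ``bookkeeping'' that deserves to be named is associativity of hyperedge replacement (\Cref{lem:assoc} in the paper): your outside-in reassembly computes $\sem{\tT_{i+1}}$ as $\repl{\sem{\tT_i}}{\repl{\gR}{\cdots}}$, whereas the definition of $\sem{\cdot}$ assembles the tree top-down by replacing all hyperedges of the root rule simultaneously, and identifying these two requires precisely that replacement into the single hole commutes with the inner replacements. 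With that fact cited, your argument is complete.
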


\subsection{Proof of \Cref{lem:hg-decomposition}}

\begin{proposition}[\cite{DBLP:conf/gg/DrewesKH97}] \label{lem:assoc}
Let $\hH,\hJ\in \hHH$ and $\fml\hK = (\hK_\he \in \hHH)_{\he \in \hE_\hJ}$ with $|\hE_\hH| = 1$. Then,
$$
	\repl{\hH}{\repl{\hJ}{\fml\hK}} = \repl{\repl{\hH}{\hJ}}{\fml\hK}	
$$	
\end{proposition}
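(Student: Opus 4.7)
The plan is to prove the identity by unfolding the definition of hyperedge replacement from \Cref{sec:hrgs} on both sides and comparing the resulting hypergraphs component by component. Write $\he^*$ for the unique hyperedge of $\hH$, let $v_1 \ldots v_k = \hd_\hH(\he^*)$, and introduce the shorthands $\hJ^+ := \repl{\hJ}{\fml\hK}$ and $\hH^+ := \repl{\hH}{\hJ}$.

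For most of the seven components of a hypergraph the equality is routine. Both sides have $\hn_\hH$ abstract nodes; their sets of concrete nodes, hyperedges, node colorings, and hyperedge labels are, respectively, $\hV_\hH \uplus \hV_\hJ \uplus \biguplus_{\he \in \hE_\hJ} \hV_{\hK_\he}$, $\biguplus_{\he \in \hE_\hJ} \hE_{\hK_\he}$, and the corresponding disjoint unions of coloring and labeling functions. These agree on both sides modulo the associativity of $\uplus$. The only non-trivial components are the attached-node assignment $\hd$ and the edge relation $\hA$, both of which thread through the renaming functions $f_\he$ that identify abstract nodes of a substituted hypergraph with the attached nodes of the hyperedge it replaces.

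The crux is therefore to show that these renamings compose correctly across the two orders of replacement. Fix $\he \in \hE_\hJ$ and a node $u$ of $\hK_\he$. On the left, $u$ is renamed first by $f_\he$ during the construction of $\hJ^+$, and then, if its image is one of the abstract nodes of $\hJ^+$, again by the outer renaming for $\he^*$ during the step $\repl{\hH}{\hJ^+}$. On the right, the inner construction $\hH^+ = \repl{\hH}{\hJ}$ has already rewritten the attached nodes of $\he$ from $\hd_\hJ(\he)$ to $f_{\he^*} \circ \hd_\hJ(\he)$, so that the single renaming applied to $\hK_\he$ in $\repl{\hH^+}{\fml\hK}$ is exactly the composition of the two renamings used on the left. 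A case split on whether $u$ is a concrete or abstract node of $\hK_\he$, and in the abstract case on whether $f_\he(u)$ is concrete or abstract in $\hJ$, verifies this agreement.

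The main obstacle is thus purely bookkeeping: tracking the chained abstract-node identifications without spuriously duplicating nodes across the disjoint unions, and checking that every edge and every entry in $\hd$ is transported by the same composed renaming on both sides. Once the identity of composed renamings is established, direct substitution into the defining equations for $\hd$ and $\hA$ in \Cref{sec:hrgs} yields the desired equality of hypergraphs. A more general associativity law (without the restriction $|\hE_\hH| = 1$) is proved in~\cite{DBLP:conf/gg/DrewesKH97}, from which the present statement also follows as an immediate special case.
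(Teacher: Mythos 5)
Your proposal is correct, but note that the paper does not actually prove this proposition: it is imported wholesale from the cited handbook chapter of Drewes, Habel, and Kreowski, where a more general associativity/confluence law for hyperedge replacement is established. What you have written is therefore a self-contained proof where the paper offers only a citation. Your argument unfolds the definition from \Cref{sec:hrgs} on both sides and correctly isolates the only non-trivial point: writing $f_{\he^*}$ for the renaming induced by the unique hyperedge $\he^*$ of $\hH$ and $f_\he$, $g_\he$ for the renamings applied to $\hK_\he$ on the left and right respectively, one has $g_\he(i) = \hd_{\repl{\hH}{\hJ}}(\he)_i = \bigl(f_{\he^*} \circ \hd_\hJ(\he)\bigr)_i = f_{\he^*}(f_\he(i))$ on abstract nodes of $\hK_\he$ and $g_\he = f_{\he^*}\circ f_\he = \mathrm{id}$ on its concrete nodes, so the composed renaming on the left coincides with the single renaming on the right; the edge relation, the attachment map $\hd$, and the remaining five components then agree up to associativity of $\uplus$, which is harmless since the paper works with hypergraphs up to isomorphism. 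Two small remarks: the edges of $\hJ$ itself also need to be tracked (they are transported by $f_{\he^*}$ alone on both sides, since on the right they already live in $\hA_{\repl{\hH}{\hJ}}$ and are left untouched by the outer replacement), and your final case split on whether $f_\he(u)$ lands on a concrete or abstract node of $\hJ$ is not strictly necessary, as the identity $g_\he = f_{\he^*}\circ f_\he$ follows uniformly from the definition of $\hd_{\repl{\hH}{\hJ}}$. What your approach buys is independence from the external reference at the cost of the bookkeeping you describe; what the paper's citation buys is the stronger, fully general statement without the restriction $|\hE_\hH|=1$.
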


\repeatlemma{lem:hg-decomposition}

\begin{proof}
We prove, by induction, the stronger claim on languages on arbitrary nonterminals, and contexts on arbitrary nonterminals. Let $\grule{\gR}{\fml{\gB}}{\gA} \in \gP_{\gGG}$ be an arbitrary rule in the grammar, and let $\fml{\hH}$ such that $\fml\hH(\he) \in \lLL_{\fml\gB(\he)}(\gGG)$ be an assignment to hyperedges such that $\repl{\gR}{\fml\hH} \in \lLL_\gA(\gGG)$. We assume, as our inductive hypothesis, that our claim holds on all hypergraphs $\hH(\he)$.

To prove our claim, we have to show that it is possible to split, as described, the graph $\repl{\gR}{\fml{\hH}}$ exposing any $\hu \in \hV_{\repl{\gR}{\fml{\hH}}}$. Let $\hu$ be such an arbitrary node. Then, either $\hu \in \hV_{\gR}$, or $\hu \in \hV_{\hH_{\he'}}$ for some $\he' \in \hE_{\gR}$. In the first case, we have that
$$
	\repl{\gR}{\fml{\hH}} = \repl{\handle{\gA}}{\repl{\gR}{\fml{\hH}}} \quad \text{and} \quad \hu \in \hV_{\gR},
$$
as required. In the second case, from the inductive hypothesis we know that, for some $\hJ$ and $\fml\hK$, it holds that 
$$
	\hH_{\he'} = (\repl{\hJ}{\repl{\gR'}{\fml{\hK}}}) \quad \text{and} \quad \hu \in \hV_{\gR'},
$$
but then, letting $\fml{\hH}' = \fml\hH\langle \he / \handle\gA\rangle$ denote the family of HGs such that $\hH'_{\he} = \handle{\gA}$, and $ \forall \he' \ne \he.\; \hH'_{\he'}$ = $\hH_{\he'}$, we have that 
$$
	\repl{\gR}{\fml{\hH}} = \repl{\gR}{\repl{\fml{\hH}'}{\hH_{\he}}} = \repl{\gR}{\repl{\fml{\hH}'}{\repl{\hJ}{\repl{\gR'}{\fml{\hK}}}}} = \repl{(\repl{\gR}{\repl{\fml{\hH}'}{\hJ}})}{\repl{\gR'}{\fml{\hK}}}
$$
as required, via \Cref{lem:assoc}.
\end{proof}

\subsection{Proof of \Cref{thm:ref-ok}}

\repeattheorem{thm:ref-ok}

\begin{proof}
We show the two claims given by the theorem separately.

\textit{Claim 1.} We begin by proving by induction that for any $\hH \in \lLL_{(\gA,\gs,\gx)}(\gGG_\equiv)$ it holds that $\gs = \eqclass{(\emptyset, \hH)}_\equiv$. Let $\grule{\gR}{\fml\gB \zip \fml\gr \zip \fml\gy}{(\gA,\gs,\gx)} \in \gP_{\gGG_\equiv}$ be an arbitrary rule in the grammar, and let 
\begin{equation}
	\fml{\hH}(\he) \in \lLL_{(\fml\gB(\he),\fml\gr(\he),\fml\gy(\he))}(\gGG_\equiv) \quad\st\quad \fml\gr(\he) = \eqclass{(\fml\hH(\he), \emptyset)}_\equiv \tag{I.H.}
\end{equation}
be a family of hypergraphs granted by our inductive hypothesis. Then, from \Cref{def:congr} and from our inductive hypothesis, we have that 
$$
	\gs = \repl{(\gR,\emptyset)}{\fml\gr} = \eqclass{(\repl{\gR}{\fml\hH}, \emptyset)}_\equiv.
$$

\textit{Claim 2.} We prove by induction that for any $\hJ \in \gCC_{(\gA,\gs,\gx)}(\gGG_\equiv)$ it holds that $\gx = \eqclass{(\emptyset, \hJ)}_\equiv$. First, let $(\gA,\gs,\gx) \in \gZ_{\gGG_\equiv}$ be a starting symbol. Then, we have that $\handle{(\gA,\gs,\gx)} \in \gCC_{(\gA,\gs,\gx)}(\gGG_\equiv)$. Thus, by \Cref{def:congr}
$$
	\gx = \eqclass{(\handle{\gA}, \emptyset)}_\equiv = \eqclass{(\handle{(\gA,\gs,\gx)}, \emptyset)}_\equiv,
$$
as all handles of the same size are $\bMM$-equivalent (as they have no paths). Next, let $\grule{\gR}{\fml{\gB,\gr,\gy}}{(\gA,\gs,\gx)} \in \gP_{\gGG_\equiv}$, let
$$
	\hH \in \gCC_{(\gA,\gs,\gx)}(\gGG) \quad\st\quad \gx = \eqclass{(\hH, \emptyset)}_\equiv
$$ 
be a context granted by our inductive hypothesis and let 
$$
	\fml\hJ(\he) \in \lLL_{(\fml\gB(\he),\fml\gr(\he),\fml\gy(\he))}(\gGG_\equiv) \quad\st\quad \fml\gr(\he) = \eqclass{(\hJ(\he), \emptyset)}_\equiv
$$ 
be a family of hypergraphs, granted by the first part of the theorem. Then,
$$
	\fml\gy(\he) = \repl{\gx}{\repl{\gR}{\fml\gr\langle \he / \handle{\fml\gB(\he)} \rangle}} = \eqclass{(\repl{\hH}{\repl{\gR}{\fml\hJ\langle \he / \handle{\fml\gB(\he)} \rangle}}, \emptyset)}_\equiv.
$$
\end{proof}

\subsection{Proof of \Cref{lem:meq-cong}}
We define the set of finite and infinite concrete paths as
\[
\CPaths^*_{\hH}(\hu,\hv) ~=~ \left\{ \hu\seq\hpath\hv \vert_{\hV_{\hH}} \middle\bracevert \hu\seq\hpath\hv \in \Conn_{\hH} ~\tand~ \seq\hpath \in \hV_\hH^* \right\},
\]
and 
\[
\CPaths^\omega_{\hH}(\hu) ~=~ \left\{ \hu\seq\hpath \vert_{\hV_{\hH}} \middle\bracevert \hu\seq\hpath \in \Conn_{\hH} ~\tand~ \seq\hpath \in \hV_\hH^* \right\}.
\]
Note that the coloring of any concrete path is a trace.

\begin{proposition}\label{lem:repl-bound}
	Let $\hH \in \hHH$ and $\fml{\hJ}$ be a hyperedge assignment to $\hH$. Then, let $\he,\he_1,\he_2 \in \hE_{\hH}$ such that $\he_1 \ne \he_2$, and $\hu \in \hW_{\hH}$, $\hv \in \hW_{\fml\hJ(\he)}$, $\hv_1 \in \hW_{\fml\hJ(\he_1)}$, $\hv_2 \in \hW_{\fml\hJ(\he_2)}$. Then, 
	\begin{enumerate}
		\item $(\hu \rightarrow_{\repl{\hH}{\fml{\hJ}}} \hv) \iff \exists i.\; \hu = \hd_{\hH}(\he)(i) \land (\hI_{\fml\hJ(\he)}(i) \rightarrow_{\fml\hJ(\he)} \hv)$,
		\item $(\hv \rightarrow_{\repl{\hH}{\fml{\hJ}}} \hu) \iff \exists i.\; \hu = \hd_{\hH}(\he)(i) \land (\hv \rightarrow_{\fml\hJ(\he)} \hI_{\fml\hJ(\he)}(i))$,
		\item $\hv_1 \not\rightarrow_{\repl{\hH}{\fml{\hJ}}} \hv_2$.
	\end{enumerate}	
\end{proposition}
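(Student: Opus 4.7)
The plan is to unfold the two definitions from Section 2.3 that describe hyperedge replacement, namely
\[
\hV_{\repl{\hH}{\fml\hJ}} \;=\; \hV_\hH \;\uplus\; \biguplus_{\he' \in \hE_\hH} \hV_{\fml\hJ(\he')},
\qquad
\hA_{\repl{\hH}{\fml\hJ}} \;=\; \hA_\hH \;\uplus\; \biguplus_{\he' \in \hE_\hH} \{(f_{\he'}(u), a, f_{\he'}(v)) \mid u \xrightarrow{a}_{\fml\hJ(\he')} v\},
\]
and then to argue by case analysis on which component of the second disjoint union supplies a given edge. Throughout, I would use that concrete nodes of the different $\fml\hJ(\he')$ lie in disjoint sets, so that the only way two components can ``meet'' is through the identification map $f_{\he'}$, which maps an abstract node $i \in \hI_{\fml\hJ(\he')}$ to the attached node $\hd_\hH(\he')(i)$ and is the identity on concrete nodes of $\fml\hJ(\he')$. (Here I interpret $\hv, \hv_1, \hv_2$ as \emph{concrete} nodes of their respective replacements, since abstract nodes of $\fml\hJ(\he')$ no longer exist as distinct nodes of $\repl{\hH}{\fml\hJ}$ but are identified with attached nodes in $\hW_\hH$.)

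For (3), suppose towards a contradiction that $\hv_1 \rightarrow_{\repl{\hH}{\fml\hJ}} \hv_2$. Since $\hv_1$ is concrete in $\fml\hJ(\he_1)$, it is not in $\hW_\hH$, so the edge cannot come from $\hA_\hH$. Hence it must come from the $\he'$-component for some $\he'$, so $\hv_1 = f_{\he'}(u)$ for some $u \in \hW_{\fml\hJ(\he')}$. Disjointness of the concrete node sets forces $\he' = \he_1$ (and $u = \hv_1$). By symmetry, $\he' = \he_2$, contradicting $\he_1 \neq \he_2$.

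For (1), the forward direction runs the same case analysis: the edge $\hu \rightarrow \hv$ cannot come from $\hA_\hH$ (as $\hv \notin \hW_\hH$), so it originates from the $\he'$-component with $\hv = f_{\he'}(v')$ for some $v'$; concreteness of $\hv$ in $\fml\hJ(\he)$ then forces $\he' = \he$ and $v' = \hv$. Writing $\hu = f_\he(u')$, the fact that $\hu \in \hW_\hH$ while $f_\he$ maps concrete nodes of $\fml\hJ(\he)$ back into $\hV_{\fml\hJ(\he)} \subseteq \hV_{\repl{\hH}{\fml\hJ}}$ (disjoint from $\hW_\hH$) rules out $u' \in \hV_{\fml\hJ(\he)}$; hence $u'$ is an abstract node $i \in \hI_{\fml\hJ(\he)}$ with $\hd_\hH(\he)(i) = \hu$. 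The backward direction is immediate from the definition of $f_\he$. Claim (2) is entirely symmetric, swapping source and target.

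There is no real obstacle here; the argument is pure bookkeeping about disjoint unions and the map $f_{\he'}$. The only subtlety worth calling out is the convention that when we write $\hv \in \hW_{\fml\hJ(\he)}$ in the statement we mean $\hv$ is a \emph{concrete} node of $\fml\hJ(\he)$, because an abstract node of $\fml\hJ(\he)$ has already been identified, under replacement, with an attached node of $\he$ in $\hH$ and thus belongs to $\hW_\hH$ in the resulting graph.
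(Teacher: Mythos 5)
Your proof is correct and follows essentially the same route as the paper's: unfold the edge relation of $\repl{\hH}{\fml{\hJ}}$ as a disjoint union, observe that disjointness of the node sets pins down which component an edge can come from, and use the identification map $f_{\he}$ to force the endpoint in $\hW_{\hH}$ to arise from an abstract node of $\fml\hJ(\he)$. Your explicit remark that $\hv$, $\hv_1$, $\hv_2$ must be read as \emph{concrete} nodes of their respective replacements is a clarification the paper leaves implicit, but it does not change the argument.
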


\begin{proof}
	Let $\hH, \fml{\hJ}$ as described. Recall that, following the definition of hyperedge replacement, for all $\hu,\hv$,
	$$
	\hu \rightarrow_{\repl{\hH}{\fml{\hJ}}} \hv
		\iff (\hu,\hv) \in \left(\hA_{\hH} \uplus \left(\biguplus_{\he \in \hE_{\hH}} f_\he(\hA_{\fml\hJ(\he)})\right)\right).
	$$
	We begin by noting that each part of this union contains pairs that refer to nodes in different \HGs: $\hA_{\hH}$ contains nodes that are exclusively in $\hH$, while each $Rf_\he(\hA_{\fml\hJ(\he)})$ contain nodes that can be both in $\hH$ and $\fml\hJ(\he)$. We can immediately conclude that there is no pair that can contain both $\hv_1 \in \hW_{\hJ_{\he_1}}$ and $\hv_2 \in \hW_{\hJ_{\he_2}}$ for $\he_1 \ne \he_2$. If, instead, we consider $\hu \in \hW_{\hH}$ and $\hv \in \hW_{\fml\hJ(\he)}$, we note that the only component of the union that may contain this pair is $f_\he(\hA_{\fml\hJ(\he)})$. This means that
	\begin{align*}
		(\hu \rightarrow_{\repl{\hH}{\fml{\hJ}}} \hv) 
		\iff& (\hu,\hv) \in f_\he(\hA_{\fml\hJ(\he)})
		\\\iff& \exists \hu',\hv'.\; (\hu' \hA_{\fml\hJ(\he)} \hv') \land \hu = f_\he(\hu') \land \hv = f_\he(\hv').
	\intertext{Note that, as $\hu \in \hW_{\hH}$, $\hu'$ must be equal to $\hI_{\fml\hJ(\he)}(i)$ for some $i$, while, as $\hv \in \hW_{\fml\hJ(\he)}$, $\hv' = \hv$. Therefore,}
		\iff&\exists i.\; (\hI_{\fml\hJ(\he)}(i) \rightarrow_{\fml\hJ(\he)} \hv) \land \hu = \hd_{\hH}(\he)(i)
	\end{align*}
	as required. The proof for the $\hv \rightarrow_{\repl{\hH}{\fml{\hJ}}} \hu$ case is equivalent.
\end{proof}

\begin{figure}[t]
\centering
\def\a{1} 
\def\b{0.6}
%%%%%%%%%%%%%%%%%%%%%%%%%%%%%%%%%%%%%%
%%%%%%%%%%%%%%%%%%%%%%%%%%%%%%%%%%%%%%
\begin{subfigure}[b]{0.3\linewidth}
\centering
\scalebox{1}{\begin{tikzpicture}
\draw[thick, purple, decorate, decoration={random steps, segment length=6pt, amplitude=2pt}]
  (0,0) ellipse (2 and 1.5);

\draw[thick, blue, decorate, decoration={random steps, segment length=5pt, amplitude=1.5pt}]
  (0,0) ellipse (1 and 0.6);

\foreach \name/\angle in {P1/180, P2/120, P3/30, P4/290} {
  \coordinate (\name) at ({\a*cos(\angle)}, {\b*sin(\angle)});
  \fill (\name) circle (2pt);
}

\coordinate (X) at (\a*-1.5,\b*1.2);
\coordinate (Y) at (\a*0.5,\b*-2);
\foreach \p in {X, Y} {
  \fill (\p) circle (2pt);
};

\draw[thick, ForestGreen] plot[smooth, tension=1] coordinates { (X) (P1) (\a*-0.4,\b*0.1) (P2) (\a*0.5,\b*1.8) (P3) (\a*0.8,\b*-0.07) (P4) (Y) };

\node[ForestGreen] at (\a*-1.5,\b*0.3) {$\seq\hw_1$};
\node[ForestGreen] at (\a*-0.5,\b*-0.4) {$\seq\hw_2$};
\node[ForestGreen] at (\a*-0.3,\b*1.8) {$\seq\hw_3$};
\node[ForestGreen] at (\a*0.6,\b*0.2) {$\seq\hw_4$};
\node[ForestGreen] at (\a*0.1,\b*-1.8) {$\seq\hw_5$};

\node[purple] at (\a*-1.6,\b*2.2) {$\hH$};
\node[blue] at (\a*1.2,\b*-0.5) {$\hJ$};

\end{tikzpicture}}
\caption{A finite path split into finite parts.}
\label{fig:path-decomposition}
\end{subfigure}
%%%%%%%%%%%%%%%%%%%%%%%%%%%%%%%%%%%%%%
\hfill
%%%%%%%%%%%%%%%%%%%%%%%%%%%%%%%%%%%%%%
\begin{subfigure}[b]{0.3\linewidth}
\centering
\scalebox{1}{\begin{tikzpicture}
\draw[thick, purple, decorate, decoration={random steps, segment length=6pt, amplitude=2pt}]
  (0,0) ellipse (2 and 1.5);

\draw[thick, blue, decorate, decoration={random steps, segment length=5pt, amplitude=1.5pt}]
  (0,0) ellipse (1 and 0.6);
  
\foreach \name/\angle in {P1/220, P2/110, P3/50, P4/280} {
  \coordinate (\name) at ({\a*cos(\angle)}, {\b*sin(\angle)});
  \fill (\name) circle (2pt);
}

\coordinate (X) at (\a*-1.5,\b*-1.2);
\foreach \p in {X} {
  \fill (\p) circle (2pt);
};

\draw[thick, ForestGreen] plot[smooth, tension=1] coordinates { (X)
	(P1) (P2) (\a*0.5,\b*1.8) (P3) (P4)  (\a*-0.5,\b*-1.7) 
	(P1) (\a*-0.8, \b*0.2) (P2) (\a*0.1, \b*1.1) }; 

\node[ForestGreen] at (\a*-1.3,\b*-0.7) {$\seq\hw_1$};
\node[ForestGreen] at (\a*-0.3,\b*-0.2) {$\seq\hw_2$};
\node[ForestGreen] at (\a*-0.0,\b*2) {$\seq\hw_3$};
\node[ForestGreen] at (\a*0.3,\b*0.2) {$\seq\hw_4$};
\node[ForestGreen] at (\a*-0.3,\b*-2.1) {$\seq\hw_5$};
\node[ForestGreen] at (\a*-1.1,\b*0.8) {$\seq\hw_6$};
\node[ForestGreen] at (\a*0.22,\b*1.09) {$\cdots$};

\end{tikzpicture}}
\caption{An infinite path split into infinite parts.}
\label{fig:path-decomposition}
\end{subfigure}
%%%%%%%%%%%%%%%%%%%%%%%%%%%%%%%%%%%%%%
\hfill
%%%%%%%%%%%%%%%%%%%%%%%%%%%%%%%%%%%%%%
\begin{subfigure}[b]{0.3\linewidth}
\centering
\scalebox{1}{\begin{tikzpicture}
\newcommand\spiral{}% Just for safety so \def won't overwrite something
\def\spiral[#1](#2)(#3:#4:#5){% \spiral[draw options](placement)(end angle:revolutions:final radius)
\pgfmathsetmacro{\domain}{pi*#3/180+#4*2*pi}
\draw [#1,shift={(#2)}, domain=0:\domain,variable=\t,smooth,samples=int(\domain/0.08)] plot ({\t r}: {#5*\t/\domain})
}

\draw[thick, purple, decorate, decoration={random steps, segment length=6pt, amplitude=2pt}]
  (0,0) ellipse (2 and 1.5);

\draw[thick, blue, decorate, decoration={random steps, segment length=5pt, amplitude=1.5pt}]
  (0,0) ellipse (1 and 0.6);
  
\foreach \name/\angle in {P1/240, P2/110, P3/50} {
  \coordinate (\name) at ({\a*cos(\angle)}, {\b*sin(\angle)});
  \fill (\name) circle (2pt);
}

\coordinate (X) at (\a*-1.5,\b*-1.2);
\foreach \p in {X} {
  \fill (\p) circle (2pt);
};

\draw[thick, ForestGreen] plot[smooth, tension=1] coordinates { (X)
	(P1) (P2) (\a*0.5,\b*1.8) (P3) (\a*0.46,\b*0.3) (\a*0.40,\b*0.0) };
	
\spiral[thick, ForestGreen](\a*0.05,\b*0.0)(0:5:0.35);

\node[ForestGreen] at (\a*-0.6,\b*-1.5) {$\seq\hw_1$};
\node[ForestGreen] at (\a*-0.7,\b*-0.1) {$\seq\hw_2$};
\node[ForestGreen] at (\a*-0.2,\b*1.9) {$\seq\hw_3$};
\node[ForestGreen] at (\a*0.7,\b*-0.0) {$\seq\hw_4$};
  
\end{tikzpicture}}

\caption{An infinite path split into finite parts.}
\label{fig:path-decomposition}
\end{subfigure}
%%%%%%%%%%%%%%%%%%%%%%%%%%%%%%%%%%%%%%
%%%%%%%%%%%%%%%%%%%%%%%%%%%%%%%%%%%%%%
\caption{Paths on a composed hypergraph split into components.}
\label{fig:path-decomposition}
\end{figure}

\begin{proposition}[Path decomposition]\label{lem:path-decomposition}
Let $\hH \in \hHH$, $\fml{\hJ}$ be a hyperedge assignments and let $\seq\hw \in \CPaths^*_{\repl{\hH}{\fml{\hJ}}}(\hu,\hv)$ for $\hu,\hv \in \hW_{\repl{\hH}{\fml{\hJ}}}$. Then, $\seq\hw$ can be partitioned into a sequence of paths $\seq\hw_1 \cdots \seq\hw_n$, such that for all $i$, there exists $\hK \in (\biguplus_\he \fml\hJ(\he)) \uplus \{\hH\}$ such that 
$$
	\exists \hu',\hv' \in \left(\{\hu,\hv\} \uplus \biguplus_{\he} \hI_{\fml\hJ(\he)} \uplus \hd_{\hH}(\he)\right) \cap \hW_{\hK}.\; \seq\hw_i \in  \CPaths^*_{\hK}(\hu',\hv').
$$  
Similarly, let $\seq\hw \in \CPaths^\omega_{\repl{\hH}{\fml{\hJ}}}(\hu)$. Then, $\seq\hw$ can be partitioned into either an infinite sequence of finite paths $\seq\hw_1 \cdots$ with every $\seq\hw_i$ behaving as before, or into a finite sequence of paths $\seq\hw_1 \cdots \seq\hw_{n-1}\seq\hw_n$ where $\seq\hw_n$ is an infinite path for which additionally there exists $\hK \in (\biguplus_\he \fml\hJ(\he)) \uplus \{\hH\}$ such that 
$$
	\exists \hu' \in \left(\{\hu\} \uplus \biguplus_{\he} \hI_{\fml\hJ(\he)} \uplus \hd_{\hH}(\he)\right) \cap \hW_{\hK}.\; \seq\hw_n \in \CPaths^\omega_{\hK}(\hu').
$$
This behaviour is pictured in \Cref{fig:path-decomposition}.
\end{proposition}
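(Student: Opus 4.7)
The plan is to walk along the path and cut it at every moment it crosses from one component to another. \Cref{lem:repl-bound} assigns every edge of $\repl{\hH}{\fml{\hJ}}$ a uniquely determined \emph{host component}: either both endpoints lie in $\hW_\hH$ and the edge comes from $\hA_\hH$, in which case the host is $\hH$; or both endpoints lie in $\hW_{\fml\hJ(\he)}$ for a single $\he$ and the edge comes from $f_\he(\hA_{\fml\hJ(\he)})$, in which case the host is $\fml\hJ(\he)$. A crucial consequence of \Cref{lem:repl-bound}(3) is that the only nodes appearing in two distinct components are the boundary nodes $\hd_\hH(\he)_i = \hI_{\fml\hJ(\he)}(i)$, which is precisely the set permitted in the statement.

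For the finite case, I would write $\seq\hw = x_0 x_1 \cdots x_m$ with $x_0 = \hu$ and $x_m = \hv$ and assign each edge $(x_{j-1}, x_j)$ its host component $\hK_j$. Partitioning $[1,m]$ into maximal runs of equal $\hK_j$ yields indices $0 = i_0 < \cdots < i_n = m$ and subpaths $\seq\hw_k = x_{i_{k-1}} \cdots x_{i_k}$. Since a host change between $\seq\hw_{k-1}$ and $\seq\hw_k$ forces $x_{i_{k-1}}$ to lie in two different components simultaneously, it must be a boundary node; and since the intermediate nodes of $\seq\hw$ are concrete in $\repl{\hH}{\fml\hJ}$, they remain concrete in whichever $\hK$ hosts the corresponding piece. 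Thus each $\seq\hw_k \in \CPaths^*_{\hK_{i_k}}(\hu',\hv')$ for endpoints $\hu',\hv'$ in the allowed boundary/endpoint set.

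For the infinite case, the same cut-at-each-change construction is applied, with a case split on whether host changes occur infinitely often. If they do, we directly obtain an infinite sequence of finite pieces with the property established above. Otherwise, there is a last change at some index $i_{n-1}$: the first $n-1$ pieces are finite as before, while the final piece $\seq\hw_n$ is an infinite suffix whose edges all share one host component $\hK$, with starting node either a boundary node (if $n > 1$) or the original $\hu$ (if $n = 1$).

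The main obstacle I expect is routine bookkeeping: translating an edge hosted by $\fml\hJ(\he)$ back into an edge of $\fml\hJ(\he)$ itself via the identification $f_\he$ (so that boundary endpoints are correctly read as abstract nodes $\hI_{\fml\hJ(\he)}(i)$ on the component side but as attached nodes $\hd_\hH(\he)_i$ on the $\hH$ side), and verifying that intermediate concrete nodes of each subpath remain concrete in their respective $\hK$. Once \Cref{lem:repl-bound} has been invoked to rule out edges crossing between distinct $\fml\hJ(\he_1)$ and $\fml\hJ(\he_2)$, no further conceptual difficulty remains; what is left is a walk-and-cut argument combined with the finite/infinite case split.
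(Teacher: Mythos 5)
Your overall strategy --- cut the path where it crosses between components, invoke \Cref{lem:repl-bound} to show the cut points are boundary nodes, then do the finite/infinite case split --- is the paper's strategy, but the specific partition you chose differs from the paper's and breaks on exactly the point you dismiss as bookkeeping. You partition the \emph{edges} into maximal runs with the same host component. Now consider an attached node $b = \hd_{\hH}(\he)(i)$ that the path traverses between two edges both hosted by $\fml\hJ(\he)$, i.e. the path runs (concrete node of $\fml\hJ(\he)$) $\to b \to$ (concrete node of $\fml\hJ(\he)$); this already happens in the running example, where the first attached node of a hyperedge replaced by $\gR_2$ acquires an edge to and an edge from the red node contributed by $\gR_2$. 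Your maximal-run partition performs no cut at $b$, so $b$ lands in the \emph{interior} of a piece hosted by $\fml\hJ(\he)$. Read inside that component, $b$ is the abstract node $\hI_{\fml\hJ(\he)}(i)$, and $\CPaths^*_{\hK}$ (likewise $\CPaths^\omega_{\hK}$) admits only concrete nodes of $\hK$ in the interior of a path --- abstract nodes may occur only at the endpoints, where $\vert_{\hV_{\hK}}$ filters them away. Hence that piece belongs to no $\CPaths^*_{\hK}(\hu',\hv')$ with admissible $\hu',\hv'$, and the proposition's conclusion fails for your decomposition. Your claim that ``intermediate nodes of $\seq\hw$ remain concrete in whichever $\hK$ hosts the corresponding piece'' is precisely the false step: an attached node is concrete in $\repl{\hH}{\fml\hJ}$ (as a node of $\hH$), but abstract in the component hosting the surrounding edges.

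The repair is to partition as the paper does, by which hypergraph each \emph{node} belongs to (attached nodes belong to $\hH$): every traversed attached node then forms its own, possibly singleton, $\hH$-hosted piece, and the neighbouring component-hosted pieces end and begin at the corresponding abstract node, which is removed by the restriction to concrete nodes. With that change the remainder of your argument --- including the infinite case split on whether component changes occur infinitely often --- goes through as in the paper.
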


\begin{proof}
	Let $\hH,\fml{\hJ},\hu,\hv$ and $\seq{\hw}$ as described. We split $\seq{\hw}$ in a sequence of non-empty paths $\seq{\hw}_1\seq{\hw}_2\cdots\seq{\hw}_n$ such that each path is composed only of states in a single \HG, splitting only when the nodes belong to a different hypergraph. Consider the path $\seq{\hw}_k$ for arbitrary $k$. We split our proof into multiple cases, depending on whether $\seq{\hw}_k \in \hw_{\hH}^*$ or $\seq{\hw}_k \in \hw_{\fml\hJ(\he)}^*$ (for some $\he$), and on whether $1=k<n$, $1<k<n$, $1<k=n$ or $1=k=n$. We prove in detail some of the more illustrative cases:
	\begin{itemize}
		\item Case $\hH$, $1<k<n$ : as the sequence $\seq{\hw}_k$ is comprised of nodes in $\hH$, the sequence $\seq{\hw}_{k-1}$ must be composed of nodes in $\hJ_{\he_1}$, while $\seq{\hw}_{k+1}$ must be composed of nodes in $\hJ_{\he_2}$ (where $\he_1$ and $\he_2$ may be the same), as otherwise they would be part of $\seq\hw_k$. As $\seq\hw_1 \dots \seq\hw_{k-1} \seq\hw_k \seq\hw_{k+1} \dots \seq{\hw}_n$ is a path, it must hold that $\last(\seq\hw_{k-1}) \rightarrow_{\repl{\hH}{\fml{\hJ}}} \first(\seq\hw_k)$ and $\last(\seq\hw_k) \rightarrow_{\repl{\hH}{\fml{\hJ}}} \first(\seq\hw_{k+1})$. By \Cref{lem:repl-bound}, this can only be true if
			$$
				\exists i.\; \last(\seq\hw_k) = \hd_{\hH}(\he_1)(i) 
			\quad\text{ and }\quad
				\exists i.\; \first(\seq\hw_k) = \hd_{\hH}(\he_2)(i) 
			$$
			therefore,
			$$
				\exists \hu',\hv' \in \hd_{\hH}(\he_1) \cup \hd_{\hH}(\he_2).\; \seq\hw_k \in \CPaths_{\hH}^*(\hu',\hv').
			$$
		\item Case $\fml\hJ(\he)$, $1=k<n$: firstly, note that since $\seq\hw_1 \dots \seq\hw_{k-1} \seq\hw_k \seq\hw_{k+1} \dots \seq\hw_n \in \CPaths^*(\hu,\hv)$, it follows that
			$$
				\seq\hw_1 \in \CPaths^*(\hu, \last(\seq\hw_1)).
			$$ 
			Note that since the sequence $\seq{\hw}_1$ is composed of nodes in $\fml\hJ(\he)$, then $\seq{\hw}_2$ must be composed of nodes in $\hH$, as by \Cref{lem:repl-bound} it is not possible for nodes in $\fml\hJ(\he)$ and $\hJ_{\he'}$ for $\he'\ne\he$ to be directly connected. This means that, by \Cref{lem:repl-bound},
			$$
				\exists i.\; \first(\seq\hw_2) = \hd_{\hH}(\he)(i) \land (\last(\seq\hw_{1}) \rightarrow_{\fml\hJ(\he)} \hI_{\fml\hJ(\he)}(i)),
			$$
			Therefore, as $\last(\seq\hw_1) \in \hw_{\fml\hJ(\he)}$ and $\last(\seq\hw_{1}) \rightarrow_{\fml\hJ(\he)} \hI_{\fml\hJ(\he)}(i)$,
			$$
				\seq\hw_1 \in \CPaths_{\hH}^*(\hu,\hI_{\fml\hJ(\he)}(i)),
			$$
			and thus
			$$
				\exists \hv' \in \hI_{\fml\hJ(\he)}.\; \seq\hw_1 \in \CPaths_{\hH}^*(\hu,\hv')).
			$$
		\item The other cases follow the same idea: if the path is at the beginning, then it will start with $\hu$, if it last, it will end with $\hv$. Depending on whether it is composed of nodes in $\hH$ or $\fml\hJ(\he)$, it will force its neighboring paths to be, respectively, in $\hJ_{\he'}$ or $\hH$, forcing it to start and end with either abstract nodes or attached nodes.
	\end{itemize}
	Next, let $\seq\hw$ be an infinite path as described in the theorem. We split it into a sequence of non-empty paths as done before. This could result in either a sequence $\seq\hw_1\seq\hw_2\cdots$, or $\seq\hw_1\cdots\seq\hw_{n-1}\seq\hw_n$ (with $\seq\hw_n$ being infinite). The proof for the first case follows the previous idea, while in the second case we additionally need to discuss the sequence $\seq\hw_n$. Similarly to before, we distinguish depending on whether (1) $\seq{\hw}_n \in \hH$ or $\seq{\hw}_n \in \fml\hJ(\he)$ for some $\he$, and (2) $1=k=n$, $1<k=n$.
	\begin{itemize}
		\item Case $\hH$, $1<k=n$: the sequence $\seq{\hw}_n$ is composed of nodes in $\hH$, forcing the sequence $\seq{\hw}_{n-1}$ to be composed of nodes in $\fml\hJ(\he)$ for some $\he$. By \Cref{lem:repl-bound}, we have that
			$$
				\exists i.\; \first(\seq\hw_n) = \hd_{\hH}(\he)(i) \land (\last(\seq\hw_{n-1}) \rightarrow_{\fml\hJ(\he)}  \hI_{\fml\hJ(\he)}(i)),
			$$
			and therefore,
			$$
				\exists\hu' \in \hd_{\hH}(\he).\; \seq\hw_n \in \CPaths^\omega(\hu').
			$$
		\item The other cases follow the same structure.
	\end{itemize}
\end{proof}
	
\begin{proposition} \label{lem:string-meq-join}
	$\bMM$-equivalence of finite and infinite words is closed under finite and infinite concatenation. That is, for $(\seq\bs_i, \seq\bs_i' \in \bS^*)_i$, $\seq{\rho}, \seq{\rho}' \in \bS^\omega$ such that $\forall i.\; \seq\bs_i \equiv_\bMM \seq\bs_i'$ and $\seq{\rho} \equiv_\bMM \seq{\rho}'$ we have that
	$$
		\seq\bs_1\seq\bs_2\cdots\seq\bs_n \equiv_\bMM \seq\bs_1'\seq\bs_2'\cdots\seq\bs_n', \quad 
		\seq\bs_1\seq\bs_2\cdots \equiv_\bMM \seq\bs_1'\seq\bs_2'\cdots \quad \text{and} \quad
		\seq\bs_1\seq{\rho} \equiv_\bMM \seq\bs_1'\seq{\rho}'.
	$$
\end{proposition}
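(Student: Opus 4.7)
The proof plan is to reduce all three claims to a compositionality principle for runs and then apply the hypotheses segment by segment. First I would establish two elementary decomposition lemmas, proved by cutting a run at the boundary between segments: (i) $(\bp, \seq\bs\seq\bs', \bq, \boolb) \in \bD^*$ iff there exist an intermediate state $\br \in \bQ$ and a split $\boolb = \boolb_1 \lor \boolb_2$ with $(\bp, \seq\bs, \br, \boolb_1) \in \bD^*$ and $(\br, \seq\bs', \bq, \boolb_2) \in \bD^*$; (ii) $(\bp, \seq\bs\seq{\rho}, \boolT) \in \bD^\omega$ iff there exist $\br \in \bQ$ and $\boolb \in \Bool$ with $(\bp, \seq\bs, \br, \boolb) \in \bD^*$ and $(\br, \seq{\rho}, \boolT) \in \bD^\omega$. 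The flag on the prefix in (ii) is irrelevant for acceptance because only the infinite suffix can supply infinitely many visits to $\bF$. Both lemmas fall out directly from unfolding the definitions of $\bD^*$ and $\bD^\omega$.

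Claim (1) then follows by induction on $n$: the base $n = 1$ is the hypothesis, and the inductive step decomposes a run on $\seq\bs_1\cdots\seq\bs_n$ at the final segment boundary using (i), replaces the prefix via the inductive hypothesis and the suffix via $\seq\bs_n \equiv_\bMM \seq\bs_n'$, and recomposes using the reverse direction of (i). Claim (3) is an immediate consequence of (ii) applied together with the two hypotheses $\seq\bs_1 \equiv_\bMM \seq\bs_1'$ and $\seq{\rho} \equiv_\bMM \seq{\rho}'$.

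Claim (2), the fully infinite concatenation, is the main obstacle. I would first extend (i) to a countably infinite decomposition: every infinite run $\Run$ reading $\seq\bs_1\seq\bs_2\cdots$ from $\bp$ corresponds to a sequence of states $\br_0 = \bp, \br_1, \br_2, \ldots$ and per-segment flags $\boolb_1, \boolb_2, \ldots$ with $(\br_{i-1}, \seq\bs_i, \br_i, \boolb_i) \in \bD^*$ for every $i \geq 1$. The subtle step is translating the acceptance condition $\Inf(\Run) \cap \bF \neq \emptyset$ into a condition on the $\boolb_i$: I claim it holds iff $\boolb_i = \boolT$ for infinitely many $i$. The backward direction is immediate since each such $i$ contributes at least one visit to $\bF$; for the forward direction, observe that every individual segment has finite length and hence visits $\bF$ only finitely often, so if only finitely many $\boolb_i$ are $\boolT$ then the total number of visits to $\bF$ across the whole run is finite.

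Given this characterization, the rest is mechanical: applying $\seq\bs_i \equiv_\bMM \seq\bs_i'$ segment-wise yields witnesses $(\br_{i-1}, \seq\bs_i', \br_i, \boolb_i) \in \bD^*$ with identical endpoints and flags, and concatenating the corresponding runs produces a run on $\seq\bs_1'\seq\bs_2'\cdots$ from $\bp$ that inherits the accepting flag via the same characterization. The symmetric direction swaps primed and unprimed sequences. A minor edge case where the concatenation collapses to a finite string (all but finitely many $\seq\bs_i$ empty) can be handled by appealing to claim (1).
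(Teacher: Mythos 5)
Your proposal is correct and follows essentially the same route as the paper's proof: cut runs at segment boundaries into $\bD^*$-steps with disjunctively combined flags, observe that the prefix flag is irrelevant for the finite-plus-infinite case, and characterize Büchi acceptance of the fully infinite concatenation by ``infinitely many segment flags are $\boolT$'' (the paper writes this as $\forall i.\,\exists k \ge i.\,\boolb_k = \boolT$) before replacing segments one by one and reassembling. Your explicit justification of that characterization and your remark on the degenerate case of cofinitely many empty segments are both sound and slightly more careful than the paper's treatment.
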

\begin{proof}
\allowdisplaybreaks
For convenience, in the following proof we will split runs $\bq_0\bs_0\bq_1\bs_1\cdots\bs_n\bq_n \in \Runs^*_\bMM$ into $(\bq_0, \bs_0\bs_1\cdots\bs_n, \bq_1\bq_2\cdots\bq_n) \in \Runs^*_\bMM$. We proceed by proving the three claims separately.

\textit{Claim 1.} We begin by proving that $\seq\bs_1\seq\bs_2 \equiv_\bMM \seq\bs_1'\seq\bs_2'$. That is, for arbitrary $\bp,\bq,\boolb$,
\begin{align*}
	&(\bp,\seq\bs_1\seq\bs_2,\bq,\boolb) \in \bD^* 
	\\
	\iff&\exists \seq\br.\; (\bp, \seq\bs_1\seq\bs_2,\seq\br) \in \Runs^*_{\bMM} \land \bq = \last(\bp\seq\br) \land b = (\bp\seq\br \vert_{\bF_{\bMM}} \ne \epsilon)
	\intertext{We split the sequence of states $\seq\br$ into two sequences $\seq\br_1$ and $\seq\br_2$, such that $|\seq\br_1| = \seq\bs_1$ and $|\seq\br_2| = \seq\bs_2$.}
	\iff&\exists \seq\br_1\seq\br_2.\; (\bp, \seq\bs_1\seq\bs_2,\seq\br_1\seq\br_2) \in \Runs^*_{\bMM} \land \bq = \last(\bp\seq\br_1\seq\br_2) \land \boolb = (\bp\seq\br_1\seq\br_2 \vert_{\bF_{\bMM}} \ne \epsilon)
	\intertext{We split the runs into two, passing via the intermediate state $\bp'$.}
	\iff&\exists \bp', \seq\br_1\seq\br_2.\;  (\bp, \seq\bs_1,\seq\br_1)  \in \Runs^*_{\bMM} \land (\bp', \seq\bs_2,\seq\br_2) \in \Runs^*_{\bMM} \land \bp' = \last(\bp\seq\br_1)
		\\&\qquad \land \bq = \last(\bp'\seq\br_2) \land 	\boolb = (\bp\seq\br_1 \vert_{\bF_{\bMM}} \ne \epsilon \lor \bp'\seq\br_2 \vert_{\bF_{\bMM}} \ne \epsilon)
	\intertext{We concrete split $\boolb$ into two $\boolb_1$ and $\boolb_2$}
	\iff&\exists \bp', \boolb_1, \boolb_2, \seq\br_1\seq\br_2.\;  (\bp, \seq\bs_1,\seq\br_1)  \in \Runs^*_{\bMM} \land (\bp', \seq\bs_2,\seq\br_2) \in \Runs^*_{\bMM} \land \bp' = \last(\bp\seq\br_1)
		\\&\qquad \land \bq = \last(\bq'\seq\br_2) \land \boolb = (\boolb_1 \lor \boolb_2) \land \boolb_1 = (\bp\seq\br_1 \vert_{\bF_{\bMM}} \ne \epsilon) \land \boolb_2 = (\bp'\seq\br_2 \vert_{\bF_{\bMM} \ne \epsilon})
	\intertext{We wrap everything into $\bD^*$ again}
	\iff&\exists \bp',\boolb_1,\boolb_2.\; (\bp,\seq\bs_1,\bp',\boolb_1) \in \bD^* \land (\bp',\seq\bs_2,\bq,\boolb_2) \in \bD^* \land (\boolb = \boolb_1 \lor \boolb_2)
	\intertext{By $\bMM$-equivalence of $\seq\bs_1$ with $\seq\bs_1'$ and equivalence of $\seq\bs_2$ with $\seq\bs_2'$, we have}
	\iff&\exists \bp',\boolb_1,\boolb_2.\; (\bp,\seq\bs_1',\bp',\boolb_1) \in \bD^* \land (\bp',\seq\bs_2',\bq,\boolb_2) \in \bD^* \land (\boolb = \boolb_1 \lor \boolb_2)
	\intertext{We run all the steps backwards to go back to our starting formula, but on $\seq\bs_1'$ and $\seq\bs_2'$}
	\iff&(\bp,\seq\bs_1'\seq\bs_2',\bq,\boolb) \in \bD^* 
\end{align*} 
Then, by induction, $\seq\bs_1\seq\bs_2\cdots\seq\bs_n \equiv_\bMM \seq\bs_1'\seq\bs_2'\cdots\seq\bs_n'$. 

\textit{Claim 2.} Similarly, we prove that $\seq\bs\seq{\rho} \equiv_\bMM \seq\bs_1'\seq{\rho}'$. That is, for arbitrary $p$,
\begin{align*}
	&(\bp,\seq\bs\seq\rho,,\boolT) \in \bD^\omega 
	\\\iff& \exists \seq\br.\; (\bp, \seq\bs_1\seq\rho, \seq\br) \in \Runs^\omega_{\bMM} \land \Inf(\seq\br) \cap \bF_{\bMM} \ne \emptyset
	\\\iff& \exists \seq\br_1\seq\br_2.\; (\bp, \seq\bs\seq\rho, \seq\br_1\seq\br_2) \in \Runs^\omega_{\bMM} \land \Inf(\seq\br_1\seq\br_2) \cap \bF_{\bMM} \ne \emptyset
	\\\iff& \exists \bq,\seq\br_1\seq\br_2.\; (\bp, \seq\bs, \seq\br_1) \in \Runs^*_{\bMM} \land (\bq, \seq\rho, \seq\br_2) \in \Runs^\omega_{\bMM} \land \bq = \last(\bp\seq\br_1) 
		\\&\qquad \land \Inf(\seq\br_2) \cap \bF_{\bMM} \ne \emptyset
	\\\iff&\exists \bq,b.\; (\bp,\seq\bs,\bq,b) \in \bD^* \land (\bq,\seq\rho,\boolT) \in \bD^\omega 
	\\\iff&\exists \bq,b.\; (\bp,\seq\bs',\bq,b) \in \bD^* \land (\bq,\seq\rho',\boolT) \in \bD^\omega 
	\\\iff& \cdots
	\\\iff&(\bp,\seq\bs'\seq\rho',\boolT) \in \bD^\omega.
\end{align*}

\textit{Claim 3.} Lastly, we show $\seq\bs_1\seq\bs_2\cdots \equiv_\bMM \seq\bs_1'\seq\bs_2'\cdots$. That is, for arbitrary $p$
\begin{align*}
	&(\bp, \seq\bs_1\seq\bs_2\cdots,\boolT) \in \bD^\omega 
	\\\iff& \exists \seq\br.\; (\bp, \seq\bs_1 \cdots, \seq\br) \in \Runs^\omega_{\bMM} \land \Inf(\seq\br) \cap \bF_{\bMM} \ne \emptyset
	\\\iff& \exists \seq\br_1\seq\br_2\cdots.\; (\bp, \seq\bs_1 \seq\bs_2 \cdots, \seq\br_1 \seq\br_2 \cdots) \in \Runs^\omega_{\bMM} \land \Inf(\seq\br_1\seq\br_2\cdots) \cap \bF_{\bMM} \ne \emptyset
	\\\iff& \exists \seq\br_1\seq\br_2\cdots.\; (\bp, \seq\bs_1 \seq\bs_2 \cdots, \seq\br_1 \seq\br_2 \cdots) \in \Runs^\omega_{\bMM} \land \forall i.\; \exists k \ge i.\; \seq\br_k \vert_{\bF_{\bMM}} \ne \epsilon
	\\\iff& \exists \seq\boolb, \seq\br_1\seq\br_2\cdots.\; (\bp, \seq\bs_1 \seq\bs_2\cdots, \seq\br_1 \seq\br_2\cdots) \in \Runs^\omega_{\bMM} \land \forall i.\; \boolb_i = (\seq\br_i \vert_{\bF_{\bMM}} \ne \epsilon) 
		\\&\qquad \land \exists k \ge i.\; \boolb_i = \boolT
	\\\iff& \exists \seq\boolb, \seq\br_1\seq\br_2\cdots, \seq\bq.\; \bp = \first(\seq\bq) \land \forall i.\; (\bq_i, \seq\bs_i, \seq\br_i) \in \Runs^*_{\bMM} \land \bq_{i+1} = \last(\bq_i\seq\br_i) 
		\\&\qquad \land \boolb_i = (\seq\br_i \vert_{\bF_{\bMM}} \ne \epsilon) \land \exists k \ge i.\; \boolb_i = \boolT
	\\\iff& \exists \seq\boolb, \seq\bq.\; \bp = \first(\seq\bq) \land \forall i.\; \exists \seq\br.\; (\bq_i, \seq\bs_i, \seq\br) \in \Runs^*_{\bMM} \land \bq_{i+1} = \last(\bq_i\seq\br) 
		\\&\qquad \land \boolb_i = (\seq\br \vert_{\bF_{\bMM}} \ne \epsilon) \land \exists k \ge i.\; \boolb_i = \boolT
	\\\iff& \exists \seq\boolb, \seq\bq.\; \bp = \first(\seq\bq) \land \forall i.\; (\bq_i, \seq\bs_i, \bq_{i+1}, \boolb_i) \in \bD^*_{\bMM} \land \exists k \ge i.\; \boolb_i = \boolT
	\\\iff& \exists \seq\boolb, \seq\bq.\; \bp = \first(\seq\bq) \land \forall i.\; (\bq_i, \seq\bs_i', \bq_{i+1}, \boolb_i) \in \bD^*_{\bMM} \land \exists k \ge i.\; \boolb_i = \boolT
	\\\iff& \cdots
	\\\iff& (\bp, \seq\bs_1'\seq\bs_2'\cdots,\boolT) \in \bD^\omega 
\end{align*}
\end{proof}

\repeattheorem{lem:meq-cong}
\begin{proof}
Let $(\eta,\mu) \vdash \hUU \equiv_\bMM \hUU'$ and $(\eta_\he',\mu_\he') \vdash \fml\hVV(\he) \equiv_\bMM \hVV'(\mu(\he))$ for $\he \in \hE_\hH$. We want to show that $(\eta'', \mu'') \vdash \repl{\hUU}{\fml\hVV} \equiv_\bMM \repl{\hUU'}{\fml\hVV'}$ for $\eta'' = \biguplus_\he \eta'_\he \uplus \eta$ and $\mu'' = \biguplus_\he \mu'_\he$, that is, we want to show that  
$$
	\forall \seq\hw \in \CPaths^*_{\repl\hUU{\fml\hVV}}(x,y).\; \exists\seq\hw' \in \CPaths^*_{\repl{\hUU'}{\fml\hVV'}}(\nu(x),\nu(y)).\; \hC(\seq\hw) \equiv_\bMM \hC(\seq\hw')
$$ 
as well as
$$
	\forall \seq\hw \in \CPaths^\omega_{\repl\hUU{\fml\hVV}}(x).\; \exists\seq\hw' \in \CPaths^\omega_{\repl{\hUU'}{\fml\hVV'}}(\nu(x)).\; \hC(\seq\hw) \equiv_\bMM \hC(\seq\hw').
$$ 
Let $\seq\hw \in \CPaths^*_{\repl{\hUU}{\fml\hVV}}(x,y)$. From \Cref{lem:path-decomposition}, we can split this path into a sequence of paths $\seq\hw_1\seq\hw_2\cdots\seq\hw_n$ such that
$$
	\exists \hu',\hv' \in \hW.\; \seq\hw_i \in  \CPaths^*_{\hK}(x',y').
$$  
for $\hK \in \bigcup_\he \hVV(\he) \cup \{\hUU\}$ and $\hW = \left(\{\hu,\hv\} \uplus \biguplus_{\he} \hI_{\fml\hJ(\he)} \uplus \hd_{\hH}(\he)\right) \cap \hW_{\hK}$. Assume that $\hK = \hH$, then $\hW \subseteq X_\hVV$. Thus, as $\hUU$ and $\hUU'$ are $\bMM$-equivalent, there must exist a path $\seq\hw_i' \in \CPaths_{\hUU'}(\nu(x),\nu(y))$ such that $\hC(\seq\hw_i) \equiv_\bMM \hC(\seq\hw_i')$. Applying the same reasoning to all subpaths and joining them yields a new path $\seq\hw_i'$ that is $\bMM$-equivalent to the original path $\seq\hw$ (By \Cref{lem:string-meq-join}). A similar line of reasoning can be applied to infinite paths.
\end{proof}

\subsection{Proof of \Cref{lem:meq-sem}}

\repeatlemma{lem:meq-sem}

\begin{proof}
	Let $\hVV = (\hH,U)$ and $\hVV' = (\hH',U')$. Let $(\eta,\mu) \vdash \hVV \equiv_\bMM \hVV'$. Then, for $x \in X_\hVV$,
	\begin{align*}
		(\hVV, x) \models \bMM
		   &\iff \forall \seq{\hc} \in \Paths_{\hVV}^\omega(x). \; \seq{\hc} \models \bMM
		   \intertext{Then, consider an arbitrary trace $\seq{\hc}' \in \Paths_{\hVV'}^\omega(\nu(x))$. By definition of $\bMM$-equivalence, it follows that there exists a trace $\seq{\hc} \in \Paths_{\hVV'}^\omega(x)$ $\bMM$-equivalent to $\seq{\hc}'$, and therefore such that $\seq{\hc}' \models \bMM$. The same argument can be applied in the reverse direction, thus}
		   &\iff \forall \seq{\hc}' \in \Paths_{\hVV'}^\omega(\nu(x)). \; \seq{\hc}' \models \bMM
		\\ &\iff (\hVV', \nu(x)) \models \bMM.
	\end{align*}
\end{proof}
\subsection{Proof of \Cref{lem:meq-mbhv}}

\repeattheorem{lem:meq-mbhv}

\begin{proof}
We note that isomorphism of $\bMM$-behaviours can be rephrased in terms of equality of the edge relations $\rightsquigarrow^*$ and $\rightsquigarrow^\omega$, under the mapping $\nu$. Then, we note that 
\begin{align*}	
	\{(x',y',\{(\bp,\bq,\boolb) \mid \bD^*(\bp,\seq\hc,\bq,\boolb)\}) &\mid \seq\hc \in \Paths^*_{\hVV'}(x',y') \} \\
=
	\{(\nu(x),\nu(y),\{(\bp,\bq,\boolb) \mid \bD^*(\bp,\seq\hc,\bq,\boolb)\}) &\mid \seq\hc \in \Paths^*_{\hVV}(x,y) \}),
\end{align*}
iff
$$
	\Traces_\hVV^*(x,y) \equiv_\bMM \Traces^*_{\hVV'}(\nu(x),\nu(y)).
$$
Similarly, 
\begin{align*}
	\{(x',\{ \bp \mid \bD^*(\bp, \seq\hc, \boolT) \}) &\mid \seq\hc \in \Paths^\omega_{\hVV'}(x') \} \\
=
	\{(\nu(x), \{ \bp \mid \bD^*(\bp,\seq\hc, \boolT) \}) &\mid \seq\hc \in \Paths^\omega_\hVV(x) \}).
\end{align*}
iff 
$$
\Traces_\hVV^\omega(x) \equiv_\bMM \Traces_{\hVV'}^\omega(\nu(x)).
$$
Our claim follows.
\end{proof}
\subsection{Proof of \Cref{thm:amt-meq}}

\repeattheorem{thm:amt-meq}

\begin{proof}
\Cref{lem:meq-mbhv} tells us that $\bMM$-equivalence can be checked via isomorphism of $\bMM$-behaviours. From this follows that the number of equivalence classes on no nodes and no hyperedges only depends on the edges that are present between the $n$ abstract nodes, as well as the node $\star$. This leads us to
\begin{align*}	
	|\eqclass{(\lLL(\gGG), \emptyset)}_{\equiv\bMM}|
	&= \bigO\left(2^{m \times m \times 2^{|\bQ \times \bQ \times \Bool|} + m \times 1 \times 2^{|\bQ|}}\right) \\
	&= \bigO\left(2^{m^2 \times 2^{|\bQ|^2}}\right)
\end{align*}
Similarly, equivalence on hypergraphs with no abstract nodes and only one hyperedge with $n$ attached nodes depends only on the edges that are present between the $n$ attached nodes, as well as the node $star$, leading us to the same calculation.

We then use this result to calculate the size of $\gGG_{\equiv\bMM}$. Note that the size of the grammar only relevantly depends on the number of rules, as any nonterminal that does not show up in a rule can be removed.

\begin{align*}
	|\gGG_\bMM|
	&= \bigO(|\gP_{\gGG_{\bMM}}|) \\
	&= \bigO\left(|\gN_{\gGG_{\bMM}}| \times |\gP_\gGG| \times |\gN_{\gGG_{\bMM}}|^{|\hE|} \right) \\
	&= \bigO\left(|\gP_\gGG| \times |\gN_{\gGG_{\bMM}}|^{|\hE| + 1} \right) \\
	&= \bigO\left(|\gP_\gGG| \times \left(|\gN_\gGG| \times |\eqclass{(\lLL(\gGG), \emptyset)}| \times |\eqclass{(\gCC(\gGG), \emptyset)}|\right)^{|\hE| + 1} \right) \\
	&= \bigO\left(|\gP_\gGG| \times \left(|\gN_\gGG| \times 2^{|\hI|^2 \times 2^{|\bQ|^2}} \times 2^{|\hI|^2 \times 2^{|\bQ|^2}}\right)^{|\hE|} \right)  \\
	&= \bigO\left(|\gP_\gGG| \times \left(|\gN_\gGG| \times 2^{\left( 2 \times \left(|\hI|^2 \times 2^{|\bQ|^2}\right)\right)}\right)^{|\hE|} \right)  \\
	&= \bigO\left(|\gP_\gGG| \times |\gN_\gGG|^{|\hE|} \times 2^{\left( |\hE| \times |\hI|^2 \times 2^{|\bQ|^2}\right) } \right).
\end{align*}
\end{proof}

\end{document}